\newcommand{\tab}{\hspace{.25in}}
\newtheorem{theorem}{Theorem}[section]
\newtheorem{lemma}{Lemma}[section]
\newtheorem{proposition}{Proposition}[section]
\newtheorem{observation}{Observation}[section]
\newcommand\QED{\ifhmode\allowbreak\else\nobreak\fi
\quad\nobreak$\Box$\medbreak}
\newcommand{\proofstart}{\par\noindent \emph{Proof:} }
\newcommand{\proofend}{\QED\par}
\newenvironment{proof}{\proofstart}{\proofend}
\long\gdef\boxit#1{\vspace{5mm}\begingroup\vbox{\hrule\hbox{\vrule\kern5pt
\vbox{\kern5pt#1\kern5pt}\kern0pt\vrule}\hrule}\endgroup}
\begin{document}

\title{Speedup in the Traveling Repairman Problem with Constrained Time Windows}
\author{Greg N. Frederickson\thanks{Dept. of Computer Sciences, Purdue University, West Lafayette,
    IN 47907. {\tt gnf@cs.purdue.edu}} \and Barry Wittman\thanks{Dept. of Computer Science, Elizabethtown College, Elizabethtown,
    PA 17022. {\tt wittmanb@etown.edu}}
}

\maketitle



\begin{abstract}
A bicriteria approximation algorithm is presented for the unrooted traveling repairman problem, realizing increased profit in return for increased speedup of repairman motion.  The algorithm generalizes previous results from the case in which all time windows are the same length to the case in which their lengths can range between l and 2.  This analysis can extend to any range of time window lengths, following our earlier techniques \cite{Frederickson6}.
This relationship between repairman profit and speedup is applicable over a range of values that is dependent on the cost of putting the input in an especially desirable form, involving what are called ``trimmed windows.''  For time windows with lengths between 1 and 2, the range of values for speedup $s$ for which our analysis holds is $1 \leq s \leq 6$.  In this range, we establish an approximation ratio that is constant for any specific value of $s$.\\

\noindent\emph{Key words:} Approximation algorithms, time windows, traveling repairman, TSP
\end{abstract}

\section{Introduction}

In this paper we present an approximation algorithm for a practical time-sensitive routing problem, the unrooted traveling repairman problem with time windows.  The input to this problem is a speed at which a repairman can travel and a list of {\em service requests}.  Each service request is located at a node in a weighted metric graph, whose edges give the travel distance between nodes.  Each service request also has a specific time window during which it is valid for service.  The goal of the problem is to plan a route called a {\em service run} that, starting at any service request at any time, visits as many service requests as possible during their respective time windows.

Because the problem is NP-hard, our only hope for an efficient approach seems to be an approximation algorithm.
In the real world, a repairman may have some flexibility in choosing speed.  As a consequence, our earlier approximation algorithms \cite{Frederickson5} and this paper are parameterized by speedup $s$, so that we can characterize how much closer to optimal the repairman can do if he or she travels a factor of $s$ faster than a hypothetical repairman traveling along an optimal route at the baseline speed.  This type of approximation based on resource augmentation is well known in the scheduling community as shown by Bansal et al.~\cite{Bansal2}, Kalyanasundaram and Pruhs \cite{Kalyanasundaram}, and Phillips et al.~\cite{Phillips}.

The algorithms in this paper build on our earlier work \cite{Frederickson3, Frederickson6}, in which we introduced the first polynomial-time algorithms that give constant approximations to the traveling repairman problem when all the time windows are the same length.  As a counterpoint to the repairman problem, we also introduced the speeding deliveryman problem in \cite{Frederickson3, Frederickson6}, with an alternative optimization paradigm, namely speedup.  The input to the speeding deliveryman problem is the same as the input to the traveling repairman problem, but the goal is to find the minimum speed necessary to visit {\em all} service requests during their time windows and thus collect all profit.  In \cite{Frederickson6} we also gave constant-factor polynomial-time approximation algorithms for both problems when the time windows have lengths in some fixed range.

In both the repairman and deliveryman problems, our algorithms \cite{Frederickson3, Frederickson6} rely on {\em trimming} windows so that the resulting time windows are pairwise either identical or non-overlapping .  We trim time windows by repeatedly making divisions in time after a fixed amount of time has passed, starting at a specified time.  We define a {\em period} to be the time interval that starts at a particular division and continues up to the next division.  When time windows are unit length, we choose a period length of .5 time units.  Because we define periods so that no window starts on a period boundary, each time window will completely overlap exactly one period and partially overlap its two neighboring periods.  Trimming then removes those parts of each window that fall outside of the completely overlapped period.

In this simpler case where time windows all have the same length, the penalty for trimming the repairman is a reduction by a factor of $1/3$ in the number of requests serviced, 
and the penalty for the deliveryman is an increase by a factor of 4 in the speed needed to service all requests.
In \cite{Frederickson5}, we showed that, for unit time windows, a spectrum of performance is possible between these two extremes.
For some speedup $s$ greater than 1 but less than 4, we showed how to achieve an increase in the number of serviced requests,
proportional in some sense to $s$.  The approximation is also a function of graph property $\gamma$, where $\gamma = 1$ for a tree and $\gamma$ is no more than $2 + \epsilon$ for a metric graph, for any constant $\epsilon > 0$.  A more complete explanation of $\gamma$ is given in Sect.~\ref{section:trimming}.

In this paper, we extend our algorithms and analysis to the more challenging case in which time windows have lengths in some fixed range, specifically between 1 and 2.
We present an algorithm that finds approximations parameterized by speedup $s$ and property $\gamma$.
To prove these approximation bounds, our analysis establishes and takes advantage of the \textbf{existence} of an ensemble of runs that move backward and forward along the path of an (unknown) optimal run, similar to our work in \cite{Frederickson5}.  These runs are analyzed based on several different starting points for trimming.  To handle windows of different lengths (i.e., between 1 and 2),
we orchestrate several complementary trimming schemes, run our approximation algorithm on each combination,
and choose the best result.

On the surface, the approach we use to orchestrate trimming schemes is similar to our approach in \cite{Frederickson6}, which extended our earlier approximation algorithms from \cite{Frederickson3} to achieve a constant approximation on time windows whose lengths were between 1 and 2.  However, the similarity of the algorithms belies the fundamental difference in the analysis, whose complexity increases by at least an order of magnitude in the process of uniting speedup with non-uniform time windows.  The key to our algorithm remains using a different period length for each trimming scheme, with each subsequent scheme using a progressively longer period length.  Intuitively, by selecting the most profitable run found in any scheme, the algorithm adapts to different distributions of window sizes.  If most of the windows are short, a scheme of trimming to shorter lengths will be effective.  If most of the windows are long, a scheme of trimming to longer lengths will be effective.  Because the output of each trimming scheme is a set of trimmed windows of equal length, our speedup algorithms from \cite{Frederickson5} can then be applied directly.  As with the case of no speedup, we bound the approximation guarantee of our algorithm by  accounting for a variety of distributions of windows, but the tool needed to bound each distribution is now a considerably richer set of hypothetical runs.

The major contributions of this paper are two additional techniques needed to extend the analysis for speedup on unit-time windows to windows with non-uniform length.  The first technique is a significantly more complex design of ensembles to achieve good coverage, using a greater variety of runs, some of which have longer repeating patterns.  Once we select an appropriate ensemble, we use a symbolic description of the coverage of the runs in the ensemble to demonstrate good coverage for all speedups in the range of speedups under consideration.  The second technique is an approach for designing and coordinating together the different bounds of approximation as a function of speedup for different window lengths.  Using averaging arguments, we will show that any convex combination of the approximation guarantees for each trimming scheme is a lower bound on the profit of the best run produced by our algorithm.  For each range of speedups in question, we determine the best choices of weightings for a convex combination of the approximation bounds we have found.  By using the best convex combination of bounds from each scheme, we guarantee a good bound of approximation.  The details of these techniques are given for the case when window size is between 1 and 2, but other ranges of window size can be accommodated in a similar way.

As a result, we can still produce polynomial-time approximation algorithms with constant-factor approximations for a given $s$ over a significant speedup range.
Our process of combining together different approximation bounds, as a function of the speedup $s$, gives a final result in Table~\ref{table:speedupw12rats} that is more involved than our results in \cite{Frederickson5}.  The ratio has more piecewise ranges and its inverse is primarily nonlinear, even though the inverse of the ratio in each range is fairly close to a linear function.  Note that approximation ratios are typically defined to be at least 1, and so these approximation ratios will give the reciprocal of the fraction of profit collected at a given speedup.  For ease of presentation, most of the analysis in this paper will instead be in terms of the fraction of profit collected.

\medskip

\begin{table}[!hbt]
\begin{center}
\begin{tabular}{cc}
\underline{Upper Bound on Approximation Ratio} & \underline{Speedup} \medskip \\
$\begin{array}{c}
219\gamma\: /(26s+26)\medskip \\
\gamma\: (28s^2 + 24s + 12)/(5s^3 + 6s^2) \medskip \\
\gamma\: (-4s^3 + 40s^2 -12s + 8)/(s^4 - 2s^3 + 11s^2) \medskip \\
\gamma\: (68s^3 - 172s^2 - 140s - 92)/(11s^4 -21s^3 - 50s^2) \medskip \\
\gamma\: (292s^3-1636s^2+2672s-1472)/(39s^4-183s^3+180s^2) \medskip \\
\gamma\: (12s^2+8s+16)/(s^3+6s^2) \medskip \\
\gamma\: (-s + 16)/(s+4) \medskip \\
\gamma\: (3s - 26) /(s - 14)
\end{array}$ &
$\begin{array}{crclc}
1& \leq &s&  \leq &2 \medskip \\
2& \leq &s&  \leq &{7 \over 3} \medskip \\
{7 \over 3}& \leq &s&  \leq &{17 \over 7} \medskip \\
{17 \over 7}& \leq &s&  \leq &{5 \over 2} \medskip \\
{5 \over 2}& \leq &s&  \leq &3 \medskip \\
3& \leq &s& \leq &4 \medskip \\
4& \leq &s& \leq &5 \medskip \\
5& \leq &s& \leq &6
\end{array}$
\end{tabular}\\
\caption{Approximation ratios for speedup $s$ when time window lengths are between 1 and 2.}
\label{table:speedupw12rats}
\end{center}
\end{table}

Our results are recent developments in time-sensitive routing problems, which have received a lot of attention from the algorithms community in the last decade.  As with our particular problem, these problems typically identify the locations to be visited and the cost of traveling between them as the nodes and edges, respectively, of a weighted graph.  For example, the orienteering problem considered by Arkin et al.~\cite{Arkin}, Bansal et al.~\cite{Bansal}, Blum et al.~\cite{Blum3}, Chekuri et al.~\cite{Chekuri2}, and Chen and Har-Peled \cite{Chen} seeks to find a path that visits as many nodes as possible before a global time deadline.  The deadline traveling salesman problem which was also considered by Bansal et al.~\cite{Bansal} generalizes this problem further by allowing each location to have its own deadline.  Our traveling repairman problem can be viewed as a further generalization from a deadline to a time window.

A great deal of work by Bansal et al.~\cite{Bansal}, Bar-Yehuda et al.~\cite{Bar-Yehuda}, Chekuri and Kumar \cite{Chekuri}, Karuno et al.~\cite{Karuno3}, Tsitsiklis \cite{Tsitsiklis}, and the authors \cite{Frederickson6} has been done on the traveling repairman problem, although much of the preceding literature, including that from Bansal et al.~\cite{Bansal} and  Bar-Yehuda et al.~\cite{Bar-Yehuda}, considers the rooted version of the problem, in which the repairman starts at a specific location at a specific time. 

For general time windows in the rooted problem, an $O(\log^2 n)$-approximation is given by Bansal et al.~\cite{Bansal}.  An $O(\log L)$-approximation is given by Chekuri and Korula \cite{Chekuri3}, for the case that all time window start and end times are integers, where $L$ is the length of the longest time window.  In contrast, a constant approximation is given by Chekuri and Kumar \cite{Chekuri},
but only when there are a constant number of different time windows.  Our earlier work \cite{Frederickson6} and work by Chekuri and Korula \cite{Chekuri3} give $O(\log D)$-approximations to the unrooted problem with general time windows, where $D$ is the ratio of the length of largest time window to the length of the smallest.  Polylogarithmic approximation algorithms for the directed traveling salesman problem with time windows have also been given by Chekuri et al.~\cite{Chekuri2} and Nagarajan and Ravi \cite{Nagarajan2}.

\section{Trimming Time Windows and the Associated Loss}
\label{section:trimming}

In our earlier work \cite{Frederickson6, Frederickson5}, we trimmed time windows of unit length by first making divisions in time every $.5$ time units, starting at time 0.  We generalize the process for time windows with different lengths by instead making divisions every $\alpha$ time units, where $\alpha = .5$, $.75$, or $1$ for the range of time window lengths $[1,2)$.
Define a {\em period} to be the time interval that starts at a particular division and continues up to the next division.  In the case of unit time windows, each time window will completely overlap exactly one period and partially overlap its two neighboring periods, because we allow no window to start on a period boundary.  In the case of longer time windows, there will be different patterns of overlapping.  If a window completely overlaps with only one period, trimming will remove those parts of each window that fall outside of the completely overlapped period.  If a window completely overlaps with more than one period, one trimming scheme will remove all those parts of each such window that fall outside of the first completely overlapped period.  Another separate trimming scheme will remove all those parts of each such window that fall outside of the second completely overlapped period.  For long period sizes, some time windows may not completely overlap {\em any} full periods and will vanish in the process of trimming.  In each case, because periods do not overlap and a time window is trimmed to at most one period, trimmed time windows will be pairwise either identical or non-overlapping.  Our repairman algorithm in \cite{Frederickson6} identifies a variety of good paths inside each separate period and then uses dynamic programming to select and paste these paths together into a variety of longer good paths and ultimately a good service run for the whole problem.

To describe our results for both trees and general metric graphs, we use the graph property $\gamma$, where $\gamma = 1$ for a tree, derived in our earlier paper \cite{Frederickson6}, and $\gamma \leq 2 + \epsilon$ for a metric graph, derived by Chekuri et al.~\cite{Chekuri2}, for any constant $\epsilon > 0$.  To describe the running time for these repairman algorithms we use $\Gamma(n)$, where $\Gamma(n)$ is $O(n^4)$ for a tree and $O(n^{O(1/\epsilon^2)})$ for a metric graph.  The value of $\gamma$ and running time of $\Gamma(n)$ are dependent on the approximation bounds for finding maximum profit paths within a specific period on a specific class of graph.  Although the available results only give $\gamma$ values and $\Gamma(n)$ running times for trees and metric graphs, other classes of graphs, such as outerplanar or Euclidean graphs, may have intermediate values of $\gamma$ and $\Gamma(n)$.  

In \cite{Frederickson6} we showed that, for unit time windows with no speedup, the reduction due to trimming still allows us to visit at least $1/(3\gamma)$ of the pre-trimming optimal and, with a speedup of 4, we can visit at least $1/\gamma$ of the pre-trimming optimal.  In \cite{Frederickson5} we filled in the gap between these two extremes with an $6\gamma/(s + 1)$-approximation for speedup in the range $1 \leq s \leq 2$ and a $4\gamma/s$-approximation for speedup in the range $2 \leq s \leq 4$.  We continue to demonstrate the flexibility of trimming in the realm of speedup by extending these results to time windows with different lengths.

\section{The Ensemble Approach for Analyzing Performance}
\label{section:ensemble}

Given an instance of the repairman problem on unit time windows, our previous work \cite{Frederickson5} presented algorithms for rational speedup $s = q/r$ in the range $1 \leq s \leq 4$ that take $O(\min\{r, m\}\Gamma(n))$ time, where $m$ is the number of distinct periods.  Since the approximation function is smooth and continuous, those algorithms work for any real speedup $s$, in the same range, in  $O(m\Gamma(n))$ time.  

The analysis of these algorithms uses a number of different service runs on trimmed time windows that are based on moving backward and forward along an optimal tour $R^*$.  We rely on averaging over a suitable ensemble of runs to establish that some run $R$ on trimmed time windows does well.  Because we will build on this technique and, indeed, use some of the same runs from our earlier work \cite{Frederickson5} in ensembles for variable length windows, we will review our notation for describing these runs.

We define unit speed to be some reference speed.  Traveling with $s = 1$ is traveling at unit speed.  Our results will hold whenever unit speed is no faster than the slowest speed at which an optimal service run is able to visit all locations during their time windows.  Intuitively, this restriction means that we are focusing on those cases when unit speed is low enough that speeding up our service runs will actually give some benefit.
Let $R^*$ be an optimal service run at unit speed originally starting at time 0.

In our analysis
we use the term {\em racing} to describe movement, forwards and backwards, along $R^*$ at a speedup of $s$ times unit speed.  Note that our analysis of run coverage is described on a period length of $.5$ even though, in Sect.~\ref{section:performance of speedupw12}, we will apply this analysis to our algorithm, which uses three different period lengths.

Define service run $A$ as follows.  Start run $A$ at time $t = 0$ at the location that $R^*$ has at time $t = -0.5$.  Then run $A$ follows a pattern of racing forward along $R^*$ for $1$ period, racing backward along $R^*$ for $1 - 1/s$ periods, and then racing forward along $R^*$ for $1/s$ periods.  Note that the pattern of movement for run $A$ repeats every 2 periods.

Considering the problem in which windows have length between 1 and 2, let $\lambda$ be an upper bound on the number of periods fully contained in a window. Define $A^R$, the ``\emph{reverse}'' of $A$, as follows.  When run on a set of requests whose windows each fully contain at most $\lambda$ periods, run $A^R$ starts at time $t = 0$ at the location that $R^*$ has at time $t = \lambda/2$.  Then run $A^R$ follows a repeating pattern of racing forward along $R^*$ for $1/s$ periods, racing backward along $R^*$ for $1 - 1/s$ periods, and then racing forward along $R^*$ for $1$ period.  Figure \ref{figure:s=2} shows examples of runs $A$ and $A^R$ with a speedup of 2 when $\lambda = 1$.

\begin{figure}[!hbt]
\centering
\begin{xy}
\xyimport(397, 155){\includegraphics[width=.75\textwidth]{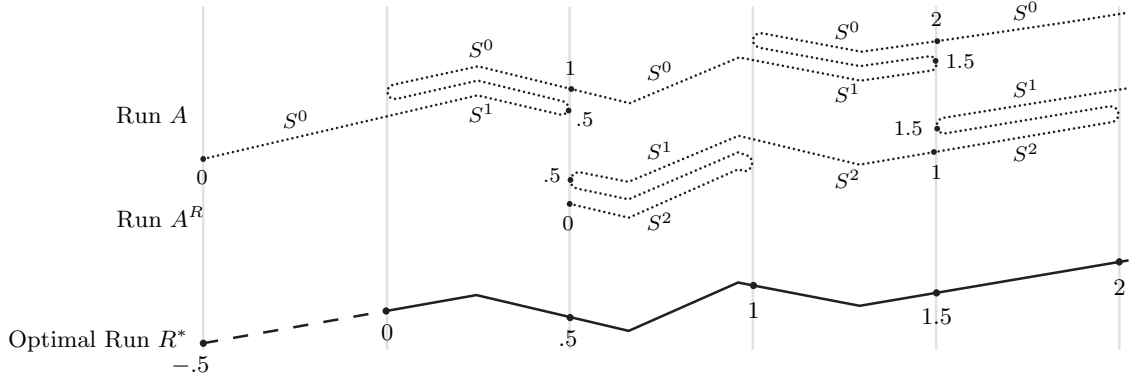}}
(2,60)*!R\txt\footnotesize{Run $A^R$};
(-5,4)*!R\txt\footnotesize{Optimal Run $R^*$};
(-5,106)*!R\txt\footnotesize{Run $A$};
(41,104)*\txt\scriptsize{$S^0$};
(120,107)*\txt\scriptsize{$S^1$};
(120,136)*\txt\scriptsize{$S^0$};
(196,126)*\txt\scriptsize{$S^0$};
(276,116)*\txt\scriptsize{$S^1$};
(276,144)*\txt\scriptsize{$S^0$};
(352,153)*\txt\scriptsize{$S^0$};
(196,58)*\txt\scriptsize{$S^2$};
(196,90)*\txt\scriptsize{$S^1$};
(276,77)*\txt\scriptsize{$S^2$};
(352,89)*\txt\scriptsize{$S^2$};
(352,118)*\txt\scriptsize{$S^1$};
(80,8)*\txt\footnotesize{$0$};
(157,5)*\txt\footnotesize{$.5$};
(236,18)*\txt\footnotesize{$1$};
(314,15)*\txt\footnotesize{$1.5$};
(-4,-7)*\txt\footnotesize{$-.5$};
(392,28)*\txt\footnotesize{$2$};
(1,78)*\txt\scriptsize{$0$};
(164,104)*\txt\scriptsize{$.5$};
(158,127)*\txt\scriptsize{$1$};
(324,130)*\txt\scriptsize{$1.5$};
(314,149)*\txt\scriptsize{$2$};
(157,57)*\txt\scriptsize{$0$};
(150,79)*\txt\scriptsize{$.5$};
(314,80)*\txt\scriptsize{$1$};
(302,100)*\txt\scriptsize{$1.5$};
\end{xy}
\caption{Examples of runs $A$ and $A^R$ with a speedup of 2 when $\lambda = 1$ based on an optimal run.  Times are labeled on the optimal run as well as runs $A$ and $A^R$.  Segments of each run are also designated $S^0$, $S^1$, and $S^2$ depending on which subset of runs they add coverage to.  This subset naming scheme will be fully explained in Sect.~\ref{section:lengths12}.}
\label{figure:s=2}
\end{figure}

For the purposes of analyzing our run $A$, number the periods $0$, $1$, $2$, and so on by the integer multiples of $.5$ that give their starting times.  Run $A$ repeats every 2 periods, and its coverage varies depending on whether the period number is even or odd.  To balance this asymmetry we define $\vec{A}$ and $\vec{A}^R$, shifted versions of $A$ and $A^R$, respectively.  Run $\vec{A}$ follows the same pattern as $A$ but starts the pattern at time $.5$ at the location $R^*$ has at time 0.  Run $\vec{A}^R$ follows the same pattern as $A^R$ but starts the pattern at time $.5$ at the location $R^*$ has at time $\lambda/2 + .5$.

Our analysis will require several versions of $A$ that have different starting points.  To simplify notation, for any given rational speedup $s = q/r$, let a {\em hop} be the amount of distance traveled in $1/(2r)$ time at unit speed.  Let $A_\Delta$ be the run $A$ moved forward $\Delta$ hops and let run $A_\Delta^R$ be the run $A^R$ moved backward $\Delta$ hops. Run $A_\Delta$ follows the same pattern of movement as run $A$ but starts at time $t = 0$ at the location that $R^*$ has at  time $t = -0.5 + \Delta/(2r)$.  Run $A_\Delta^R$, the reverse of $A_\Delta$, follows the same pattern of movement as $A^R$ but starts at $t = 0$ at the location that $R^*$ has at $t = \lambda/2 - \Delta/(2r)$.  These reversed and shifted versions of $A$ were required to establish the performance of our algorithms in \cite{Frederickson5} on unit length windows.  Although run $A$ with its 2-period repeating pattern is sufficient for those cases, we will introduce additional runs which repeat after 3 or 4 periods in order to handle windows of longer length.

If a service request $p$ is serviced by a run $R$ during the period that the time window of $p$ has been trimmed into, we say that $R$ {\em covers} $p$.
Let $S$ be a subset of service requests.  Define the {\em coverage} of $S$ by a run $R$, written $cover_R(S)$, to be the number of requests in $S$ covered by $R$ divided by the number of requests in $S$.  Define the coverage of $S$ by a set $U$ of runs, written $cover_U(S)$, to be the average of $cover_R(S)$ for every run $R \in U$.

We will still rely on our Average Coverage Proposition from our earlier work \cite{Frederickson5}:

\begin{proposition}[Average Coverage] \label{proposition:averaging}  Let $\{S_1, S_2, S_3, ... S_a\}$ be a collection of sets of service requests such that $\bigcup_i S_i$ gives all the service requests serviced by $R^*$ on untrimmed windows.  Let $U$ be a set of service runs.  If $\min_i\{cover_U(S_i)\} = \mu$, then there is at least one service run $\hat{R} \in U$ such that profit$(\hat{R}) \geq \mu \cdot$profit$( R^* )$.
\end{proposition}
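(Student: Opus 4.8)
The plan is to use a simple averaging (pigeonhole) argument: I will show that the \emph{average} profit over the runs in $U$ is already at least $\mu\cdot$profit$(R^*)$, from which the existence of a single good run $\hat R$ follows at once, since the maximum of a finite collection of numbers is at least its average. The averaging machinery is exactly what the statement is set up for, so the work is mostly in getting the bookkeeping right.

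First I would pin down the accounting. Treating each request as carrying unit profit, consistent with the count-based definition of $cover_R$, and regarding $\{S_1,\dots,S_a\}$ as a partition of the requests serviced by $R^*$ (the union hypothesis lets us refine to a disjoint cover if needed), I obtain the exact decompositions profit$(R^*)=\sum_i |S_i|$ and, for every run $R\in U$, profit$(R)=\sum_i(\text{number of requests of } S_i \text{ covered by } R)=\sum_i cover_R(S_i)\,|S_i|$, reading off the second identity directly from the definition of $cover_R$.

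Next I would average this identity over the runs of $U$ and interchange the two finite summations:
$$\frac{1}{|U|}\sum_{R\in U}\text{profit}(R)=\sum_i\Bigl(\frac{1}{|U|}\sum_{R\in U}cover_R(S_i)\Bigr)|S_i|=\sum_i cover_U(S_i)\,|S_i|.$$
Invoking the hypothesis $cover_U(S_i)\geq\mu$ for every $i$ then yields $\sum_i cover_U(S_i)\,|S_i|\geq\mu\sum_i|S_i|=\mu\cdot\text{profit}(R^*)$. Thus the mean profit of the runs in $U$ is at least $\mu\cdot$profit$(R^*)$, so some $\hat R\in U$ must attain at least this mean, giving profit$(\hat R)\geq\mu\cdot$profit$(R^*)$.

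The step requiring the most care is the profit decomposition, which is exact only when each covered request is counted once; I would therefore make sure the $S_i$ are treated as disjoint (or that any overlaps are accounted for so the coverage-weighted sum does not overcount), since with genuinely overlapping sets the identity degrades into an inequality pointing the unhelpful way. Everything else—linearity of the average, the interchange of summation, and the max-exceeds-average principle—is routine.
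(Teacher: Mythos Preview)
Your argument is correct and is exactly the averaging argument the paper has in mind; indeed, in this paper the proposition is only stated (with proof deferred to \cite{Frederickson5}) together with the informal contrapositive ``otherwise, the average coverage of all subsets would be worse than the coverage of the worst covered subset, which is a contradiction,'' and your write-up is a clean formalization of precisely that. Your caveat about overlapping $S_i$ is well taken: the identity profit$(R)=\sum_i cover_R(S_i)\,|S_i|$ really does require disjointness, and in the paper's applications the $S^i_j$ are in fact a partition, so your instinct to reduce to the disjoint case is the right move.
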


The Average Coverage Proposition formalizes the following intuition.  Let a group of service runs achieve some coverage over a set of requests.  Let us also say that we have divided those requests into many different subsets, some of which may overlap, but the union of all the subsets is the original set of requests.  If we take the subset of requests with the worst average coverage, some service run in the group covers a fraction of total requests no smaller than that worst average coverage.  Otherwise, the average coverage of all subsets would be worse than the coverage of the worst covered subset, which is a contradiction.

Given a way of dividing requests into subsets, we wish to prove that some set of service runs achieves some lower bound on average coverage.  The following section will describe the algorithm we will use to find a service run and the analytical techniques we will use to establish a lower bound on its performance.  This analysis will depend on carefully showing an average coverage for various subsets of requests defined with respect to periods induced by trimming.

\section{Algorithm for Windows with Lengths between 1 and 2}
\label{section:lengths12}

In \cite{Frederickson6}, we describe algorithms for $s = 1$ that achieve constant approximations when window sizes are not necessarily uniform
but are close to being uniform.
We extend that approach for our speedup problem,
but specifically for windows whose lengths differ by at most a factor of 2.  From \cite{Frederickson5}, we give an algorithm called SPEEDUP that, for unit windows, finds a run of approximately optimal profit at speedup $s$.
Our approach is to modify SPEEDUP  and run it with three different sizes of period $\alpha$: .5, .75, and 1.

For each period size, we will consider multiple starting points for a set of periods, each spaced .25 apart.  
We modify SPEEDUP appropriately so that no window starts at the beginning of a period, for periods of size .5, .75, or 1.  This modified algorithm is called SPEEDUPW12 and is specified below.
In this algorithm, sets of periods whose $\alpha$ is .5, .75, or 1 can have 2, 3, or 4 unique starting positions, respectively.  
Depending on a given period size $\alpha$ and starting point, a window will partially fill 2 periods 
and fully fill 0, 1, 2, or 3 periods between the 2 partial periods.  For $\ell = 1, 2, 3$ and a specified value of $\alpha$, let $W_\ell$ be the set of windows
that completely fill exactly $\ell$ subintervals and partially overlap with two more of them.

\begin{table}[!htb]
\begin{tabular}{l}
\toprule
\textbf{SPEEDUPW12}\\
\midrule
\tab {\it PHASE 1:}\\
\tab Set $\alpha$ to .5 and identify windows for sets $W_1$, $W_2$, and $W_3$.\\
\tab For $i$ from $0$ to $1$,\\
\tab \tab Set the starting point for the periods to $i/4$.\\
\tab \tab For $j$ from $1$ to 2,\\
\tab \tab \tab For $k$ from $1$ to 3,\\
\tab \tab \tab \tab Trim each window in $W_1$ to its $1^{st}$ subinterval.\\
\tab \tab \tab \tab Trim each window in $W_2$ to its $j^{th}$ subinterval.\\
\tab \tab \tab \tab Trim each window in $W_3$ to its $k^{th}$ subinterval.\\
\tab \tab \tab \tab Run SPEEDUP and retain the best result so far.\\
\tab {\it PHASE 2:}\\
\tab Reset $\alpha$ to .75 and then identify windows for $W_1$ and $W_2$.\\
\tab For $i$ from $0$ to $2$,\\
\tab \tab Set the starting point for the periods to $i/4$.\\
\tab \tab For $j$ from $1$ to 2,\\
\tab \tab \tab Trim each window in $W_1$ to its $1^{st}$ subinterval.\\
\tab \tab \tab Trim each window in $W_2$ to its $j^{th}$ subinterval.\\
\tab \tab \tab Run SPEEDUP and retain the best result so far.\\
\tab {\it PHASE 3:}\\
\tab Reset $\alpha$ to 1 and then identify windows for $W_1$.\\
\tab For $i$ from $0$ to $3$,\\
\tab \tab Set the starting point for the periods to $i/4$.\\
\tab \tab Trim each window in $W_1$ to its $1^{st}$ subinterval.\\
\tab \tab Run SPEEDUP and retain the best result so far.\\
\bottomrule
\end{tabular}
\end{table}

When trimming, we choose from several choices of which single full subinterval to keep for each window.  For example, for periods of length .5 and for windows in $W_3$ which would have three full subintervals, the choices for trimming will be to trim the window down to either the first, second, or third full subinterval.  Combining these choices with the two choices associated with windows in $W_2$ and the single choice in windows in $W_1$ yields 6 trimmings.

The performance for speedup for windows in $W_1$ in the range $1 \leq s \leq 4$ is the same as the unit time window results given by our work in \cite{Frederickson5}.  In Sect.~\ref{section:performance W_2}, we give the performance for speedup for windows in $W_2$ in the range $1 \leq s \leq 5$.  In Sect.~\ref{section:performance W_3}, we give the performance for speedup for windows in $W_3$ in the range $1 \leq s \leq 6$.  Note that the SPEEDUP subroutine works for any real number $1 \leq s \leq 6$; however, our analysis will assume that $s$ is a rational number such that $s = q/r$.  In the case that $s$ is irrational, our analysis holds in the limit because the functions we find that bound performance in terms of $s$ are piecewise smooth and continuous.  It is worth repeating that our analysis uses only a period size of .5 but can still bound the performance of our algorithm with its three different period lengths by using careful accounting of subset coverage.

When dealing with windows of unit length in a previous paper \cite{Frederickson5}, we defined a partition of requests into three sets, based on which period a request was trimmed into versus which period an optimal run $R^*$ serviced the request in. Set $T$ consists of requests serviced by $R^*$ in the same period, set $E$ consists of requests serviced by $R^*$ in the preceding period, and set $L$ consists of requests serviced by $R^*$ in the following period.  For windows of length between 1 and 2, we need to extend this approach.  We will use a superscripted $S$ to designate that a request that was serviced by $R^*$ in either the first, second, third, fourth, or fifth periods with which a request overlaps.
For requests in $W_1$, the sets $L$, $T$, and $E$ will be renamed $S^0$, $S^1$, and $S^2$, respectively.  For requests in $W_2$ and $W_3$, we will go further and use designations $S^0$ through $S^3$ and $S^0$ through $S^4$, respectively. Set $S^3$ consists of requests serviced by $R^*$ two periods before the period into which those requests were trimmed, and $S^4$ consists of those requests serviced three periods earlier.
In the same earlier work \cite{Frederickson5}, we further partitioned $L$, $T$, and $E$ into $L_j$, $T_j$, and $E_j$ for $j = 1, 2, \ldots, r$.
In a similar way, we will partition sets $S^0$, $S^1$, $S^2$, $S^3$, and $S^4$ into $r$ equal-length divisions, subsets $S^0_j$, $S^1_j$, $S^2_j$, $S^3_j$, and $S^4_j$, for any given $j$, $j = 1, 2, \ldots, r$.

Let $[w, w+k)$ be any time window, where $1 \leq k \leq 2$.  Let $\omega$ be the smallest integer multiple of $1/(2r)$ that is greater than $w$.  We designate subintervals $[w,\omega)$, $[\omega, \omega + 1/(2r))$, $[\omega + 1/(2r), \omega + 2/(2r))$, $\ldots$, $[\omega + (4r - 1)/(2r), w + 2)$ by $w_0$, $w_1$, $w_2$, $\ldots$, $w_{4r}$.  For windows of length between 1 and 2 with a given choice of period starting times, all windows fall into set $W_1$, $W_2$, or $W_3$.  In our analysis, there is always an implied factor of $\gamma$ that accounts for the difference between the approximation on a tree and on a metric graph.

We now define a procedure called CREATE-TABLE-$\lambda$ that describes the process of determining coverage for a particular speedup $s$ for a particular run moved forward $\Delta$ hops.  This procedure is a generalization of our CREATE-TABLE procedure from \cite{Frederickson5} to $\lambda \geq 1$.  (Recall that $\lambda$ is an upper bound on the number of periods fully contained in a window.) Note that CREATE-TABLE-$\lambda$ is not an algorithm that is run in the process of finding an approximation to a repairman problem with speedup.  Rather, it provides a template that can be used to produce the tables used in analyzing the performance of such approximations.  So that the treatment here is self-contained, we repeat much of our discussion of table construction from \cite{Frederickson5}, modifying it as necessary so that it can also handle the additional types of runs that we will introduce.

Before CREATE-TABLE-$\lambda$ can be completely defined, it is necessary to explain the pattern of coverage generated by a run.  For the kind of runs we have seen so far, type $A$ runs, this pattern takes one of two forms.  Let $s$ be a rational number such that $s = q/r$.  Type $A$ runs repeat every two periods and thus can be represented with a pattern of coverage that uses a 1 to signify a subset covered every period and a 1/2 to signify a subset covered every other period.

Observe the movement of type $A$ runs, noting that, during its first period, such a run moves forward the same distance that an optimal run moves during $q$ subintervals.  During its second period of time, it moves backward the same distance than an optimal run moves during $q - r$ subintervals and then forward the same distance than an optimal run moves during $r$ subintervals. Then, the pattern repeats.  When $s < 2$, run $A$, during the first period in its pattern, covers $q$ successive subsets as it moves forward, while in its second period covers $r$ subsets as it moves forward.  Note that those subsets covered as $A$ moves backwards add nothing additional to the coverage.  Thus, this pattern of coverage is represented as $r$ repetitions of 1 and $q - r$ repetitions of 1/2.  When $s \geq 2$, run $A$, during the first period in its pattern, covers $q$ subsets subintervals, while in its second period covers $q - r$ subsets backwards but no new subsets forward.  This pattern of coverage is represented as $q - r$ repetitions of 1 and $r$ repetitions of 1/2. 

\begin{table}[!hbt]
\begin{tabular}{l}
\smallskip \\
\toprule
\textbf{CREATE-TABLE-$\lambda$(hops $\Delta$)} \\
\midrule
\tab Let the first element of the coverage pattern be indexed at 0.\\
\tab Number the subsets 0 through $r(\lambda + 1)$.\\
\tab Define function $\mathcal{C}$ based on the coverage pattern, such that:\\
\tab \tab $\mathcal{C}(i) = \left\{ \begin{array}{ll}
\sigma&\mbox{if term }i\mbox{ of the pattern is of value }\sigma\\ 
0&\mbox{otherwise}
\end{array} \right.$ \\
\tab Define function ${\mathcal F}$ on integers $i$, where $0 \leq i \leq r(\lambda + 1)$:\\
\tab \tab ${\mathcal F}(i) = \sum_{j = 0}^{r - 1} \mathcal{C}(i + j - \Delta)$.\\
\tab Define function ${\mathcal F}^R$ on the same domain:\\
\tab \tab ${\mathcal F}^R(i) = {\mathcal F}(r(\lambda + 1) - i)$.\\
\tab The final coverage function defined by the table is given by ${\mathcal F}(i) + {\mathcal F}^R(i)$.\\
\bottomrule
\end{tabular}
\end{table}

The values that $\mathcal{C}(i)$ can take on are dependent on the types of runs used.  For type $A$ runs, $\mathcal{C}(i)$ can be 0, 1/2, or 1.  Runs introduced later will have a larger range of values, but it is always the case that $0 \leq \mathcal{C}(i) \leq 1$.  Note that the functions $\mathcal{F}$ and $\mathcal{F}^R$ given in CREATE-TABLE-$\lambda$ are piecewise linear functions with ranges dependent on the fundamental pattern of coverage.  Due to its construction, the combination ${\mathcal F}(i) + {\mathcal F}^R(i)$ is also a piecewise linear function and symmetrical.  Thus, only the range $0 \leq i \leq \lfloor r(\lambda + 1)/2 \rfloor$ need be listed in tables.\\

Although CREATE-TABLE-$\lambda$ gives a procedure for creating a table for a given speedup, we need tables expressed symbolically to prove coverage for a range of speedups.  Instead of using specific numbers, we can leave the basic patterns of subset coverage for a given style of run (such as type $A$ runs) with its shifted version in terms of $q$ and $r$.  By shifting this pattern $r$ times and summing the results together, we account for the different alignments a time window might have with respect to the various subintervals.  This sum is the function $\mathcal{F}$, which can be expressed as a piecewise linear function.  Function $\mathcal{F}^R$, which describes reversed runs, can be similarly described.  To combine the two functions symbolically, we sort the end points of the subset ranges from both descriptions together.  If, for the given range of speedups being considered, there are two end points which cannot be ordered, we subdivide the range of speeds so that, in each new speed range, the two end points in question can be ordered.  Once the end points of each subset range have been sorted, combining the descriptions from the normal and reversed functions of the runs is achieved by simply summing each range.  We give an example of this process in Sect.~\ref{subsection:1<s<2 W_2}.

\section{Speedup Performance for Windows in Set $W_2$}
\label{section:performance W_2}

Recall that $W_2$ is the set of windows that completely fill exactly two periods.
We will now explore the speedup-performance trade-off for windows in $W_2$ for all speedups $1 \leq s \leq 5$.  For set $W_2$, our analysis must consider subsets $w_0$ through $w_{3r}$.  Throughout our analysis, we will we assign a 1 for full coverage and a $1/2$ for half coverage of any subset.  When examining the subsets for a given range of speedup values, the values are symmetrical around $w_{3r/2}$ when $r$ is even and symmetrical after $w_{(3r - 1)/2}$ when $r$ is odd.  Thus, the tables and proofs we use will not list contributions for subset $w_i$ where $i > \lfloor 3r/2 \rfloor$,  since the contribution at $w_i$ in these higher ranges is the same as the corresponding contribution at $w_{3r - i}$, by symmetry.

\subsection{Speedup $1 \leq s \leq 2$ for Windows in Set $W_2$}
\label{subsection:1<s<2 W_2}

For the range $1 \leq s \leq 2$, we can represent any rational speedup $s$ in the form $s = (r + k)/r$ with integers $r \geq 1$ and $0 \leq k \leq r$.  For this analysis, we consider service runs $A$, $A^R$, $A_{r - k}$, $A_{r - k}^R$, $A_{2r - k}$, and $A_{2r - k}^R$, noting that $\lambda = 2$.  Similar to $W_1$ for $1 \leq s \leq 2$, run $A$ covers set $S^0$ well, run $A^R$ covers set $S^3$ well, and the remaining four runs plug the holes left in the spotty coverage of sets $S^1$ and $S^2$.

We will pass over the simpler case for $A$ runs and use $A_{r - k}$ runs to give an example of how we construct symbolic coverage tables.  For speedup $s$ where $1 \leq s \leq 2$ and $\lambda = 2$, type $A$ runs have a fundamental pattern of coverage of $r$ subsets covered every period followed by $q - r = k$ subsets covered every other period.  Adjusting for the offset of $\Delta = r - k$ and making $r$ shifted, this pattern yields the values for $\mathcal{F}(i)$ and $\mathcal{F}^R(i)$ given in Table \ref{table:F functions for A_r-k runs}.

\begin{table}[!hbt]
\begin{center}
\begin{tabular}{rll}
$\mathcal{F}(i) = $ & $\left\{ \begin{array}{l}
k + i \smallskip \\ 
{3 \over 2}r - {1\over 2}k - {1\over 2}i \smallskip \\ 
2r - {1\over 2}k - i \smallskip \\
r - {1\over 2}i
\end{array} \right.$ &
$\begin{array}{rcl}
0 \leq &i& \leq r - k \smallskip \\ 
r - k \leq &i& \leq r \smallskip \\
r \leq &i& \leq 2r - k \smallskip \\
2r - k \leq &i& \leq 2r
\end{array}$
\\ \\
$\mathcal{F}^R(i) = $ & $\left\{ \begin{array}{l}
{1\over 2}i - {1\over 2}r \smallskip \\ 
i - r - {1\over 2}k \smallskip \\ 
{1\over 2}i - {1\over 2}k \smallskip \\
3r + k - i
\end{array} \right.$ &
$\begin{array}{rcl}
r \leq &i& \leq r + k \smallskip \\ 
r + k \leq &i& \leq 2r \smallskip \\
2r \leq &i& \leq 2r + k \smallskip \\
2r + k \leq &i& \leq 3r
\end{array}$
\end{tabular}
\end{center}
\caption{Separate coverage functions for $A_{r-k}$ and $A_{r-k}^R$ in $W_2$ when $1 \leq s \leq 2$.}
\label{table:F functions for A_r-k runs} 
\end{table}

Because $\mathcal{F}(i) + \mathcal{F}^R(i)$ is symmetric about $i = 3r/2$ if $r$ is even and after $i = (3r - 1)/2$ if $r$ is odd, we are only interested in the range $0 \leq i \leq \lfloor 3r/2 \rfloor$.  In this range, the sub-ranges $r \leq i \leq 2r - k$ and $2r - k \leq i \leq \lfloor 3r/2 \rfloor$ for $\mathcal{F}$ overlap with the sub-ranges $r \leq i \leq r + k$ and $r + k \leq i \leq \lfloor 3r/2 \rfloor$ for $\mathcal{F}^R$.  When $k \leq r - k$, then $r - k \leq r + k \leq \lfloor 3r/2 \rfloor$.  In that case, for $r + k \leq i \leq \lfloor 3r/2 \rfloor$, $\mathcal{F}(i) + \mathcal{F}^R(i) = (2r - k/2 - i) + (i - r - k/2) = r - k$, as in the last interval of the middle set of contributions in Table \ref{table:lowcontributions W_2}. When $k \geq r - k$, then $r - k \leq 2r - k \leq \lfloor 3r/2 \rfloor$.  In that case, for $2r - k \leq i \leq \lfloor 3r/2 \rfloor$, $\mathcal{F}(i) + \mathcal{F}^R(i) = (r - i/2) + (i/2 - r/2) = r/2$, as in the last interval of the middle set of contributions in Table  \ref{table:highcontributions W_2}.

Similar analysis for runs $A$ and $A^R$ and runs $A_{2r - k}$ and $A_{2r - k}^R$ produce the rest of Tables \ref{table:lowcontributions W_2} and \ref{table:highcontributions W_2}.  The combined coverages of runs $A$, $A^R$, $A_{r-k}$, $A_{r-k}^R$, $A_{2r - k}$, $A_{2r - k}^R$, and all of their respective shifted versions are all given in Table \ref{table:lowcontributions W_2} when $k \leq r - k$ and in Table \ref{table:highcontributions W_2} when $k \geq r - k$.

\begin{table}[!hbt]
\begin{center}
\begin{tabular}{rcll}
\begin{tabular}{r}
Combined contributions\\
for $A$ and $A^R$ 
\end{tabular}
& = & 
$\left\{ \begin{array}{l}
r - {1 \over 2}i \smallskip \\ 
r + {1 \over 2}k - i \smallskip \\ 
{1 \over 2}r + {1 \over 2}k  - {1 \over 2}i \smallskip \\
0
\end{array} \right.$ &
$\begin{array}{rcl}
0 \leq &i& \leq k \smallskip \\ 
k \leq &i& \leq r \smallskip \\
r \leq &i& \leq r + k \smallskip \\
r + k \leq &i& \leq \left\lfloor {3r \over 2} \right\rfloor
\end{array}$
\\ \\
\begin{tabular}{r}
Combined contributions\\
for $A_{r - k}$ and $A_{r - k}^R$
\end{tabular}
& = &
$\left\{ \begin{array}{l}
k + i \smallskip \\
{3 \over 2}r - {1\over 2}k - {1 \over 2}i \smallskip \\
r - k
\end{array} \right.$ &
$\begin{array}{rcl}
0 \leq &i&  \leq r - k \smallskip \\
r - k \leq &i&  \leq r + k \smallskip \\
r + k \leq &i& \leq \left\lfloor {3r \over 2} \right\rfloor
\end{array}$
\\ \\
\begin{tabular}{r}
Combined contributions\\
for $A_{2r - k}$ and $A_{2r - k}^R$ 
\end{tabular}
& = &
$\left\{ \begin{array}{l}
{1 \over 2}i \smallskip \\ 
i - {1 \over 2}k \smallskip \\ 
2i - r + {1 \over 2}k \smallskip \\
{3 \over 2}i - {1\over 2}r + {1\over 2}k \smallskip \\
r + 2k
\end{array} \right.$ &
$\begin{array}{rcl}
0 \leq &i& \leq k \smallskip \\ 
k \leq &i& \leq r - k \smallskip \\
r - k \leq &i& \leq r \smallskip \\
r \leq &i& \leq r + k \smallskip \\
r + k \leq &i& \leq \left\lfloor {3r \over 2} \right\rfloor
\end{array}$
\end{tabular}\\
\caption{Contributions of runs for windows in $W_2$ when $1 \leq s \leq 2$ and $k \leq r - k$.}
\label{table:lowcontributions W_2}
\end{center}
\end{table}

\begin{table}[!hbt]
\begin{center}
\begin{tabular}{rcll}
\begin{tabular}{r}
Combined contributions\\
for $A$ and $A^R$
\end{tabular}
& = & 
$\left\{ \begin{array}{l}
r - {1 \over 2}i \smallskip \\
r + {1 \over 2}k - i \smallskip \\
{1 \over 2}r + {1\over 2}k - {1 \over 2}i \smallskip \\
k - {1 \over 2}r
\end{array} \right.$ &
$\begin{array}{rcl}
0 \leq &i& \leq k \smallskip \\
k \leq &i& \leq r \smallskip \\
r \leq &i& \leq 2r - k \smallskip \\
2r - k \leq &i& \leq  \left\lfloor {3r \over 2} \right\rfloor
\end{array}$
\\ \\
\begin{tabular}{r}
Combined contributions\\
for $A_{r - k}$ and $A_{r - k}^R$
\end{tabular}
& = &
$\left\{ \begin{array}{l}
k + i \smallskip \\
{3 \over 2}r - {1 \over 2}k - {1 \over 2}i \smallskip \\
{1 \over 2}r
\end{array} \right.$ &
$\begin{array}{rcl}
0 \leq &i&  \leq r - k \\
r - k \leq &i& \leq 2r - k\\
2r - k \leq &i& \leq  \left\lfloor {3r \over 2} \right\rfloor
\end{array}$
\\ \\
\begin{tabular}{r}
Combined contributions\\
for $A_{2r - k}$ and $A_{2r - k}^R$ 
\end{tabular}
& = & 
$\left\{ \begin{array}{l}
{1 \over 2}i \smallskip \\ 
{3 \over 2}i - r + k \smallskip \\ 
2i - r + {1 \over 2}k \smallskip \\
{3 \over 2}i - {1\over 2}r + {1\over 2}k \smallskip \\
{5 \over 2}r - k
\end{array} \right.$ &
$\begin{array}{rcl}
0 \leq &i& \leq r - k \smallskip \\ 
r - k \leq &i& \leq k \smallskip \\
k \leq &i& \leq r \smallskip \\
r \leq &i& \leq 2r - k \smallskip \\
2r - k \leq &i& \leq  \left\lfloor {3r \over 2} \right\rfloor
\end{array}$
\end{tabular}\\
\caption{Contributions of runs for windows in $W_2$ when $1 \leq s \leq 2$ and $k \geq r - k$.}
\label{table:highcontributions W_2}
\end{center}
\end{table}

\begin{lemma}\label{lemma:lowestinterval W_2}
If the contributions from $A$ and $A^R$ are weighted by a factor of 2 and the contributions from $A_{r - k}$, $A_{r - k}^R$, $A_{2r - k}$, and $A_{2r - k}^R$ are weighted by a factor of 1, the yield for all intervals is at least $2r + k$.
\end{lemma}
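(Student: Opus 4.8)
The plan is to exploit the fact that each of the three combined-contribution functions in Tables~\ref{table:lowcontributions W_2} and~\ref{table:highcontributions W_2} is piecewise linear and continuous in $i$ on $[0, \lfloor 3r/2 \rfloor]$, so the weighted sum $2\,(\text{combined contribution of } A, A^R) + (\text{combined contribution of } A_{r-k}, A_{r-k}^R) + (\text{combined contribution of } A_{2r-k}, A_{2r-k}^R)$ is again piecewise linear. Because an affine function on an interval attains its minimum at an endpoint, the weighted sum attains its minimum over $[0, \lfloor 3r/2 \rfloor]$ at one of the breakpoints of the common refinement of the three partitions. Hence it suffices to evaluate the weighted sum at these finitely many breakpoints and verify that each value is at least $2r+k$; the inequality then holds on every subinterval, and in particular at every integer index $i$.

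First I would split along the dichotomy already built into the tables: the case $k \le r - k$ governed by Table~\ref{table:lowcontributions W_2} and the case $k \ge r - k$ governed by Table~\ref{table:highcontributions W_2}. In the first case I would check that the union of breakpoints is $\{0, k, r-k, r, r+k, \lfloor 3r/2 \rfloor\}$ and that the hypothesis $k \le r-k$ (equivalently $2k \le r$) forces exactly the ordering $0 \le k \le r-k \le r \le r+k \le \lfloor 3r/2 \rfloor$, since $2k \le r$ gives $r + k \le 3r/2$ and $r+k$ is an integer. In the second case the breakpoints reorder to $0 \le r-k \le k \le r \le 2r-k \le \lfloor 3r/2\rfloor$, with $2k \ge r$ giving $2r-k \le 3r/2$. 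On each cell of the common refinement all three contributions are given by a single affine formula, so the weighted sum can be read off directly and evaluated at the listed breakpoints; continuity at shared breakpoints is automatic because adjacent formulas in each table agree there.

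Carrying out these evaluations should give a weighted sum equal to $2r+k$ at $i=0$, at $i=r$, and at $i=r+k$, and equal to $2r + \tfrac32 k$ at the intermediate breakpoints $i=k$ and $i=r-k$; moreover, on the final subinterval $r+k \le i \le \lfloor 3r/2 \rfloor$ all three contributions are constant (namely $0$, $r-k$, and $r+2k$), so the weighted sum is identically $2r+k$ there. The same pattern is expected in the case $k \ge r-k$, where the final subinterval $2r-k \le i \le \lfloor 3r/2\rfloor$ again carries constant contributions summing to $2r+k$. Since every breakpoint value is at least $2r+k$, the piecewise-linear minimum equals exactly $2r+k$, which both proves the lemma and shows the bound is tight, attained at $i=0$, $i=r$, and $i=r+k$.

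The main obstacle is not any single computation but the bookkeeping: one must correctly form the common refinement of the three partitions, confirm the relative order of the breakpoints under each case hypothesis, and handle the interleaving of the three partitions where their subranges differ. A minor nuisance is the parity of $r$, which determines whether $\lfloor 3r/2 \rfloor$ equals $3r/2$ or $(3r-1)/2$; this is harmless here because the weighted sum is constant and equal to $2r+k$ on the entire last subinterval in both cases, so the right endpoint contributes the bound regardless of parity.
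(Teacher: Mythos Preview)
Your approach is correct and essentially the same as the paper's. Both arguments sum the three weighted contributions on each piece of the common refinement of the partitions in Tables~\ref{table:lowcontributions W_2} and~\ref{table:highcontributions W_2} and verify the resulting affine function is at least $2r+k$; the paper writes out the linear expression on each subinterval and checks it against the appropriate endpoint, while you invoke the endpoint-minimum principle explicitly and evaluate only at breakpoints, but the underlying computation and case split on $k \le r-k$ versus $k \ge r-k$ are identical.
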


\begin{proof}
We first consider the case when $k \leq r - k$, consulting Table \ref{table:lowcontributions W_2}.

If $0 \leq i \leq k$, then the yield for $w_i$ is $2r + k + i/2$, which is at least $2r + k$, since $i \geq 0$.

If $k \leq i \leq r - k$, then the yield for $w_i$ is $2r + 3k/2$, which is greater than $2r + k$.

If $r - k \leq i \leq r$, then the yield for $w_i$ is $5r/2 + k - i/2$, which is at least $2r + k$, since $i \leq r$.

If $r \leq i \leq \lfloor 3r/2 \rfloor$, then the yield for $w_i$ is $2r + k$.

\noindent We now consider the case when $k \geq r - k$, consulting Table \ref{table:highcontributions W_2}.  

If $0 \leq i \leq r - k$, then the yield for $w_i$ is $2r + k + i/2$, which is at least $2r + k$, since $i \geq 0$.

If $r - k \leq i \leq k$, then the yield for $w_i$ is $5r/2 + k/2$, which is at least $2r + k$, since $r \geq k$.

If $k \leq i \leq r$, then the yield for $w_i$ is $5r/2 + k - i/2$, which is at least $2r + k$, since $i \leq r$.

If $r \leq i \leq 2r - k$, then the yield for $w_i$ is $2r + k$.

If $2r - k \leq i \leq \lfloor 3r/2 \rfloor$, then the yield for $w_i$ is also $2r + k$.
\end{proof}

\begin{theorem}
For $1 \leq s \leq 2$, SPEEDUPW12 finds an $8\gamma/(s+1)$-approximation to the repairman problem on windows in set $W_2$ in $O(\min\{r, m\}\Gamma(n))$ time.
\end{theorem}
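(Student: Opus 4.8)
The plan is to apply the Average Coverage Proposition (Proposition \ref{proposition:averaging}) to a carefully weighted ensemble $U$ built from the six runs $A$, $A^R$, $A_{r-k}$, $A_{r-k}^R$, $A_{2r-k}$, and $A_{2r-k}^R$ together with the $r$ hop-shifted versions that the functions $\mathcal{F}$ and $\mathcal{F}^R$ already sum over. First I would take the collection of subsets to be $\{S^\ell_j\}$ for $\ell \in \{0,1,2,3\}$ and $j \in \{1,\ldots,r\}$; by construction this partitions all requests that $R^*$ services on the untrimmed windows of $W_2$, so it is a legitimate input to the proposition. Because every request in a single $S^\ell_j$ shares the same alignment and is serviced by $R^*$ in the same period, a given run covers all of $S^\ell_j$ or none of it, so $cover_U(S^\ell_j)$ is exactly the fraction of runs in $U$ that cover the corresponding subinterval index $i$, which is read off from the combined contribution functions $\mathcal{F}(i)+\mathcal{F}^R(i)$ tabulated in Tables \ref{table:lowcontributions W_2} and \ref{table:highcontributions W_2}. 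Since these combined contributions are symmetric about $i = 3r/2$, it suffices to examine $0 \leq i \leq \lfloor 3r/2\rfloor$.

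Next I would realize the $2,2,1,1,1,1$ weighting of Lemma \ref{lemma:lowestinterval W_2} as literal multiplicities in $U$: two copies of each shift of $A$ and of $A^R$, and one copy of each shift of the remaining four runs. Since each of the six base runs contributes $r$ shifts, the total size of $U$ counted with multiplicity is $2r + 2r + r + r + r + r = 8r$. Lemma \ref{lemma:lowestinterval W_2} asserts that under exactly this weighting the yield at every subinterval is at least $2r+k$, and this yield is precisely the total coverage contribution summed over all $8r$ runs. Dividing gives $cover_U(S^\ell_j) \geq (2r+k)/(8r)$ for every subset. Writing $s = (r+k)/r$ so that $s+1 = (2r+k)/r$, this lower bound equals $(s+1)/8$, and therefore $\min_{\ell,j}\{cover_U(S^\ell_j)\} \geq (s+1)/8$.

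Applying Proposition \ref{proposition:averaging} with $\mu = (s+1)/8$ then produces a run $\hat{R}\in U$ with $\mathrm{profit}(\hat{R}) \geq \frac{s+1}{8}\,\mathrm{profit}(R^*)$. Every run in $U$ is a valid service run on the trimmed $W_2$ windows produced in Phase 1 of SPEEDUPW12 (each such window trimmed to its $1^{st}$ or $2^{nd}$ full subinterval), so the maximum-profit run on those trimmed windows is at least as profitable as $\hat{R}$. Because SPEEDUP finds a run whose profit is within a factor $\gamma$ of that maximum, its output has profit at least $\frac{s+1}{8\gamma}\,\mathrm{profit}(R^*)$, which is the claimed $8\gamma/(s+1)$-approximation. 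For the running time I would note that Phase 1 invokes SPEEDUP only a constant number of times (the $6$ trimmings per starting point over the two starting points), so the bound is dominated by the $O(\min\{r,m\}\Gamma(n))$ cost of SPEEDUP established earlier.

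The main obstacle I anticipate is the bookkeeping that links the abstract yield of Lemma \ref{lemma:lowestinterval W_2} to genuine coverage of the request subsets: one must verify that the $r$ hop-shifts correctly enumerate the $r$ alignment classes underlying each $S^\ell_j$, that no request is over- or under-counted, and that the case split $k \leq r-k$ versus $k \geq r-k$ together with symmetry about $i = 3r/2$ exhausts all indices $0 \leq i \leq 3r$. The delicate point is justifying that the $2,2,1,1,1,1$ weighting is a legitimate averaging over actual service runs rather than a formal device, which is exactly why I would embed the weights as run multiplicities in $U$ before invoking the Average Coverage Proposition, so that $cover_U$ is a true average of the $cover_R$.
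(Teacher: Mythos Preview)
Your proposal is correct and follows essentially the same route as the paper: invoke Lemma~\ref{lemma:lowestinterval W_2} to get a minimum yield of $2r+k$, realize the $2,2,1,1,1,1$ weighting as $8r$ runs, apply the Average Coverage Proposition to obtain the fraction $(2r+k)/(8r)=(s+1)/8$, and then absorb the $\gamma$ factor from SPEEDUP. Your write-up is more explicit than the paper's about why the weighted yield translates into a legitimate average over actual runs and about where $\gamma$ enters, but the argument is the same.
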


\begin{proof}
By Lemma \ref{lemma:lowestinterval W_2}, our analysis gives no yield less than $2r + k$.  Since we use two copies each of $A$ and $A^R$ and a single copy each of $A_{r - k}$, $A_{r - k}^R$, $A_{2r - k}$, and $A_{2r - k}^R$, averaged over $r$ different sets of periods, we apply the Average Coverage Proposition over $8r$ runs.  Thus, the fraction of optimal profit obtained is ${(2r + k)/( 8\gamma r)} = {((r + k) + r)/(8\gamma r)} = {(s + 1)/(8\gamma)}$.
\end{proof}

\subsection{Speedup $2 \leq s \leq 3$ for Windows in Set $W_2$}

For the range $2 \leq s \leq 5/2$, we can represent any rational speedup $s$ in the form $s = (2r + k)/r$ with integers $r \geq 1$ and $0 \leq k \leq r - k$.  For this analysis, we consider service runs $A$, $A^R$, $A_{r - 2k}$, and $A_{r - 2k}^R$, noting that $\lambda = 2$.  We will use three copies each of $A$ and $A^R$ and a single copy each of $A_{r - 2k}$ and $A_{r - 2k}^R$.  Because the generation of the tables and the case analysis needed to show the coverage are involved and of a similar form as Lemma \ref{lemma:lowestinterval W_2}, we have moved these details to Appendix \ref{appendix:2<s<3 W_2}.

\begin{theorem}
For $2 \leq s \leq 5/2$, SPEEDUPW12 finds an $8\gamma/(2s - 1)$-approximation to the repairman problem on windows in set $W_2$ in $O(\min\{r, m\}\Gamma(n))$ time.
\end{theorem}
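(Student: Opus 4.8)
The plan is to mirror exactly the structure of the $1 \leq s \leq 2$ case (Lemma~\ref{lemma:lowestinterval W_2} and the theorem following it), since the problem statement has already set up the parallel machinery: a parametrization $s = (2r+k)/r$ with $0 \leq k \leq r-k$, the run ensemble $\{A, A^R, A_{r-2k}, A_{r-2k}^R\}$, and the weighting scheme (three copies each of $A$ and $A^R$, one copy each of $A_{r-2k}$ and $A_{r-2k}^R$). First I would carry out the arithmetic that is deferred to Appendix~\ref{appendix:2<s<3 W_2}: establish a lemma asserting that, under this weighting, the yield for every interval $w_i$ with $0 \leq i \leq \lfloor 3r/2 \rfloor$ is at least some threshold value $\tau$. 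I expect the threshold to be $\tau = 3r + 2k$, so that the collected fraction works out to $(3r+2k)/(8\gamma r)$. Indeed, with three copies each of $A$ and $A^R$ and one copy each of the two shifted runs, the total run count is $3+3+1+1 = 8$ copies per set of periods; with the expected threshold, the fraction of optimal profit is $(3r+2k)/(8\gamma r)$, and since $2s-1 = (2(2r+k) - r)/r = (3r+2k)/r$, this simplifies to $(2s-1)/(8\gamma)$, whose reciprocal is exactly the claimed ratio $8\gamma/(2s-1)$.

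With that threshold lemma in hand, the theorem proof is then a short three-sentence argument identical in form to the $W_2$ theorem for $1 \leq s \leq 2$: invoke the threshold lemma to say no yield is below $3r+2k$; observe that we average over $8r$ runs (eight weighted copies across $r$ shifted sets of periods); apply the Average Coverage Proposition (Proposition~\ref{proposition:averaging}) to conclude that some run $\hat R$ achieves profit at least $\mu \cdot \text{profit}(R^*)$ with $\mu = (3r+2k)/(8\gamma r)$; then algebraically rewrite $\mu$ in terms of $s$ as $(2s-1)/(8\gamma)$, giving the approximation ratio $8\gamma/(2s-1)$. The running time $O(\min\{r,m\}\Gamma(n))$ comes directly from SPEEDUP as cited in Sect.~\ref{section:ensemble}, since SPEEDUPW12 runs SPEEDUP a constant number of times per configuration.

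The genuine obstacle is entirely inside the deferred threshold lemma, not the theorem wrapper. To build the coverage tables I would run CREATE-TABLE-$\lambda$ with $\lambda = 2$ for each run in the ensemble: because $s \geq 2$ here, the fundamental coverage pattern for type $A$ runs switches to the ``$q-r$ repetitions of $1$ followed by $r$ repetitions of $1/2$'' form described in the excerpt (with $q - r = r + k$), so the $\mathcal{F}$ and $\mathcal{F}^R$ piecewise-linear descriptions differ from those in Table~\ref{table:F functions for A_r-k runs}. I would compute $\mathcal{F}$ for each run, form $\mathcal{F}^R(i) = \mathcal{F}(r(\lambda+1) - i) = \mathcal{F}(3r - i)$, sort and merge the breakpoints of the combined piecewise-linear functions over $0 \leq i \leq \lfloor 3r/2 \rfloor$, apply the weights $(3,3,1,1)$, and then verify interval-by-interval that the weighted sum never dips below $3r+2k$. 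The hard and error-prone part is the breakpoint bookkeeping: with the shift $\Delta = r - 2k$ and the constraint $k \leq r-k$ (i.e. $2k \leq r$), several candidate breakpoints such as $r-2k$, $k$, $r-k$, $r$, and $2r-k$ must be correctly ordered, and one may need to confirm that the single case $2 \leq s \leq 5/2$ (rather than the full $2 \leq s \leq 3$) is exactly the sub-range in which these breakpoints admit a single consistent ordering — which is presumably why the section is titled for $2 \leq s \leq 3$ but this theorem covers only $2 \leq s \leq 5/2$. Once the ordering is fixed within this sub-range, each interval's minimum is attained at an endpoint by linearity, so checking the bound reduces to evaluating the weighted sum at the finitely many breakpoints and confirming each is $\geq 3r+2k$, exactly as the four-to-five-case split is handled in Lemma~\ref{lemma:lowestinterval W_2}.
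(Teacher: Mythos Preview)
Your proposal is correct and essentially identical to the paper's approach: the theorem wrapper is exactly the three-line argument you describe (invoke the threshold lemma for $3r+2k$, count $8r$ runs, apply the Average Coverage Proposition, rewrite $(3r+2k)/(8\gamma r)$ as $(2s-1)/(8\gamma)$). One small correction to your speculation about the deferred lemma: the restriction to $s \leq 5/2$ does \emph{not} yield a single consistent breakpoint ordering --- the appendix still splits into two sub-cases according to whether $k \leq r-2k$ or $k \geq r-2k$ (i.e.\ $s$ below or above $7/3$), with the $A$/$A^R$ contributions identical in both and only the $A_{r-2k}$/$A_{r-2k}^R$ table changing; this is routine and exactly the kind of interval-by-interval check you outlined.
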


\begin{proof}
By Lemma \ref{lemma:lowestinterval 2<s<5/2 W_2}, the yield is at least $3r + 2k$.  Since three copies each of $A$ and $A^R$ and a single copy each of $A_{r - 2k}$ and $A_{r - 2k}^R$ are used, averaged over $r$ different sets of periods, the Average Coverage Proposition is applied over $8r$ runs.  Thus, the fraction of optimal profit obtained is at least ${(3r + 2k)/(8\gamma r) } = {((4r + 2k) - r)/(8\gamma r)} = {(2s - 1)/(8\gamma)}$.
\end{proof}

\begin{observation}
\label{observation:speedup}
Given a speedup $s' > s$, we can always simulate with speedup $s'$ the runs used in the analysis of speedup $s$ by introducing delays at appropriate points in each run.  Thus, an approximation ratio of $\beta$ at speedup $s$ is an upper bound on the approximation ratio at speedup $s'$.
\end{observation}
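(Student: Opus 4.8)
Given a speedup $s' > s$, we can always simulate with speedup $s'$ the runs used in the analysis of speedup $s$ by introducing delays at appropriate points in each run. Thus, an approximation ratio of $\beta$ at speedup $s$ is an upper bound on the approximation ratio at speedup $s'$.

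The plan is to prove this as a monotonicity statement: increasing the available speed cannot reduce performance, because a repairman permitted to travel at speed $s'$ may travel at any speed up to $s'$, and in particular can exactly reproduce the motion of a repairman restricted to speed $s \le s'$. Concretely, I would fix the instance together with the optimal run $R^*$, and take any single run $R$ of the kind used in the speedup-$s$ analysis, i.e.\ a run that races forward and backward along $R^*$ at speed $s$. I then construct a run $R'$ at speedup $s'$ whose position-versus-time function is identical to that of $R$: wherever $R$ traverses a segment, $I$ have $R'$ cover the same displacement in the same elapsed time, which is feasible because $R$ never travels faster than $s \le s'$. Equivalently, one lets $R'$ run each segment at full speed $s'$ and then idle for the remaining time, so that the ``delays'' of the statement simply pad out the faster motion to match the slower trajectory. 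The essential point is that coverage $cover_R(S)$ depends only on which requests $R$ services during the period into which their windows were trimmed, hence only on the position-versus-time function of $R$; since $R'$ has the identical such function, $cover_{R'}(S) = cover_R(S)$ for every subset $S$, and in particular $R'$ and $R$ collect the same profit.

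Next I would apply this segment-by-segment simulation to every run in the ensemble $U$ used to establish the guarantee at speedup $s$, obtaining an ensemble $U'$ of speedup-$s'$ runs with $cover_{U'}(S_i) = cover_U(S_i)$ for each subset $S_i$, and therefore $\min_i\{cover_{U'}(S_i)\} = \min_i\{cover_U(S_i)\}$. Proposition \ref{proposition:averaging} then yields a run in $U'$, now feasible at speedup $s'$, whose profit is at least the same fraction of $\mbox{profit}(R^*)$ as was guaranteed at speed $s$. Because SPEEDUP (and hence SPEEDUPW12) at speedup $s'$ returns a run at least as profitable as the best run feasible at that speedup, up to the factor $\gamma$ already present in every bound, its output at $s'$ collects at least the fraction of optimal profit guaranteed at $s$. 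Since the approximation ratio is the reciprocal of the fraction of optimal profit collected, a guarantee of ratio $\beta$ at speedup $s$ gives a ratio no larger than $\beta$ at speedup $s'$, which is exactly the claimed upper bound.

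I expect the only delicate point to be articulating precisely why coverage is preserved under the delayed simulation: one must be clear that a run's coverage is a function solely of its space-time trajectory, namely which requests are serviced and during which trimmed period, and that padding the faster motion with idle time reproduces this trajectory exactly rather than merely approximately. Once that is pinned down, the transfer of the ensemble to speedup $s'$ and the invocation of the Average Coverage Proposition are immediate, and no new case analysis, run design, or table construction is required.
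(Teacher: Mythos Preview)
Your argument is correct and matches the paper's intent. In fact, the paper gives no proof at all for this statement: it is recorded as an \emph{observation} and used immediately without further justification, the content being considered self-evident. Your write-up simply spells out the obvious simulation (reproduce the slower run's space-time trajectory by racing at speed $s'$ and idling) and checks that coverage, hence the Average Coverage bound, is preserved; this is exactly the reasoning the paper leaves implicit.
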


By Observation \ref{observation:speedup}, the $2\gamma$-approximation for $s = 5/2$ implies at most a constant $2\gamma$-approximation to the repairman problem on windows in set $W_2$ when $5/2 \leq s \leq 3$.

\subsection{Speedup $3 \leq s \leq 5$ for Windows in Set $W_2$}

For set $W_2$ with $3 \leq s \leq 5$ where $s = q/r$, we consider runs $A$, $A^R$, and their shifts, noting that $\lambda = 2$.
When $s < 4$, runs $A$ and $\vec{A}$ give full coverage for service requests in subsets of $S^0$ and $S^1$ and partial coverage of service requests in $S^2$, while runs $A^R$ and $\vec{A}^R$ give full coverage for service requests in subsets of $S^3$ and $S^2$ and partial coverage of $S^1$.  When $s = 4$, runs $A$ and $\vec{A}$ go further by also giving full coverage for service requests in subsets of $S^2$, while runs $A^R$ and $\vec{A}^R$ also give full coverage for service requests in subsets of $S^1$.  When $s > 4$, runs $A$ and $\vec{A}$ give full coverage for service requests in subsets of $S^0$, $S^1$, and $S^2$ and partial coverage of service requests in $S^3$, while runs $A^R$ and $\vec{A}^R$ give full coverage for service requests in subsets of $S^3$, $S^2$ and $S^1$ and partial coverage of $S^0$.  Since the contributions of the $A$ and $A^R$ runs and their shifted versions tend to balance each other, we can analyze this balance between the two over all possible sets of periods to find a lower bound on the total profit after trimming.

\begin{theorem}
For $3 \leq s \leq 5$, SPEEDUPW12 finds a $4\gamma /(s - 1)$-approximation for windows $W_2$ in $O(\min\{r, m\}\Gamma(n))$ time.
\end{theorem}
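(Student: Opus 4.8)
The plan is to follow the same template as the two preceding $W_2$ theorems, specializing the ensemble to the high-speedup regime. The ingredients are already fixed in the text—runs $A$, $A^R$, and their shifts, with $\lambda = 2$—so the work lies entirely in the symbolic coverage analysis. Writing $s = q/r$, I would first invoke CREATE-TABLE-$2$ to obtain the piecewise-linear functions $\mathcal{F}$ and $\mathcal{F}^R$ from the fundamental type-$A$ coverage pattern for $s \ge 2$ (namely $q-r$ entries equal to $1$ followed by $r$ entries equal to $1/2$), combine them into $\mathcal{F}(i) + \mathcal{F}^R(i)$, and use the symmetry about $i = 3r/2$ to restrict attention to $0 \le i \le \lfloor 3r/2 \rfloor$. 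One expects a bound of the form $4\gamma/(s-1)$ here, the ``$-1$'' reflecting the one extra fully-covered period relative to the unit-window $4\gamma/s$ result for $W_1$ in the same speedup band.

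Because the qualitative coverage picture changes as $s$ crosses $4$—for $s<4$ the forward runs fully cover only $S^0,S^1$ while the reverse runs fully cover $S^2,S^3$, whereas for $s>4$ each direction reaches one subset further, since the backward sweep of length $q-r$ (which exceeds $3r$ once $s>4$) begins to re-cover subsets it had previously only half-covered—I would split the argument into the sub-ranges $3 \le s \le 4$ and $4 \le s \le 5$, treating $s=4$ as the matching boundary. In each sub-range I would sort the breakpoints of $\mathcal{F}$ and $\mathcal{F}^R$, further subdividing the range of $s$ whenever two breakpoints cannot be ordered (exactly the contingency flagged in the CREATE-TABLE-$2$ discussion), and then sum range-by-range to get a single piecewise description of the combined contribution.

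The heart of the proof is a lemma, analogous to Lemma~\ref{lemma:lowestinterval W_2}, asserting that over every interval the combined yield $\mathcal{F}(i)+\mathcal{F}^R(i)$ is at least $(q-r)/2$; I would prove it by a short case analysis over the sorted sub-ranges of $i$, bounding a linear function of $i$ by its interval endpoints in each case, just as in the $1\le s\le 2$ proof. Granting this lemma, the conclusion is the usual averaging step: the ensemble is $A$ and $A^R$ together with their shifts over the $r$ sets of periods, i.e.\ $2r$ runs, so by the Average Coverage Proposition the best run collects a fraction at least $((q-r)/2)/(2r\gamma) = (q-r)/(4r\gamma) = (s-1)/(4\gamma)$ of the optimal profit, which is the claimed $4\gamma/(s-1)$ ratio. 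The running time is $O(\min\{r,m\}\Gamma(n))$ since SPEEDUP is invoked once per set of periods, of which there are $\min\{r,m\}$ distinct ones.

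The main obstacle is the symbolic bookkeeping across the $s=4$ transition: I expect the delicate point to be verifying that the long backward sweep re-covers exactly the right subsets so that, at the worst interval, the coverage deficit in one direction is made up by the surplus of the reverse direction—that is, that the $A$/$A^R$ balance never dips below $(q-r)/2$. Getting the breakpoint ordering right, and subdividing $s$ whenever endpoints collide, so that the minimum is certified uniformly over the whole range rather than only at sampled speedups, is where the real care is needed; the averaging and running-time steps are then routine.
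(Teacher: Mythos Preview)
Your proposal is correct and follows essentially the same approach as the paper: the same ensemble $\{A,\vec{A},A^R,\vec{A}^R\}$ with $\lambda=2$, the same target yield $(q-r)/2$, and the same averaging over $2r$ runs to reach $(s-1)/(4\gamma)$. The one difference is that the paper avoids your case split at $s=4$ and the full breakpoint sort: it simply observes that for $0\le i\le r$ the $A$-contribution is constant (since $q-2r\ge r$ when $s\ge 3$) while the $A^R$-contribution is nondecreasing, and that for $r<i\le\lfloor 3r/2\rfloor$ the combined contribution saturates at $2r$, so the minimum is attained at $w_0$, where it equals $q/2-r/2$; this monotonicity argument handles the whole range $3\le s\le 5$ at once.
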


\begin{proof}
For runs $A$ and $\vec{A}$, $w_i$ earns a 1 (denoting full coverage) for each of the $r$ sets of periods where $0 \leq i \leq q - 2r$, giving a total of $r$ for each such $i$.  For each $i > q - 2r$, the total decreases by $1/2$ from the total for $i - 1$.   For runs $A^R$ and $\vec{A}^R$, $w_i$ gets 1 for each of the $r$ sets of periods where $3r - (q - 2r) \leq i \leq 3r$, giving a total of $r$ for each such $i$.  For each $i < 3r - (q - 2r)$, the total decreases by $1/2$ from the total for $i + 1$.

The combined contributions of runs $A$, $A^R$, and their shifted versions is $q/2 - r/2$ for $w_0$.  Contributions from run $A$ are constant, and contributions from run $A^R$ only increase or stay constant for $w_i$ where $0 < i \leq r$.  Contributions for $w_i$ for all runs sum to $2r$ for $r < i < \lfloor 3r/2 \rfloor$.  Thus, the yield for all $w_i$ is at least $q/2 - r/2$.  Since two runs averaged over $r$ different sets of periods are used, the fraction of optimal profit obtained is at least $(q/2 - r/2)\cdot 1/(2\gamma r) = q/(4\gamma r) - r/(4\gamma r) = (s - 1)/(4\gamma)$.
\end{proof}

\section{New Types of Runs to Handle Windows in Set $W_3$}
\label{section:runs for W_3}

In addition to type $A$ runs, which repeat every 2 periods, our analysis of windows in $W_3$ defines type $B$ and type $C$ runs which repeat every 3 or every 4 periods, respectively.  We define $B$ and $C$ runs only in the range $2 < s \leq 3$.  Let $\nu = \lceil s \rceil - s$.

Start run $B$ at $t = 0$ at the location that $R^*$ has at time $t = -0.5$.  From there, run $B$ follows a repeating pattern of racing forward along $R^*$ for $2$ periods, racing backward along $R^*$ for $1 - \nu/(2s)$ periods, and racing forward along $R^*$ for $\nu/(2s)$ periods.  As for run $C$, also start it at $t = 0$ at the location that $R^*$ has at time $t = -0.5$.  From there, run $C$ follows a repeating pattern of racing forward along $R^*$ for $2 + 2/s$ periods and racing backward along $R^*$ for $2 - 2/s$ periods.

Similar to $A^R$, we also define $B^R$ and $C^R$, the ``reverses'' of runs $B$ and $C$, respectively.  Both runs $B^R$ and $C^R$ start at $t = 0$ at the location that $R^*$ has at time $t = \lambda/2$.  From its starting point, run $B^R$ follows a repeating pattern of racing forward along $R^*$ for $\nu/(2s)$, racing backward along $R^*$ for $1 - \nu/(2s)$ periods, and racing forward along $R^*$ for $2$ periods.  From its starting point, run $C^R$ follows a repeating pattern of racing backward along $R^*$ for $2 - 2/s$ periods and forward along $R^*$ for $2 + 2/s$ periods.  As with $A$, let $B_\Delta$ and $C_\Delta$, respectively, be runs $B$ and $C$ moved forward $\Delta$ hops, and let runs $B_\Delta^R$ and $C_\Delta^R$, respectively, be runs $B^R$ and $C^R$ moved backward $\Delta$ hops.  Runs $B_\Delta$ and $C_\Delta$, respectively, follow the same patterns of movement as runs $B$ and $C$ but start at $t = 0$ at the location that $R^*$ has at  $t = -0.5 + \Delta/(2r)$.  Their reverses $B_\Delta^R$ and $C_\Delta^R$, respectively, follow the same patterns of movement as $B^R$ and $C^R$ but start at $t = 0$ at the location that $R^*$ has at $t = \lambda/2 - \Delta/(2r)$.

As there is for type $A$ runs, there are unique patterns of coverage corresponding to type $B$ and $C$ runs and their reverses.   Recall that $s = q/r$.  Note that, for analysis of $B$ and $C$ runs, we choose the smallest values of $q$ and $r$ such that $q + r$ is even.  Since the number of subsets is determined by $r$, it is necessary for $B$ and $C$ runs to have an even $q + r$ in order to keep the coverage defined in terms of complete rather than partial subsets.  A type $B$ run moves forward during its first two periods of time the same distance that an optimal run moves during $2q$ subintervals.  During its third period of time, it moves backward the same distance that an optimal run moves during $q( 1 - \nu/(2s))$ subintervals and then forward the same distance that an optimal run moves during $q\nu/(2s)$ subintervals.  Then, the pattern repeats.

We recall the list of subsets: $S^0_1$, $S^0_2$ $\ldots$ $S^0_r$, $S^1_1$ $\ldots$ $S^1_r$, $S^2_1$ $\ldots$ $S^2_r$, $S^3_1$ $\ldots$ $S^3_r$, $S^4_1$ $\ldots$ $S^4_r$.
When $2 < s \leq 3$, run $B$, during the first period in its pattern, covers $q$ successive subsets as it moves forward.  In the second period, it covers another $q$ subsets moving forward.  Finally, in its third period, it covers $(3q - 3r)/2$ subsets backward but no new subsets forward, since the subsets covered forward were already covered backward.    We see that only the first period in the pattern covers the first $(q - r)/2$ subsets.  Then the first and third period in the pattern cover the next $(q - r)/2$ subsets.  All three periods cover the next $r$ subsets.  The second and third periods cover the next $q - 2r$ subsets, and only the second period covers the final $r$ subsets.  This pattern of coverage is represented as $(q - r)/2$ repetitions of 1/3, $(q - r)/2$ repetitions of 2/3, $r$ repetitions of 1, $q - 2r$ repetitions of 2/3, and $r$ repetitions of 1/3.

Figure \ref{figure:bspeeds} gives two examples of type $B$ runs for speedups in the range $17/7 \leq s \leq 3$, the only range for which our analysis will employ type $B$ runs.  Observe that the run for $s = 5/2$ uses the form $10/4$ in order to conform with the restriction for our analysis that $q + r$ must be even.  Unlike $A$ runs which repeat every two periods, both the runs in this figure arrive at the same corresponding position at the beginning of every third period, namely at times $0, 1.5, 3,$ and so on.  Portions of runs servicing requests in $S^0$, $S^1$, $S^2$, $S^3$, and $S^4$ or various subsets are identified: subsets $S^0_4$, $S^1_3$, $S^1_4$, $S^2_1$, and $S^2_2$  mapping to quarter periods of $R^*$ for $s = 10/4$ and subsets $S^0_5$, $S^1_4$, $S^1_5$, $S^2_1$, $S^2_2$, $S^3_3$, and $S^4_1$ mapping to a fifth of a period of $R^*$ for $s = 13/5$.  Focusing on the example of $s = 10/4$ where $q = 10$ and $r = 4$, note that, during a three-period section, subsets $S^0_1$ through $S^0_3$ are covered a single time, subsets $S^0_4$, $S^1_1$, and $S^1_2$ are covered twice, subsets $S^1_3$, $S^1_4$, $S^2_1$, and $S^2_2$ are covered all three times, subsets $S^2_3$ and $S^2_4$ are covered twice, and subsets $S^3_1$ through $S^3_4$ are covered a single time.  This pattern of 3 repetitions of 1/3, 3 repetitions of 2/3, 4 repetitions of 1, 2 repetitions of 2/3, and 4 repetitions of 1/3 exactly corresponds to the repeating pattern of subset coverage described in the previous paragraph.

\begin{figure}[!hbt]
\centering
\begin{xy}
\newxycolor{white}{1 gray}
\xyimport(533, 208){\includegraphics[width=.80\textwidth]{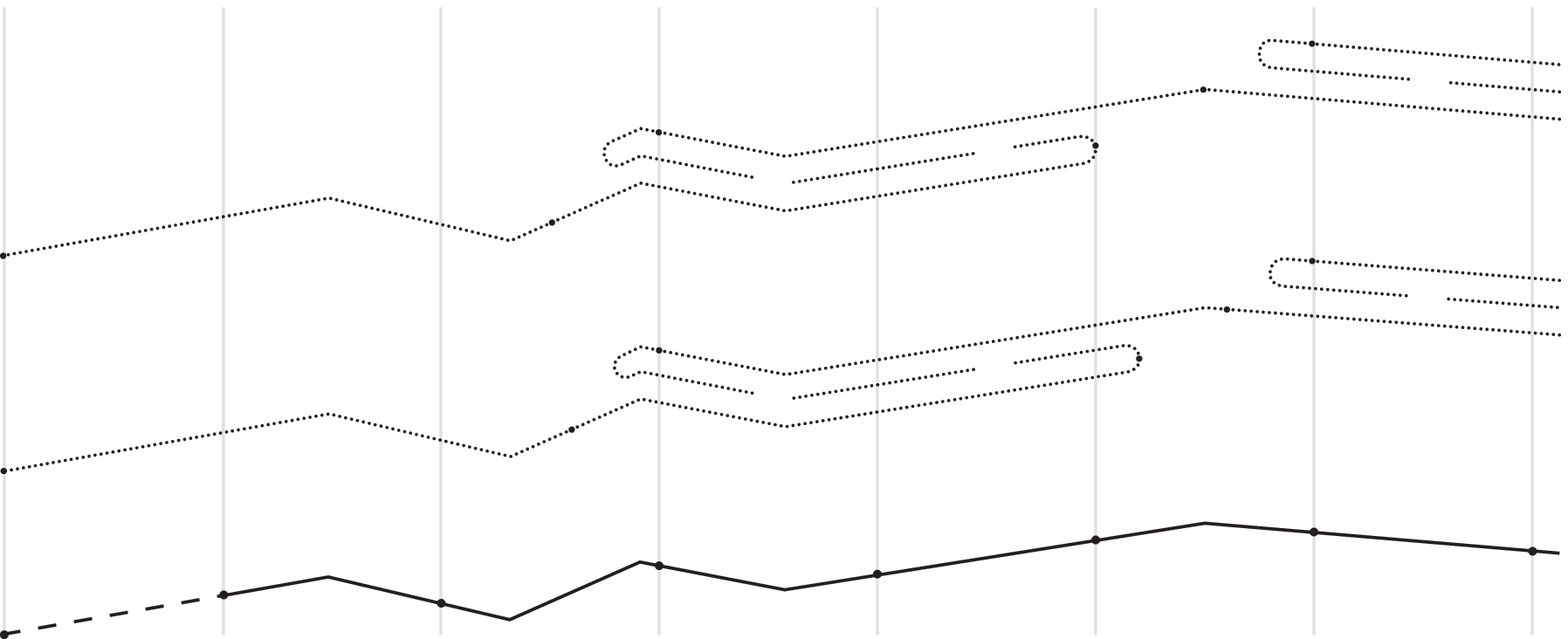}}
(-24,135)*!R\txt\footnotesize{Run $B$ for};
(-24,117)*!R{\large s = {5\over 2} = {10 \over 4}};
(-24,62)*!R\txt\footnotesize{Run $B$ for};
(-24,44)*!R{\large s = {13 \over 5}};
(-24,8)*!R\txt\footnotesize{Optimal};
(-24,-4)*!R\txt\footnotesize{Run $R^*$};
(-6,-10)*\txt\footnotesize{$-.5$};
(76,4)*\txt\footnotesize{$0$};
(149,1)*\txt\footnotesize{$.5$};
(225,13)*\txt\footnotesize{$1$};
(300,10)*\txt\footnotesize{$1.5$};
(374,23)*\txt\footnotesize{$2$};
(449,25)*\txt\footnotesize{$2.5$};
(523,19)*\txt\footnotesize{$3$};
(1,116)*\txt\scriptsize{$0$};
(187,146)*\txt\scriptsize{$.5$};
(383,164)*\txt\scriptsize{$1$};
(225,176)*\txt\scriptsize{$1.5$};
(411,190)*\txt\scriptsize{$2$};
(448,205)*\txt\scriptsize{$3$};
(1,45)*\txt\scriptsize{$0$};
(193,78)*\txt\scriptsize{$.5$};
(397,94)*\txt\scriptsize{$1$};
(225,105)*\txt\scriptsize{$1.5$};
(419,118)*\txt\scriptsize{$2$};
(448,133)*\txt\scriptsize{$3$};
(40,142)*\txt\scriptsize{$S^0$};
(115,154)*\txt\scriptsize{$S^1$};
(197,131)*\txt\tiny{$S^1_3$};
(213,137)*\txt\tiny{$S^1_4$};
(160,124)*\txt\tiny{$S^2_1$};
(179,124)*\txt\tiny{$S^2_2$};
(210,175)*\txt\tiny{$S^0_4$};
(263,133)*\txt\scriptsize{$S^2$};
(263,151)*\txt\scriptsize{$S^1$};
(263,169)*\txt\scriptsize{$S^0$};
(339,144)*\txt\scriptsize{$S^3$};
(339,160)*\txt\scriptsize{$S^2$};
(339,177)*\txt\scriptsize{$S^1$};
(423,171)*\txt\tiny{$S^1_3$};
(438,169)*\txt\tiny{$S^1_4$};
(384,185)*\txt\tiny{$S^2_1$};
(399,187)*\txt\tiny{$S^2_2$};
(437,206)*\txt\tiny{$S^0_4$};
(488,166)*\txt\scriptsize{$S^2$};
(488,184)*\txt\scriptsize{$S^1$};
(488,201)*\txt\scriptsize{$S^0$};
(40,71)*\txt\scriptsize{$S^0$};
(115,83)*\txt\scriptsize{$S^1$};
(263,62)*\txt\scriptsize{$S^2$};
(263,80)*\txt\scriptsize{$S^1$};
(263,97)*\txt\scriptsize{$S^0$};
(339,73)*\txt\scriptsize{$S^3$};
(339,89)*\txt\scriptsize{$S^2$};
(339,106)*\txt\scriptsize{$S^1$};
(203,62)*\txt\tiny{$S^1_4$};
(218,68)*\txt\tiny{$S^1_5$};
(158,54)*\txt\tiny{$S^2_1$};
(173,51)*\txt\tiny{$S^2_2$};
(188,55)*\txt\tiny{$S^2_3$};
(210,103)*\txt\tiny{$S^0_5$};
(383,79)*\txt\tiny{$S^4_1$};
(426,99)*\txt\tiny{$S^1_4$};
(440,97)*\txt\tiny{$S^1_5$};
(382,113)*\txt\tiny{$S^2_1$};
(395,115)*\txt\tiny{$S^2_2$};
(408,117)*\txt\tiny{$S^2_3$};
(437,134)*\txt\tiny{$S^0_5$};
(488,95)*\txt\scriptsize{$S^2$};
(488,113)*\txt\scriptsize{$S^1$};
(488,130)*\txt\scriptsize{$S^0$};
\end{xy}
\caption{Examples of type $B$ runs for two different speedups in the range $17/7 \leq s \leq 3$, namely at $5/2$ and $13/5$.}
\label{figure:bspeeds}
\end{figure}

A type $C$ run moves forward during its first two periods of time the same distance that an optimal run moves during $2q$ subintervals.  During its third period of time, it moves forward the same distance that an optimal run moves during $2q/s$ subintervals and then backward the same distance that an optimal run moves during $(1 - 2/s)q$ subintervals.  During its fourth period of time, it moves backward the same distance that an optimal run moves during $q$ subintervals.  Then, the pattern repeats.  

When $2 \leq s \leq 5/2$, run $C$, during the first period in its pattern, covers $q$ successive subsets as it moves forward.  In its second period, it covers another $q$ subsets moving forward.  In its third period, it covers $2r$ subsets forward but no new subsets backward.  Finally, in its fourth period, it covers $q$ subsets moving backward.  We see that only the first period in the pattern covers the first $r$ subsets.  Then, the first and fourth period in the pattern cover the next $q - 2r$ subsets.  The first, second, and fourth periods cover the next $r$ subsets.  The second and fourth periods cover the next $q - 2r$ subsets.  The second, third, and fourth periods cover the next $3r - q$ subsets.  Only the second and the third period cover the next $q - 2r$ subsets, and only the third period covers the final $r$ subsets.  This pattern of coverage is represented as $r$ repetitions of 1/4, $q - 2r$ repetitions of 1/2, $r$ repetitions of 3/4, $q - 2r$ repetitions of 1/2, $3r - q$ repetitions of 3/4, $q - 2r$ repetitions of 1/2, and $r$ repetitions of 1/4.  Just as with type $A$ runs, we will use the patterns for $B$ and $C$ runs in conjunction with CREATE-TABLE-$\lambda$ to construct tables showing bounds on average coverage.

Figure \ref{figure:cspeeds} gives two examples of type $C$ runs for speedups in the range $2 \leq s \leq 17/7$, the only range for which our analysis will employ type $C$ runs.  Observe that the run for $s = 2$ uses the form $4/2$ in order to conform with the restriction for our analysis that $q + r$ must be even.  Unlike $A$ and $B$ runs with, respectively, repeating patterns of two and three periods, both the runs in this figure arrive at the same position at the beginning of every fourth period, namely at times $0$, $2$, and so on.  Portions of runs servicing requests in $S^0$, $S^1$, $S^2$, $S^3$, and $S^4$ or various subsets are identified: no separate subsets for $s = 2$ but subsets $S^1_2$, $S^1_3$, $S^1_4$, $S^1_5$, $S^2_1$, $S^3_1$, $S^3_2$, $S^4_1$, and $S^4_2$ mapping to a fifth of a period of $R^*$ for $s = 11/5$.  Focusing on the example of $s = 11/5$ where $q = 11$ and $r = 5$, note that, during a four-period section, subsets $S^0_1$ through $S^0_5$ are covered a single time, subset $S^1_1$ is covered twice, subsets $S^1_2$ through $S^1_5$ and $S^2_1$ are covered three times, subset $S^2_2$ is covered twice, subsets $S^2_3$ through $S^2_5$ and $S^3_1$ are covered three times, subset $S^3_2$ is covered twice, and subsets $S^3_3$ through $S^3_5$ and $S^4_1$ and $S^4_2$ are covered a single time.  This pattern of 5 repetitions of 1/4, 1 repetition of 1/2, 5 repetitions of 3/4, 1 repetition of 1/2, 4 repetitions of 3/4, 1 repetition of 1/2, and 5 repetitions of 1/4 exactly corresponds to the repeating pattern of subset coverage described in the previous paragraph.

\begin{figure}[!hbt]
\centering
\begin{xy}
\newxycolor{white}{1 gray}
\xyimport(534, 210){\includegraphics[width=.80\textwidth]{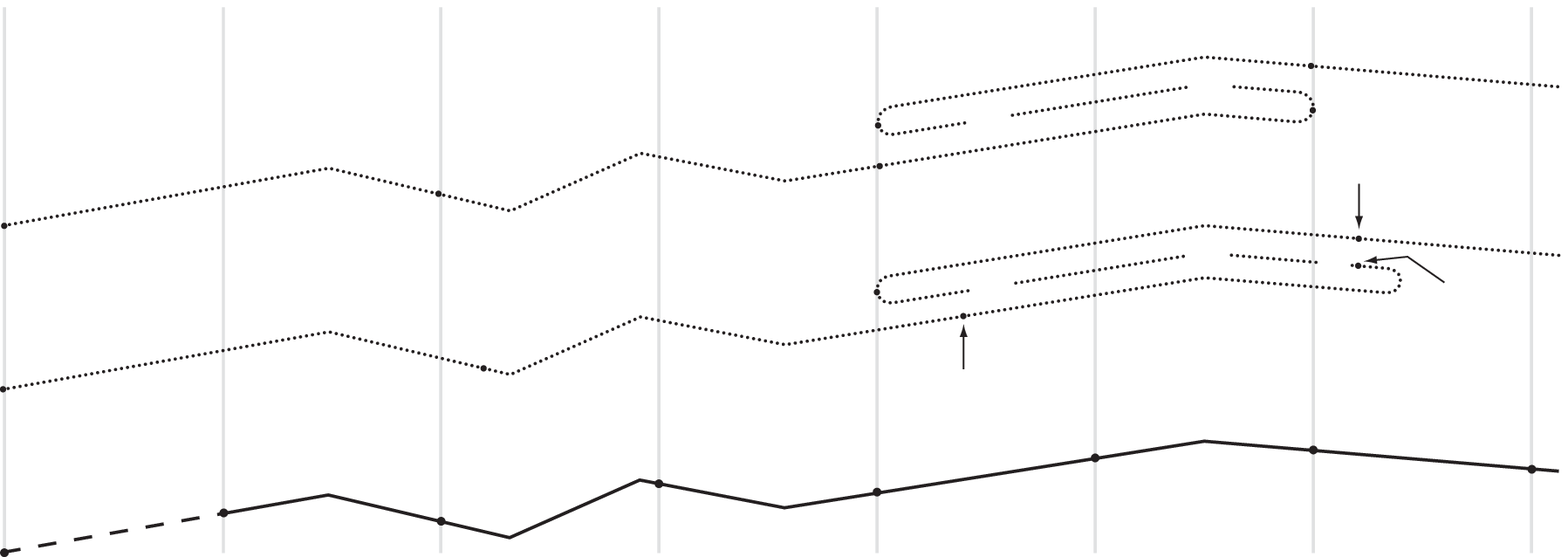}}
(-24,137)*!R\txt\footnotesize{Run $C$ for};
(-24,119)*!R{\large s = 2 = {4 \over 2}};
(-24,71)*!R\txt\footnotesize{Run $C$ for};
(-24,53)*!R{\large s = {11 \over 5}};
(-24,8)*!R\txt\footnotesize{Optimal};
(-24,-4)*!R\txt\footnotesize{Run $R^*$};
(-6,-11)*\txt\footnotesize{$-.5$};
(76,3)*\txt\footnotesize{$0$};
(149,0)*\txt\footnotesize{$.5$};
(225,12)*\txt\footnotesize{$1$};
(300,9)*\txt\footnotesize{$1.5$};
(374,22)*\txt\footnotesize{$2$};
(449,24)*\txt\footnotesize{$2.5$};
(523,18)*\txt\footnotesize{$3$};
(1,116)*\txt\scriptsize{$0$};
(149,128)*\txt\scriptsize{$.5$};
(301,139)*\txt\scriptsize{$1$};
(461,170)*\txt\scriptsize{$1.5$};
(292,165)*\txt\scriptsize{$2$};
(449,199)*\txt\scriptsize{$2.5$};
(1,54)*\txt\scriptsize{$0$};
(164,62)*\txt\scriptsize{$.5$};
(329,64)*\txt\scriptsize{$1$};
(496,98)*\txt\scriptsize{$1.5$};
(292,101)*\txt\scriptsize{$2$};
(465,149)*\txt\scriptsize{$2.5$};
(40,145)*\txt\scriptsize{$S^0$};
(115,159)*\txt\scriptsize{$S^1$};
(190,151)*\txt\scriptsize{$S^1$};
(265,157)*\txt\scriptsize{$S^2$};
(339,148)*\txt\scriptsize{$S^2$};
(339,169)*\txt\scriptsize{$S^1$};
(339,188)*\txt\scriptsize{$S^0$};
(414,160)*\txt\scriptsize{$S^3$};
(414,181)*\txt\scriptsize{$S^2$};
(414,201)*\txt\scriptsize{$S^1$};
(489,195)*\txt\scriptsize{$S^2$};
(40,81)*\txt\scriptsize{$S^0$};
(115,95)*\txt\scriptsize{$S^1$};
(159,84)*\txt\tiny{$S^2_1$};
(174,83)*\txt\tiny{$S^1_2$};
(189,88)*\txt\tiny{$S^1_3$};
(204,96)*\txt\tiny{$S^1_4$};
(219,102)*\txt\tiny{$S^1_5$};
(265,93)*\txt\scriptsize{$S^2$};
(308,78)*\txt\tiny{$S^3_1$};
(322,81)*\txt\tiny{$S^3_2$};
(339,84)*\txt\tiny{$S^2_3$};
(353,86)*\txt\tiny{$S^2_4$};
(367,88)*\txt\tiny{$S^2_5$};
(339,105)*\txt\scriptsize{$S^1$};
(339,124)*\txt\scriptsize{$S^0$};
(414,97)*\txt\scriptsize{$S^3$};
(414,117)*\txt\scriptsize{$S^2$};
(414,137)*\txt\scriptsize{$S^1$};
(457,92)*\txt\tiny{$S^4_1$};
(471,91)*\txt\tiny{$S^4_2$};
(457,112)*\txt\tiny{$S^3_1$};
(457,132)*\txt\tiny{$S^2_1$};
(471,131)*\txt\tiny{$S^1_2$};
(486,129)*\txt\tiny{$S^1_3$};
(501,128)*\txt\tiny{$S^1_4$};
(516,126)*\txt\tiny{$S^1_5$};
\end{xy}
\caption{Examples of type $C$ runs for two different speedups in the range $2 \leq s \leq 17/7$, namely at $2$ and $11/5$.}
\label{figure:cspeeds}
\end{figure}

\section{Speedup Performance for Windows in Set $W_3$}
\label{section:performance W_3}

We will now explore the speedup-performance trade-off for windows in $W_3$ for all speedups $1 \leq s \leq 6$.  For set $W_3$, our analysis must consider subsets $w_0$ through $w_{4r}$.  As before, we will assign a 1 for full coverage and a $1/2$ for half coverage of any subset.  Because of the coverage patterns of $B$ and $C$ runs, we will also assign values of $1/4$, $1/3$, $2/3$, and $3/4$ for corresponding proportions of coverage.  For the subsets for a given range of speedup values for $W_3$, the values are symmetrical around $w_{2r}$.  Thus, our tables and proofs will not list contributions for subset $w_i$ where $i > 2r$.

\subsection{Speedup $1 \leq s \leq 2$ for Windows in Set $W_3$}

For this analysis, we consider service runs $A$, $A^R$, $A_{r - k}$, $A_{r - k}^R$, $A_{2r - k}$, $A_{2r - k}^R$, $A_{3r - k}$, and $A_{3r - k}^R$, noting that $\lambda = 3$.  We will use two copies each of $A$ and $A^R$ and a single copy each of $A_{r - k}$, $A_{r - k}^R$, $A_{2r - k}$, $A_{2r - k}^R$, $A_{3r - k}$, and $A_{3r - k}^R$.  We have moved the generation of the tables and the case analysis needed to show the coverage to Appendix \ref{appendix:1<s<2 W_3}.

\begin{theorem}
For $1 \leq s \leq 2$, our algorithm finds a $10\gamma /(s+1)$-approximation to the repairman problem on windows in set $W_3$ in $O(\min\{r, m\}\Gamma(n))$ time.
\end{theorem}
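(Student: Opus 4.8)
The plan is to follow exactly the template established for the $W_2$ case in Section~\ref{subsection:1<s<2 W_2} and the theorems proven there, since the structure is identical. First I would fix the parametrization $s = (r+k)/r$ with $0 \leq k \leq r$, set $\lambda = 3$, and apply CREATE-TABLE-$\lambda$ to each of the eight runs $A$, $A^R$, $A_{r-k}$, $A_{r-k}^R$, $A_{2r-k}$, $A_{2r-k}^R$, $A_{3r-k}$, and $A_{3r-k}^R$. Each run has the type-$A$ coverage pattern of $r$ repetitions of $1$ followed by $k$ repetitions of $1/2$ (since $1 \leq s \leq 2$), so the functions $\mathcal{F}$ and $\mathcal{F}^R$ for each run are obtained by summing $r$ shifts of the pattern function $\mathcal{C}$ with the appropriate offset $\Delta \in \{0, r-k, 2r-k, 3r-k\}$. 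The shifts $\Delta = 3r-k$ are the new ingredient relative to $W_2$; they exist precisely to patch the coverage of the additional set $S^3$ (or the extended hole between $S^1$ and $S^2$) that only arises when a window fully contains three periods. Because the combined function $\mathcal{F}(i) + \mathcal{F}^R(i)$ is symmetric about $i = 2r$, I only need to tabulate contributions for $0 \leq i \leq 2r$.

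The combinatorial core, which I would relegate to the appendix as the authors did, is a coverage lemma analogous to Lemma~\ref{lemma:lowestinterval W_2}. The claim I would prove is that weighting $A$ and $A^R$ by $2$ and each of the six shifted runs by $1$, the yield over every subinterval $w_i$ is at least $2r + k$. As in the $W_2$ analysis, I expect to split into the two cases $k \leq r - k$ and $k \geq r - k$ (and possibly finer subcases dictated by where endpoints like $r-k$, $2r-k$, $3r-k$ fall relative to one another and to $2r$), build the combined-contribution tables for each pair of runs, and then verify interval by interval that the weighted sum never drops below $2r + k$. The endpoints should again be linear in $r$ and $k$, so every interval check reduces to a monotonicity argument using $0 \leq i$, $i \leq r$, or $r \geq k$, exactly the style of the four-line checks in Lemma~\ref{lemma:lowestinterval W_2}.

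Granting that coverage lemma, the theorem follows by a direct application of the Average Coverage Proposition, mirroring the $8\gamma/(s+1)$ theorem for $W_2$. I use two copies each of $A$ and $A^R$ and one copy each of the six shifts, for a total of $10$ runs, averaged over $r$ distinct sets of periods, giving $10r$ runs. With minimum yield $2r + k$ the fraction of optimal profit collected is
\[
\frac{2r + k}{10\gamma r} = \frac{(r+k) + r}{10\gamma r} = \frac{s + 1}{10\gamma},
\]
whose reciprocal $10\gamma/(s+1)$ is the claimed approximation ratio; the running time $O(\min\{r,m\}\Gamma(n))$ is inherited directly from SPEEDUP, since SPEEDUPW12 merely invokes it a constant number of times per trimming choice.

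The main obstacle I anticipate is establishing the coverage lemma with the correct run multiplicities. The weighting $(2,1,1,1,1,1,1,1)$ summing to $10$ is what forces the clean denominator $10r$ and the exact value $(s+1)/10\gamma$; any other balance of copies would fail to yield the uniform lower bound $2r+k$ across all intervals. Discovering and verifying that this particular weighting plugs every hole left by the spotty half-coverage of the middle sets $S^1$ and $S^2$ — now stretched across a larger index range because $\lambda = 3$ — is the delicate step, and it is exactly why the authors defer the table construction and interval case analysis to Appendix~\ref{appendix:1<s<2 W_3} rather than inlining it.
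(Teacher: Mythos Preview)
Your proposal is correct and follows essentially the same approach as the paper: the same ensemble of eight type-$A$ runs with shifts $0$, $r-k$, $2r-k$, $3r-k$ and their reverses, the same weighting of $2$ for $A$, $A^R$ and $1$ for the six shifted runs, the same coverage lemma asserting a minimum yield of $2r+k$ (proved in Appendix~\ref{appendix:1<s<2 W_3} via the two cases $k\leq r-k$ and $k\geq r-k$ exactly as you anticipate), and the same application of the Average Coverage Proposition over $10r$ runs. The only imprecision is that for $W_3$ the subsets run through $S^0,\dots,S^4$ rather than just $S^0,\dots,S^3$, but this does not affect your argument.
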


\begin{proof}
By Lemma \ref{lemma:lowestinterval W_3}, our analysis gives no yield less than $2r + k$.  Since two copies each of $A$ and $A^R$ and a single copy each of $A_{r - k}$, $A_{r - k}^R$, $A_{2r - k}$, $A_{2r - k}^R$, $A_{3r - k}$, and $A_{3r - k}^R$ are used, averaged over $r$ different sets of periods, the Average Coverage Proposition is applied over $10r$ runs.  Thus, the fraction of optimal profit obtained is at least ${(2r + k)/(10\gamma r)} = {((r + k) + r)/(10\gamma r)} = {(s + 1)/(10\gamma)}$.
\end{proof}

\subsection{Speedup $2 \leq s \leq 7/3$ for Windows in Set $W_3$}

For the range $2 \leq s \leq 7/3$, any rational speedup $s$ can be represented in the form $s = (2r + k)/r$ with integers $r \geq 1$ and $0 \leq k \leq r/3$.  For this analysis, we consider service runs $A$, $A^R$, $C_{(3r - k)/2}$, and $C_{(3r - k)/2}^R$, noting that $\lambda = 3$.  We will use a single copy each of $A$ and $A^R$ and two copies each of $C_{(3r - k)/2}$ and $C_{(3r - k)/2}^R$.  We have moved the  generation of the tables and the case analysis needed to show the coverage to Appendix \ref{appendix:2<s<7/3 W_3}.

\begin{theorem}
For $2 \leq s \leq 7/3$, algorithm SPEEDUPW12 finds a $6\gamma/s$-approximation to the repairman problem on windows in set $W_3$ in $O(\min\{r, m\}\Gamma(n))$ time.
\end{theorem}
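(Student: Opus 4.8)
The plan is to follow the template established by the preceding $W_2$ and $W_3$ theorems: construct symbolic coverage tables for the chosen ensemble via CREATE-TABLE-$\lambda$, prove a single lemma giving a uniform lower bound on the weighted combined coverage over all subsets $w_i$, and then invoke the Average Coverage Proposition (Proposition~\ref{proposition:averaging}). First I would fix the parametrization $s = (2r+k)/r$ with $r \ge 1$ and $0 \le k \le r/3$, so that $q = 2r+k$. Under this substitution the type~$C$ coverage pattern valid for $2 \le s \le 5/2$ specializes cleanly: since $q - 2r = k$ and $3r - q = r - k$, the pattern becomes $r$ repetitions of $1/4$, $k$ of $1/2$, $r$ of $3/4$, $k$ of $1/2$, $(r-k)$ of $3/4$, $k$ of $1/2$, and finally $r$ of $1/4$. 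I would also verify the consistency of the shift: because $C$ runs require $q + r = 3r + k$ to be even, the offset $\Delta = (3r-k)/2$ is an integer, so $C_{(3r-k)/2}$ and its reverse $C_{(3r-k)/2}^R$ are well defined.

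Next I would apply CREATE-TABLE-$\lambda$ with $\lambda = 3$, so that the relevant subsets are $w_0$ through $w_{4r}$ and the combined coverage is symmetric about $w_{2r}$; hence it suffices to treat $0 \le i \le 2r$. For $A$ and $A^R$ I would use the $s \ge 2$ type~$A$ pattern ($q-r$ repetitions of $1$ followed by $r$ repetitions of $1/2$) at offset $0$, and for the two $C$ runs I would use the specialized type~$C$ pattern at offset $(3r-k)/2$. Summing the resulting piecewise-linear $\mathcal{F}(i)+\mathcal{F}^R(i)$ contributions with the prescribed weights---one copy each of $A$ and $A^R$, two copies each of $C_{(3r-k)/2}$ and $C_{(3r-k)/2}^R$---produces one piecewise-linear yield function on $0 \le i \le 2r$. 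The lemma to be established, the $W_3$ analogue of Lemma~\ref{lemma:lowestinterval W_2}, asserts that this yield is at least $2r+k$ on every subinterval.

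I expect the proof of that lemma to be the main obstacle. The type~$C$ runs contribute the richer values $1/4$, $1/2$, and $3/4$, so after the windowed summation and the $(3r-k)/2$ shift there are many breakpoints, and their relative order depends on the sizes of $k$ versus $r-k$. The delicate point is aligning the $C$-run offset with the subset indices so that the forward contribution $\mathcal{F}$ and its reverse $\mathcal{F}^R$ interlock precisely over the ``valleys'' that $A$ and $A^R$ alone leave in sets $S^1$ and $S^2$; the two copies of each $C$ run are exactly what is needed to raise those valleys to the target floor of $2r+k$. I would partition $0 \le i \le 2r$ at these breakpoints and check the bound on each piece, as in Lemma~\ref{lemma:lowestinterval W_2} but with more intervals and boundary attention to the cases $k=0$ (i.e., $s=2$) and $k = r/3$ (i.e., $s = 7/3$). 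Because these verifications are lengthy yet routine once the tables are fixed, they belong in the appendix, as the surrounding text already anticipates.

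Granting the lemma, the conclusion is immediate. The ensemble comprises $1+1+2+2 = 6$ weighted runs, averaged over the $r$ distinct sets of periods, so Proposition~\ref{proposition:averaging} applies to $6r$ runs; with the implied factor of $\gamma$, the fraction of optimal profit obtained is at least $(2r+k)/(6\gamma r) = s/(6\gamma)$, which is a $6\gamma/s$-approximation. The running time $O(\min\{r,m\}\Gamma(n))$ follows from the cost of the SPEEDUP subroutine exactly as in the earlier theorems, since the only added work is evaluating the fixed ensemble over the at most $\min\{r,m\}$ relevant sets of periods.
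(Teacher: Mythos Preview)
Your proposal is correct and follows essentially the same approach as the paper: the same ensemble (one copy each of $A$, $A^R$ and two copies each of $C_{(3r-k)/2}$, $C_{(3r-k)/2}^R$ with $\lambda=3$), the same target lower bound of $2r+k$ on the weighted yield, and the same invocation of the Average Coverage Proposition over $6r$ runs. The only minor discrepancy is that the paper's case split in the lemma proof hinges on whether $5k \le r$ (equivalently, whether $k \le (r-3k)/2$) rather than on $k$ versus $r-k$, but this is a detail of the bookkeeping you already flag as routine.
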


\begin{proof}
By Lemma \ref{lemma:lowestinterval 2<s<7/3 W_3}, our analysis gives no yield less than $2r + k$.  Since 1 copy of each of $A$ and $A^R$ and 2 copies each of $C_{(3r - k)/2}$ and $C_{(3r - k)/2}^R$ were used, averaged over $r$ different sets of periods, the Average Coverage Proposition is applied over $6r$ runs.  Thus, the fraction of optimal profit obtained is at least ${ (2r + k)/(6\gamma  r) } = {s/(6\gamma) }$.
\end{proof}

\subsection{Speedup $7/3 \leq s \leq 17/7$ for Windows in Set $W_3$}

For the range $7/3 \leq s \leq 17/7$, any rational speedup $s$ can be represented in the form $s = (2r + k)/r$ with integers $r \geq 1$ and $r/3 \leq k \leq 3r/7$.  For this analysis, we consider service runs $A$, $A^R$, $C_{2r - 2k}$, and $C_{2r - 2k}^R$, noting that $\lambda = 3$.  We will use $k$ copies each of $A$ and $A^R$ and $r - k$ copies each of $C_{2r - 2k}$ and $C_{2r - 2k}^R$.  We have moved generation of the tables and the case analysis needed to show the coverage to Appendix \ref{appendix:7/3<s<17/7 W_3}.

\begin{theorem}
For $7/3 \leq s \leq 17/7$, algorithm SPEEDUPW12 finds an $8\gamma/(s^2 - 4s + 7)$-approximation to the repairman problem on windows in set $W_3$ in $O(\min\{r, m\}\Gamma(n))$ time.
\end{theorem}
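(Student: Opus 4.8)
The plan is to follow the template already established for the $W_2$ theorems and the earlier $W_3$ ranges: fix the parameterization, commit to a specific ensemble of runs with prescribed multiplicities, push the combinatorial heart of the argument into a coverage lemma proved in Appendix~\ref{appendix:7/3<s<17/7 W_3} via CREATE-TABLE-$\lambda$, and then convert the resulting per-subset yield bound into an approximation ratio through the Average Coverage Proposition. Concretely, I would write $s = (2r+k)/r$ with $r/3 \le k \le 3r/7$, set $\lambda = 3$, and use the ensemble consisting of $k$ copies each of $A$ and $A^R$ together with $r-k$ copies each of $C_{2r-2k}$ and $C_{2r-2k}^R$. Because $7/3 \le s \le 17/7$ lies inside the interval $2 < s \le 3$ on which type $C$ runs are defined (and indeed inside $2 \le s \le 5/2$, for which the seven-piece $C$-pattern was derived in Section~\ref{section:runs for W_3}), these runs are available, and the shift $\Delta = 2r-2k$ is chosen so that the $C$ runs supply coverage of the middle subsets $S^1$, $S^2$, $S^3$ that the $A$ runs, which cover the extreme subsets $S^0$ and $S^4$ well, leave spotty.

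First I would build the symbolic coverage tables. Using the type $C$ pattern of $r$ repetitions of $1/4$, $q-2r$ of $1/2$, $r$ of $3/4$, $q-2r$ of $1/2$, $3r-q$ of $3/4$, $q-2r$ of $1/2$, and $r$ of $1/4$, I would feed it, shifted by $\Delta = 2r-2k$, into CREATE-TABLE-$3$ to obtain $\mathcal{F}$ and $\mathcal{F}^R$ for the $C$ runs, and likewise the simpler period-two pattern for the $A$ runs. Summing each function over its $r$ shifts and adding the reversed partner yields the combined contribution of $C_{2r-2k}$ with $C_{2r-2k}^R$ and of $A$ with $A^R$, each a piecewise-linear function of the subset index $i$ on the symmetric range $0 \le i \le 2r$. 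Forming the weighted total ($k$ copies of the combined $A$-and-$A^R$ contribution plus $r-k$ copies of the combined $C$-and-$C^R$ contribution) and minimizing it over all subintervals is exactly the content of the appendix lemma; I expect its minimum to equal $(3r^2+k^2)/4$, the quadratic dependence on $k$ arising precisely because the multiplicities $k$ and $r-k$ themselves now vary with the parameters, and the quarter-granularity reflecting the quarter-valued coverage of the $C$ runs.

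Granting that lemma, the theorem is a short calculation. The ensemble contains $k+k+(r-k)+(r-k)=2r$ runs per set of periods, so over the $r$ sets of periods the Average Coverage Proposition applies to $2r^2$ runs. Dividing the yield by this count and by the graph factor $\gamma$ gives a guaranteed fraction of optimal profit of
\[
\frac{(3r^2+k^2)/4}{2r^2\,\gamma} \;=\; \frac{3r^2+k^2}{8r^2\,\gamma} \;=\; \frac{s^2-4s+7}{8\gamma},
\]
where the last equality uses $s^2-4s+7 = (3r^2+k^2)/r^2$; inverting gives the claimed $8\gamma/(s^2-4s+7)$-approximation, with running time inherited from the invocations of SPEEDUP over the $r$ sets of periods, namely $O(\min\{r,m\}\,\Gamma(n))$. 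As in the other ranges, the rational-$s$ analysis extends to irrational $s$ by piecewise-smoothness and continuity, and one verifies agreement with the neighbouring range at $s = 7/3$, where $6\gamma/s = 18\gamma/7$ coincides with $8\gamma/(s^2-4s+7)$.

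The hard part will be the appendix lemma, not the bookkeeping above. Unlike the earlier $W_3$ analyses, whose run multiplicities were fixed small constants, here the weights $k$ and $r-k$ depend on the speedup, so the weighted coverage at each subset is a genuinely two-parameter quantity, and showing that its minimum over all subintervals is uniformly at least $(3r^2+k^2)/4$ requires tracking how the balance between the $A$ and $C$ contributions shifts across $r/3 \le k \le 3r/7$. Compounding this, the $C$ pattern has seven linear pieces with a period-four repeat, so in combining $\mathcal{F}$ with $\mathcal{F}^R$ several breakpoints must be sorted; wherever two breakpoints cannot be ordered uniformly over the admissible $(r,k)$ I would subdivide the range of $s$, as described in Section~\ref{section:ensemble}, and argue each piece separately. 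I would also need to take a representation of $s$ with $q+r$ even, as required for the $C$-pattern to consist of whole subsets.
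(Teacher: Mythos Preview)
Your proposal is correct and follows essentially the same approach as the paper: the same parameterization $s=(2r+k)/r$ with $r/3\le k\le 3r/7$ and $\lambda=3$, the same ensemble of $k$ copies of $A$, $A^R$ and $r-k$ copies of $C_{2r-2k}$, $C_{2r-2k}^R$, the same appendix lemma giving a minimum yield of $3r^2/4+k^2/4$, and the same arithmetic through the Average Coverage Proposition over $2r^2$ runs. Your additional remarks about the $q+r$-even requirement, continuity for irrational $s$, and the endpoint check at $s=7/3$ are accurate and consistent with the paper's framework.
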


\begin{proof}
By Lemma \ref{lemma:lowestinterval 7/3<s<17/7 W_3}, our analysis gives no yield less than $3r^2/4 + k^2/4$.  Since $k$ copies of each of $A$ and $A^R$ and $r - k$ copies each of $C_{2r - 2k}$ and $C_{2r - 2k}^R$ were used, averaged over $r$ different sets of periods, the Average Coverage Proposition is applied over $2r^2$ runs.  Thus, the fraction of optimal profit obtained is at least ${ (3r^2/4 + k^2/4)/(2\gamma r^2) } = {((2r + k)^2 - 4r(2r + k) + 7r^2)/(8\gamma r^2)} = ( s^2 - 4s + 7)/(8\gamma)$.
\end{proof}

\subsection{Speedup $17/7 \leq s \leq 3$ for Windows in Set $W_3$}

For the range $17/7 \leq s \leq 3$, any rational speedup $s$ can be represented in the form $s = (2r + k)/r$ with integers $r \geq 1$ and $3r/7 \leq k \leq r$.  For this analysis, we consider service runs $A$, $A^R$, $B_{r - k + 1}$, and $B_{r - k + 1}^R$, noting that $\lambda = 3$. We will use $6r - 4k$ copies each of $A$ and $A^R$ and $3r - 3k$ copies each of $B_{r - k + 1}$ and $B_{r - k + 1}^R$.  We have moved the generation of the tables and the case analysis needed to show the coverage to Appendix \ref{appendix:17/7<s<3 W_3}.

\begin{theorem}
For $17/7 \leq s \leq 3$, algorithm SPEEDUPW12 finds a $\gamma(23 - 7s)/(1 + 3s - s^2)$-approximation to the repairman problem on windows in set $W_3$ in $O(\min\{r, m\}\Gamma(n))$ time.
\end{theorem}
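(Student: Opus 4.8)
The plan is to reuse the template established by the preceding theorems: pin down a lower bound on the weighted coverage yield that holds uniformly across all subintervals, count the runs in the ensemble, invoke the Average Coverage Proposition, and finally collapse the resulting fraction into the stated rational function of $s$. Writing $s = (2r+k)/r$ with $3r/7 \le k \le r$, the ensemble in play is $A$, $A^R$, $B_{r-k+1}$, and $B_{r-k+1}^R$ with $\lambda = 3$, weighted by $6r-4k$ on each of $A$, $A^R$ and by $3r-3k$ on each of $B_{r-k+1}$, $B_{r-k+1}^R$. The floor I must establish is forced by the target: dividing it by the total run count has to reproduce $(1+3s-s^2)/(\gamma(23-7s))$, which reverse-engineers to a per-interval yield of $6r^2 - 2rk - 2k^2$.

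The first, and main, task is the coverage lemma deferred to Appendix~\ref{appendix:17/7<s<3 W_3}, namely that under these weights the yield at every subinterval $w_i$ is at least $6r^2 - 2rk - 2k^2$. To obtain it I would run CREATE-TABLE-$\lambda$ with $\lambda = 3$ on both the type $A$ coverage pattern and the type $B$ pattern of Section~\ref{section:runs for W_3}; the latter supplies the fractional values $1/3$ and $2/3$, which is the first place such values enter the analysis for $W_3$. For each run type this yields the piecewise-linear functions $\mathcal{F}$ and $\mathcal{F}^R$ on $0 \le i \le 4r$; forming the symmetric sum $\mathcal{F}+\mathcal{F}^R$ and using symmetry about $w_{2r}$ from Section~\ref{section:performance W_3} restricts attention to $0 \le i \le 2r$. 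Because the offset $r-k+1$ and the breakpoints of the $B$-pattern both depend on $k$, I expect to subdivide $17/7 \le s \le 3$ wherever two endpoints of the piecewise descriptions cannot be ordered, then combine ranges by summation, exactly as CREATE-TABLE-$\lambda$ prescribes.

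With the symbolic tables in hand, the lemma is an interval-by-interval verification: I would scale the combined $A$/$A^R$ contributions by $6r-4k$ and the combined $B_{r-k+1}$/$B_{r-k+1}^R$ contributions by $3r-3k$ and check that the minimum over $i$ of the weighted sum equals $6r^2 - 2rk - 2k^2$. The hard part of the whole argument is precisely this step: the weights are not constants but linear in $r$ and $k$, and they must be chosen so that the several per-interval minima all bottom out at the same quadratic floor; once the weights are fixed, confirming the floor reduces to bounding each linear piece on its interval, much as in Lemma~\ref{lemma:lowestinterval W_2}, though with substantially more pieces owing to the $B$-run fractions and the $k$-dependent subdivisions.

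The remaining steps are routine. Averaged over the $r$ sets of periods the ensemble contributes $r\bigl(2(6r-4k)+2(3r-3k)\bigr) = 18r^2 - 14rk$ runs, so by the Average Coverage Proposition the fraction of optimal profit is at least $(6r^2 - 2rk - 2k^2)/\bigl(\gamma(18r^2 - 14rk)\bigr)$. Substituting $k = r(s-2)$ factors the numerator as $2r^2(1 + 3s - s^2)$ and the denominator as $2r^2\gamma(23 - 7s)$, so the fraction simplifies to $(1+3s-s^2)/\bigl(\gamma(23-7s)\bigr)$ and the approximation ratio is its reciprocal times $\gamma$, namely $\gamma(23-7s)/(1+3s-s^2)$; the time bound $O(\min\{r,m\}\Gamma(n))$ is inherited from SPEEDUP as in every earlier case. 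As sanity checks, at $s=17/7$ (where $k=3r/7$) the ratio equals $98\gamma/39$, agreeing with the $8\gamma/(s^2-4s+7)$ bound of the adjacent subrange, and at $s=3$ (where $k=r$) the $B$ runs drop out, leaving only $A$ runs and a ratio of $2\gamma$.
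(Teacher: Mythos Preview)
Your proposal is correct and follows essentially the same approach as the paper: the same ensemble and weights, the same yield floor $6r^2 - 2rk - 2k^2$ established via the appendix lemma, the same run count $18r^2 - 14rk$, and the same algebraic collapse to $(1+3s-s^2)/(\gamma(23-7s))$. Your description of how the lemma is obtained (CREATE-TABLE-$\lambda$ with the type $B$ pattern, symmetry about $w_{2r}$, subdividing the speedup range where breakpoints cross) also matches the paper's actual appendix, which splits at $s=5/2$ into two tables and then verifies each piecewise-linear interval against the quadratic floor.
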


\begin{proof}
By Lemma \ref{lemma:lowestinterval 17/7<s<3 W_3}, our analysis gives no yield less than $6r^2 - 2rk - 2k^2$.  Since $6r - 4k$ copies of each of $A$ and $A^R$ and $3r - 3k$ copies each of $B_{r - k + 1}$ and $B_{r - k + 1}^R$ were used, averaged over $r$ different sets of periods, the Average Coverage Proposition is applied over $18r^2 - 14rk$ runs.  Thus, the fraction of optimal profit obtained is at least ${ (6r^2 - 2rk - 2k^2)/(\gamma (18r^2 - 14rk)) } = {(r^2 + 3r(2r + k) - (2r + k)^2)/(\gamma r(23r - 7(2r + k)))} = (1 + 3 s - s^2)/(\gamma(23 - 7s))$.
\end{proof}

By Observation \ref{observation:speedup}, the $2\gamma$-approximation for $s = 3$ implies at most a constant $2\gamma$-approximation to the repairman problem on windows in set $W_3$ when $3 \leq s \leq 4$.

\subsection{Speedup $4 \leq s \leq 6$ for Windows in Set $W_3$}

For set $W_3$ with $4 \leq s \leq 6$ where $s = q/r$, we consider runs $A$ and $A^R$, noting that $\lambda = 3$.
When $s < 5$, run $A$ and its shift give full coverage in subsets of $S^0$, $S^1$, and $S^2$, and partial coverage in subsets of $S^3$ and $S^4$, while run $A^R$ and its shift give full coverage in subsets of $S^4$, $S^3$, and $S^2$ and partial coverage in subsets of $S^1$ and $S^0$.  When $s \geq 5$, runs $A$ and $\vec{A}$ give full coverage in subsets of $S^0$, $S^1$, $S^2$, and $S^3$ and partial coverage in subsets of $S^4$, while runs $A^R$ and $\vec{A}^R$ give full coverage in subsets of $S^4$, $S^3$, $S^2$, and $S^1$ and partial coverage in subsets of $S^1$ and $S^0$.  Since the contributions of the $A$ and $A^R$ runs and their shifted versions tend to balance each other, we can analyze this balance between the two over all possible sets of periods to find a lower bound on the total profit after trimming.

\begin{theorem}
For $4 \leq s \leq 6$, our algorithm finds a $4\gamma/(s - 2)$-approximation for windows $W_3$ in $O(\min\{r, m\}\Gamma(n))$ time.
\end{theorem}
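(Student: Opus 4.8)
The plan is to follow exactly the template established in the preceding theorems of this section, since the statement for $4 \leq s \leq 6$ is structurally identical to the earlier $W_2$ results for large speedup (for instance, the $3 \leq s \leq 5$ case giving a $4\gamma/(s-1)$-approximation). Writing $s = q/r$ with $\lambda = 3$, I would consider only the runs $A$, $A^R$, and their shifted versions $\vec{A}$, $\vec{A}^R$. As the preamble to this subsection already indicates, for $4 \leq s < 5$ the forward runs $A, \vec{A}$ fully cover subsets of $S^0, S^1, S^2$ and partially cover $S^3, S^4$, while the reversed runs cover $S^4, S^3, S^2$ fully and $S^1, S^0$ partially; for $5 \leq s \leq 6$ the fully-covered range extends by one more level of $S$. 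The first step is to translate this coverage description into the function $\mathcal{C}$ and then invoke CREATE-TABLE-$\lambda$ with $\lambda = 3$ to obtain $\mathcal{F}$ and $\mathcal{F}^R$ on the subset indices $w_0$ through $w_{4r}$.

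Next I would exploit symmetry. As stated at the start of Section~\ref{section:performance W_3}, the combined coverage is symmetric about $w_{2r}$, so it suffices to bound $\mathcal{F}(i) + \mathcal{F}^R(i)$ for $0 \leq i \leq 2r$. Following the pattern of the $W_2$ proof for $3 \leq s \leq 5$, I expect the forward contribution to be a constant $r$ over an initial range $0 \leq i \leq q - 3r$ (the subsets fully covered every period) and then to decrease by $1/2$ per unit $i$, while the reversed contribution mirrors this, increasing as $i$ grows. The key observation is that where one of the two contributions is decreasing, the other is constant or increasing, so the combined yield attains its minimum at $w_0$, where the value is $\mathcal{F}(0) + \mathcal{F}^R(0)$. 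By analogy with the $4\gamma/(s-1)$ computation, I anticipate the worst-case yield at $w_0$ will be $q/2 - r$, corresponding to $(s-2)/2$ once normalized, which is consistent with the target $4\gamma/(s-2)$ ratio.

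With the lower bound on yield in hand, the final step is a direct application of the Average Coverage Proposition. Since we use exactly two runs ($A$ and $A^R$, balanced by their shifts $\vec{A}, \vec{A}^R$) averaged over $r$ different sets of periods, the proposition is applied over $2r$ runs. The fraction of optimal profit collected is therefore $(q/2 - r) \cdot 1/(2\gamma r) = q/(4\gamma r) - 2r/(4\gamma r) = (s - 2)/(4\gamma)$, whose reciprocal is the claimed $4\gamma/(s-2)$ approximation ratio; the running time of $O(\min\{r,m\}\Gamma(n))$ is inherited directly from the underlying SPEEDUP subroutine, as in every other theorem of this section.

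The main obstacle I foresee is verifying that the combined coverage genuinely achieves its minimum at the boundary $w_0$ uniformly across the whole range $4 \leq s \leq 6$, rather than dipping lower somewhere in the interior. In particular, the crossover at $s = 5$ — where the fully-covered block of subsets for the forward runs grows by one $S$-level — could in principle change which interval is binding, so I would want to confirm that the piecewise-linear pieces of $\mathcal{F} + \mathcal{F}^R$ remain monotone enough that no intermediate $w_i$ falls below $q/2 - r$. Because the qualitative coverage behavior described before the theorem is uniform on each side of $s = 5$ and the extra fully-covered level only raises interior values, I expect this to go through with the same straightforward interval-by-interval check used in the $W_2$ analysis, but it is the step where a careless sign or boundary error would most easily creep in.
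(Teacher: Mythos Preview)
Your proposal is correct and follows essentially the same approach as the paper: use only $A$, $A^R$ and their shifts with $\lambda = 3$, argue that the combined contribution attains its minimum at $w_0$ with value $q/2 - r$, and apply the Average Coverage Proposition over $2r$ runs to obtain $(s-2)/(4\gamma)$. One small slip: the range over which the forward contribution equals $r$ is $0 \leq i \leq q - 2r$, not $q - 3r$ --- this threshold comes from the two-period pattern of type-$A$ runs and does not shift with $\lambda$; your anticipated minimum $q/2 - r$ and the rest of the argument are unaffected.
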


\begin{proof}
A 1 is assigned for any subset which is covered every period, and a $1/2$ is assigned for any subset covered every other period.  For runs $A$ and $\vec{A}$, $w_i$ earns a 1 for all $r$ sets of periods where $0 \leq i \leq q - 2r$, giving a total of $r$ for each such $i$.  For each $i > q - 2r$, the total decreases by $1/2$ from the total for $i - 1$.   For runs $A^R$ and $\vec{A}^R$, $w_i$ gets 1 for all $r$ sets of periods where $4r - (q - 2r) \leq i \leq 4r$, giving a total of $r$ for each such $i$.  For each $i < 4r - (q - 2r)$, the total decreases by $1/2$ from the total for $i + 1$.

Now, we take the total over all sets of periods for both runs.  For runs $A$ and $\vec{A}$, we get a total of $r$ for $w_0$.  For runs $A^R$ and $\vec{A}^R$, we get a total of $r - (1/2)(4r - (q - 2r)) = q/2 - 2r$.  Summing these together, we get a yield of $q/2 - r$ for $w_0$.  By symmetry, the total for $w_{4r}$ is also $q/2 - r$.  Contributions from $A$ and $\vec{A}$ are constant and contributions from $A^R$ and $\vec{A}^R$ only increase or stay constant for $0 < i \leq r$.  Contributions for $w_i$ for all runs sum to $2r$ for $r < i < 3r$.   Thus, the yield for all other $w_i$ in all cases is at least $q/2 - r$.  Since $r$ sets of periods for the two pairs of runs cost a total of $2r$ sets of periods to average over, the fraction of profit after trimming is at least $(q/2 - r)\cdot 1/(2r) = q/(4r) - 2r/(4r) = (s - 2)/4$.  Multiplying the reciprocal by $\gamma$ gives a $4\gamma/(s - 2)$-approximation.
\end{proof}

\section{Performance of SPEEDUPW12}
\label{section:performance of speedupw12}

Now that we have characterized the performance of our speedup algorithms on windows in sets $W_1$, $W_2$, and $W_3$, we bound the performance of SPEEDUPW12 by combining our results as follows.
Let $R^*$ be an optimal service run for a repairman instance with time window lengths from 1 up to 2.
Consider a new set of periods with duration $.25$. Partition windows into the sets $H_3$, $H_4$, $H_5$, $H_6$, and $H_7$, such that for $i = 3, 4, 5, 6, 7$, a window is put in $H_i$ if it completely contains exactly $i$ of these new periods of length .25.  Let the total fraction of profit in an optimal solution coming from windows in set $H_\ell$ be $h_\ell$.  Thus, $\sum_{\ell = 3}^7 h_\ell = 1$.

We use these subintervals when analyzing the performance of the algorithm run on periods of length $.5$, $.75$, and $1$.  Consider set $H_\ell$ of windows, $\ell = 3, 4, 5, 6, 7$ and period length $j/4$, for $j = 1, 2, 3, 4$.
The number of full subintervals of a window in $H_\ell$ that are covered when the period length is $j/4$ is either
$\lfloor (\ell - j + 1)/j\rfloor$ or $\lceil (\ell - j + 1)/j\rceil$
depending on which set of periods is used.

For $\ell = 1, 2, 3$, let $f_\ell(s)/\gamma$ be the fraction of optimal profit earned for SPEEDUPW12
applied to requests with windows in $W_\ell$.  Recall that coverage of windows in $W_\ell$ is defined using period size .5.  To apply this coverage to the three period sizes used in the algorithm, we establish:

\begin{lemma}
\label{lemma:phase1}
The first phase of SPEEDUPW12 yields a run $\hat{R}$ with 
$\gamma\cdot$profit$(\hat{R})/$profit$(R^*) = $\\
\hspace*{.2in}$\rho_1 \geq f_1(s) h_3 + \left( {1 \over 2}f_1(s)+{1 \over 2}f_2(s)\right) h_4 + f_2(s) h_5 + \left( {1 \over 2}f_2(s)+{1 \over 2}f_3(s)\right) h_6 + f_3(s) h_7$.
\end{lemma}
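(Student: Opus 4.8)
The plan is to apply the Average Coverage Proposition (Proposition~\ref{proposition:averaging}) to the combined ensemble $U$ of all service runs generated during Phase~1, taken over both starting positions $i/4$ for $i = 0, 1$, together with the internal hop-shifted period sets and trimming choices invoked by each call to SPEEDUP. Since Phase~1 uses period length $\alpha = .5$ --- exactly the period length on which the $W_\ell$ coverage bounds $f_\ell(s)/\gamma$ were established --- the per-class results apply directly within each starting position and no rescaling of the coverage analysis is needed (this is what makes Phase~1 the clean case). Because the algorithm retains the best run found, its Phase-1 output $\hat R$ is at least as profitable as the best run in $U$, so it suffices to lower bound the average coverage $cover_U$ of the set of all requests serviced by $R^*$.

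First I would pin down how a window in $H_\ell$ is classified into $W_1, W_2, W_3$ as the period start ranges over the two Phase-1 positions. For $\alpha = .5$, i.e.\ $j = 2$, the number of full $.5$-subintervals a window in $H_\ell$ contains is $\lfloor(\ell-1)/2\rfloor$ or $\lceil(\ell-1)/2\rceil$. Evaluating these gives $H_3 \to 1$, $H_5 \to 2$, $H_7 \to 3$ independently of the start, so these windows always lie in $W_1$, $W_2$, $W_3$ respectively; whereas $H_4 \to \{1, 2\}$ and $H_6 \to \{2, 3\}$, so these windows lie in the smaller class for some starts and the larger class for others. The crux is to show that, over the two Phase-1 positions (spaced $.25$, i.e.\ one $.25$-cell apart), each $H_4$ window lands in $W_1$ for exactly one position and $W_2$ for the other, and likewise each $H_6$ window splits evenly between $W_2$ and $W_3$. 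I would argue this by a parity argument on the indices of the $.25$-cells a window fully occupies: grouping cells into $.5$-periods has two phases (boundaries at even- versus odd-indexed cells), and for a block of $\ell$ consecutive fully-contained cells one phase captures $\lceil(\ell-1)/2\rceil$ full periods while the other captures $\lfloor(\ell-1)/2\rfloor$, regardless of where the block sits. This yields the exact $\tfrac12,\tfrac12$ weighting for $H_4$ and $H_6$.

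With the classification fixed, I would assemble the averaging. For each starting position, SPEEDUP's internal ensemble covers the requests of any window it classifies into $W_\ell$ at a rate of at least $f_\ell(s)/\gamma$; crucially, the $f_\ell$ bounds hold for every subset-interval $w_i$ (as in Lemma~\ref{lemma:lowestinterval W_2}), hence for the whole profit mass of those windows and not merely in the worst case. Since each starting position contributes the same run multiplicities to $U$, the combined average coverage of a fixed window's requests is the simple average of its per-position coverages. Thus $H_3$ requests are covered at rate $\geq f_1(s)/\gamma$, $H_5$ at $\geq f_2(s)/\gamma$, and $H_7$ at $\geq f_3(s)/\gamma$, while the even split gives $H_4$ requests $\geq (\tfrac12 f_1(s) + \tfrac12 f_2(s))/\gamma$ and $H_6$ requests $\geq (\tfrac12 f_2(s) + \tfrac12 f_3(s))/\gamma$. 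Taking the profit-weighted average over all requests and recalling that $h_\ell$ is the fraction of optimal profit from $H_\ell$ windows with $\sum_\ell h_\ell = 1$, the quantity $\mu := cover_U(\text{all requests serviced by } R^*)$ is at least $\big(f_1 h_3 + (\tfrac12 f_1 + \tfrac12 f_2) h_4 + f_2 h_5 + (\tfrac12 f_2 + \tfrac12 f_3) h_6 + f_3 h_7\big)/\gamma$.

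Finally, applying Proposition~\ref{proposition:averaging} with the single set $S_1$ equal to all requests serviced by $R^*$ (so that the minimum coverage is just $\mu = cover_U(S_1)$) produces a run $\hat R \in U$ with $\text{profit}(\hat R) \geq \mu \cdot \text{profit}(R^*)$; multiplying through by $\gamma$ gives $\rho_1 = \gamma\cdot\text{profit}(\hat R)/\text{profit}(R^*) \geq \gamma\mu$, and substituting the bound on $\mu$ (which clears the $1/\gamma$) yields exactly the claimed inequality. I expect the main obstacle to be the even-split claim for $H_4$ and $H_6$: one must verify the $\tfrac12,\tfrac12$ weighting holds uniformly over all such windows irrespective of their alignment to the $.25$-grid, and then dovetail this coarse starting-position averaging with the finer hop-shift averaging already internal to SPEEDUP so that the per-class bounds $f_\ell(s)/\gamma$ combine linearly into a profit-weighted average rather than collapsing to a worst-case minimum over classes.
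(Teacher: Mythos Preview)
Your proposal is correct and follows essentially the same approach as the paper: classify how windows in each $H_\ell$ fall into $W_1$, $W_2$, or $W_3$ under the two Phase-1 starting positions, observe that $H_3,H_5,H_7$ land in $W_1,W_2,W_3$ respectively under both while $H_4$ and $H_6$ split evenly between adjacent classes, and then average the per-class coverage bounds over the two starting positions. The paper's proof is extremely terse (it simply asserts the contributions per set of periods without justification), whereas you supply the parity argument for the even split and make explicit the appeal to the Average Coverage Proposition; your concern about dovetailing the coarse starting-position average with SPEEDUP's internal hop-shift ensemble is well-placed but is exactly what the paper leaves implicit.
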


\begin{proof}
Windows from $H_5$ contribute $f_1(s)h_3$ in both sets of periods.  Windows from $H_6$ contribute $f_1(s)h_4$ in one set of periods and $f_2(s)h_4$ in the other.  Windows from $H_7$ contribute $f_2(s)h_5$ in both sets of periods.  Windows from $H_8$ windows contribute $f_2(s)h_6$ in one set of periods and $f_3(s)h_6$ in the other.  Finally, windows from set $H_9$ windows contribute $f_3(s)h_7$ in both sets of periods.
\end{proof}

\begin{lemma}
\label{lemma:phase2}
The second phase of SPEEDUPW12 yields a run $\hat{R}$ with $\gamma\cdot$profit$(\hat{R})/$profit$(R^*) = $\\
\hspace*{.2in}$\rho_2 \geq {1 \over 3}f_1(s)h_3 + {2 \over 3}f_1(s)h_4 + f_1(s)h_5 + \left( {2 \over 3}f_1(s)+{1 \over 3}f_2(s)
\right) h_6 + \left( {1 \over 3}f_1(s)+{2 \over 3}f_2(s)\right) h_7$.
\end{lemma}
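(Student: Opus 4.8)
The plan is to mirror the structure of the proof of Lemma~\ref{lemma:phase1}, but now for period length $\alpha = 3/4$ together with the three starting offsets $0$, $1/4$, and $2/4$ used in Phase~2. First I would record the two ingredients that make the averaging work. Since Phase~2 retains the best run over all three offsets and all trimmings, the normalized profit $\rho_2$ of the returned run $\hat R$ is at least the average, over the three offsets, of the normalized coverage attainable at that offset; this is the same ``the best beats the mean'' principle that underlies Lemma~\ref{lemma:phase1}. For a fixed offset, after trimming every surviving window has the single common length $3/4$, so the SPEEDUP subroutine applies directly, and by the per-class results (Section~\ref{section:performance W_2} for $W_2$ and the unit-window analysis of \cite{Frederickson5} for $W_1$) a window landing in $W_\ell$ is covered with fraction $f_\ell(s)/\gamma$, while a window containing no full period vanishes under trimming and contributes $0$. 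Note that with $\alpha=3/4$ no length-$(1,2)$ window can contain more than two full periods, so only $W_1$ and $W_2$ can occur, which is why no $f_3(s)$ term appears.

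The core of the argument is a counting step that determines, for each $H_\ell$, how the three offsets distribute its windows among $W_1$, $W_2$, and the vanishing case. A window in $H_\ell$ fully contains exactly $\ell$ of the length-$1/4$ cells, say cells $a, a+1, \dots, a+\ell-1$, and only partially overlaps cells $a-1$ and $a+\ell$. A length-$3/4$ period spans three consecutive cells, and the three Phase-2 offsets shift the period grid by $0$, $1$, and $2$ cells, hence through all three residues modulo $3$. The number of full periods an offset places inside $\{a,\dots,a+\ell-1\}$ therefore equals the number of grid starts lying in the integer interval $[a,\,a+\ell-3]$, which contains $\ell-2$ integers; counting these over the three residue classes produces exactly the floor/ceil split $\lfloor(\ell-2)/3\rfloor$ versus $\lceil(\ell-2)/3\rceil$ quoted before the lemma (with $j=3$), the number of offsets attaining the ceil value being $(\ell-2)\bmod 3$.

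Carrying this out case by case then yields the distribution across the three offsets: $H_3$ lands in $W_1$ once and vanishes twice; $H_4$ lands in $W_1$ twice and vanishes once; $H_5$ lands in $W_1$ all three times; $H_6$ lands in $W_2$ once and $W_1$ twice; and $H_7$ lands in $W_2$ twice and $W_1$ once. Averaging the coverages $f_\ell(s)/\gamma$ over the three offsets with these multiplicities, weighting each $H_\ell$ by its optimal-profit share $h_\ell$, and summing over $\ell$ gives the coefficients in the statement, namely $\tfrac13 f_1(s)$ on $h_3$, $\tfrac23 f_1(s)$ on $h_4$, $f_1(s)$ on $h_5$, $(\tfrac23 f_1(s)+\tfrac13 f_2(s))$ on $h_6$, and $(\tfrac13 f_1(s)+\tfrac23 f_2(s))$ on $h_7$; since $\hat R$ beats this mean, the claimed lower bound on $\rho_2$ follows.

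I expect the main obstacle to be the per-offset coverage claim rather than the residue bookkeeping: one must ensure that a \emph{single} SPEEDUP run simultaneously attains fraction $f_1(s)/\gamma$ on the $W_1$ portion and $f_2(s)/\gamma$ on the $W_2$ portion of the same trimmed instance, i.e.\ that the ensembles of runs underlying the $W_1$ and $W_2$ theorems can be driven concurrently against the same optimal run $R^*$ and their coverage guarantees added over disjoint request sets via Proposition~\ref{proposition:averaging}. The counting of full periods across residues modulo $3$ is mechanical; its only delicate point is correctly treating the two partially overlapped boundary cells so that the endpoints of $[a,\,a+\ell-3]$ are tallied exactly as in the floor/ceil formula.
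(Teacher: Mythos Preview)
Your proposal is correct and follows essentially the same route as the paper: for each class $H_\ell$ you count, across the three offsets modulo $3$, how many place the window in $W_1$, in $W_2$, or nowhere, and then average to read off the coefficients. Your residue-based bookkeeping is more explicit than the paper's terse listing (which, incidentally, mislabels the classes $H_5,\dots,H_9$ instead of $H_3,\dots,H_7$), and the concern you flag about a single SPEEDUP run simultaneously meeting the $f_1$ and $f_2$ guarantees is real but is equally implicit in the paper's own argument.
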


\begin{proof}
Windows from $H_5$ contribute ${1 \over 3}f_1(s)h_3$ in one set of periods and nothing in the other two.  Windows from $H_6$ contribute ${1 \over 3}f_1(s)h_4$ in two sets of periods and nothing in the other one.  Window from $H_7$ contribute ${1 \over 3}f_1(s)h_5$ in all three sets of periods.  Windows from $H_8$ contribute ${1 \over 3}f_1(s)h_6$ in two sets of periods and ${1 \over 3}f_2(s)h_6$ in the other one.  Windows from $H_9$ contribute ${1 \over 3}f_1(s)h_7$ in one set of periods and ${1 \over 3}f_2(s)h_7$ in the other two.\end{proof}

\begin{lemma}
\label{lemma:phase3}
The third phase of SPEEDUPW12 yields a run $\hat{R}$ with $\gamma\cdot$profit$(\hat{R})/$profit$(R^*) = $\\
\hspace*{.2in}$\rho_3 \geq {1 \over 4}f_1(s)h_4 + {1 \over 2}f_1(s)h_5 + {3 \over 4}f_1(s)h_6 + f_1(s)h_7$.
\end{lemma}

\begin{proof}
Windows from $H_5$ contribute nothing in all four sets of periods.  Windows from  $H_6$ contribute ${1 \over 4}f_1(s)h_4$ in one set of periods and nothing in the other three.   Windows from $H_7$ contribute ${1 \over 4}f_1(s)h_5$ in two sets of periods and nothing in the other two.  Windows from $H_8$ contribute ${1 \over 4}f_1(s)h_6$ in three sets of periods and nothing in the other one.  Windows from $H_9$ contribute ${1 \over 4}f_1(s)h_7$ in all four sets of periods.\end{proof}

From Lemmas \ref{lemma:phase1}, \ref{lemma:phase2}, and \ref{lemma:phase3}, we isolate the coefficients $b_\ell$ of the variables $h_\ell$, for $\ell = 3$, 4, 5, 6, 7.  Weighting them by $x$, $y$, and $z$ to correspond to those lemmas, respectively, leads to the following definitions of five functions of $x$, $y$, $z$, and $s$.
\begin{eqnarray*}
b_3 & = & f_1(s)x + {1 \over 3}f_1(s)y\\
b_4 & = & \left( {1 \over 2}f_1(s)+{1 \over 2}f_2(s)\right) x + {2 \over 3}f_1(s)y + {1 \over 4}f_1(s)z\\
b_5 & = & f_2(s)x + f_1(s)y + {1 \over 2}f_1(s)z\\
b_6 & = & \left( {1 \over 2}f_2(s)+{1 \over 2}f_3(s)\right) x + \left( {2 \over 3}f_1(s)+{1 \over 3}f_2(s)\right) y + {3 \over 4}f_1(s)z\\
b_7 & = & f_3(s)x + \left( {1 \over 3}f_1(s)+{2 \over 3}f_2(s)\right) y + f_1(s)z
\end{eqnarray*}

\begin{theorem}
In $O(\min \{r,m\}\Gamma(n))$ time,
SPEEDUPW12 finds a run $\hat{R}$ such that $\gamma\cdot$profit$(\hat{R})$ \\
$/($profit$(R^*)) \geq \max \{~\rho~|~\rho \leq b_\ell$ for $\ell = 3,4,5,6,7$, $x + y + z \leq 1$, $x \geq 0$, $y \geq 0$, $z \geq 0$ $\}$.
\label{thm:LP}
\end{theorem}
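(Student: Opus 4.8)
The plan is to exploit the single structural fact that SPEEDUPW12 retains the best run encountered across all three phases, so that profit$(\hat{R})$ is at least the profit of the best run produced in any individual phase. Combined with Lemmas \ref{lemma:phase1}, \ref{lemma:phase2}, and \ref{lemma:phase3}, which certify that the runs from phases 1, 2, and 3 achieve $\gamma\cdot$profit$/$profit$(R^*)$ at least $\rho_1$, $\rho_2$, and $\rho_3$ respectively, this immediately gives $\gamma\cdot$profit$(\hat{R})/$profit$(R^*) \geq \max\{\rho_1,\rho_2,\rho_3\}$. The remainder is a phase-level averaging argument that turns this pointwise maximum into the linear-program value, one level up from the way the Average Coverage Proposition turns a minimum coverage into a single good run.

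First I would introduce arbitrary weights $x,y,z \geq 0$ with $x+y+z \leq 1$ and observe that, because each $\rho_i$ is nonnegative, the maximum dominates every such weighted sum: $\max\{\rho_1,\rho_2,\rho_3\} \geq x\rho_1 + y\rho_2 + z\rho_3$. The slack permitted by $x+y+z \leq 1$ rather than $=1$ is harmless precisely because the $\rho_i$ are nonnegative, since $x\rho_1+y\rho_2+z\rho_3 \leq (x+y+z)\max_i\rho_i \leq \max_i\rho_i$. Next I would substitute the explicit lower bounds for $\rho_1,\rho_2,\rho_3$ from the three phase lemmas and collect, for each $\ell \in \{3,4,5,6,7\}$, the total coefficient multiplying $h_\ell$. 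By construction these coefficients are exactly the quantities $b_3,\dots,b_7$ defined just before the theorem, so $x\rho_1 + y\rho_2 + z\rho_3 \geq \sum_{\ell=3}^{7} b_\ell\, h_\ell$.

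Then I would use the fact that the $h_\ell$ form a distribution over window types, namely $h_\ell \geq 0$ and $\sum_{\ell=3}^{7} h_\ell = 1$. If $(x,y,z,\rho)$ is any point feasible for the linear program, so that $\rho \leq b_\ell$ for every $\ell$, then $\sum_{\ell} b_\ell h_\ell \geq \rho \sum_\ell h_\ell = \rho$. Chaining the inequalities yields $\gamma\cdot$profit$(\hat{R})/$profit$(R^*) \geq \rho$ for every feasible $(x,y,z,\rho)$; taking the maximum over the feasible region then gives exactly the claimed bound. The running-time claim follows because each phase invokes SPEEDUP a constant number of times, each call costing $O(\min\{r,m\}\Gamma(n))$.

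The main obstacle is conceptual rather than computational: one must see that the unknown distribution $(h_3,\dots,h_7)$ is eliminated from the final guarantee by the min-over-$\ell$ step, so that the resulting bound depends only on $s$ through the $b_\ell$. The delicate point is getting the direction of every inequality right in the chain $\max \geq$ weighted sum $\geq \sum_\ell b_\ell h_\ell \geq \rho$, and recognizing that the quantifier order is ``for every feasible $(x,y,z,\rho)$,'' which is what licenses passing to the maximum and hence to the LP optimum. Beyond verifying that the collected coefficients coincide with $b_3,\dots,b_7$, no further calculation is required.
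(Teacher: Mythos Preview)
Your proposal is correct and follows essentially the same approach as the paper: both start from $\max\{\rho_1,\rho_2,\rho_3\}$ via the three phase lemmas, pass to an arbitrary convex combination $x\rho_1+y\rho_2+z\rho_3$, identify the coefficient of each $h_\ell$ as $b_\ell$, and then use $\sum_\ell h_\ell = 1$ with $h_\ell \geq 0$ to replace $\sum_\ell b_\ell h_\ell$ by $\min_\ell b_\ell$. Your write-up is, if anything, a bit more explicit about the inequality chain and about why the relaxation $x+y+z\leq 1$ is harmless; the paper compresses this into one displayed $\max\min$ expression and the remark that ``no problem instance will be worse than a convex combination of all the bounds.''
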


\begin{proof}
By Lemmas \ref{lemma:phase1}, \ref{lemma:phase2}, and \ref{lemma:phase3}, {\it profit}$(\hat{R})/(${\it profit}$(R^*)\gamma ) \geq \max\{\rho_1$, $\rho_2$, $\rho_3\}$.
Then, for any convex combination of $\rho_1$, $\rho_2$, $\rho_3$, (i.e., $x,y,z \geq 0$ and $x + y + z = 1$), we have 
$${\mbox{\emph{profit}}(\hat{R}) \over \mbox{\emph{profit}}(R^*)\gamma }
\geq \mathop{\max_{x + y + z = 1}}_{x, y, z \geq 0}\left\{\min_{\ell \in \{3,4,5,6,7\}\mbox{ and }h_\ell=1}\{ \rho_1 x+ \rho_2 y + \rho_3 z\}\right\}$$

Thus, the expression for $b_\ell$ is given by setting $h_\ell = 1$ and $h_i = 0$, where $i \neq \ell$, and summing $\rho_1$, $\rho_2$, and $\rho_3$, weighted by $x$, $y$, and $z$, respectively.  In this way, we can account for a problem instance being dominated by any set $H_\ell$ for $\ell = 3,4,5,6,7$.  No problem instance will be worse than a convex combination of all the bounds.  

Algorithm SPEEDUPW12 runs SPEEDUP a total of 12 times in the first phase, 6 times in the second phase, and 4 times in the third phase, for a total of 22 times.  We have shown that the running time of SPEEDUP is $O(\min \{r,m\}\Gamma(n))$.
\end{proof}

We give a description of $f_1(s)$, $f_2(s)$, and $f_3(s)$ in Table~\ref{table:Wratios}.
Function $f_1(s)$ comes from our work in \cite{Frederickson5}.  Function $f_2(s)$ comes from Sect.~\ref{section:performance W_2}.  Function $f_3(s)$ comes from Sect.~\ref{section:performance W_3}.

\begin{table}[!hbt]
\begin{center}
\begin{tabular}{rll}
$f_1(s) \geq$ & 
$\left\{ \begin{array}{l}
(s+1)/6 \smallskip \\ 
s/4 \smallskip \\ 
\end{array} \right.$ &
$\begin{array}{crclc}
1& \leq &s& \leq &2 \smallskip \\ 
2& \leq &s& \leq &4 \smallskip \\
\end{array}$
\\ \\
$f_2(s) \geq$ &
$\left\{ \begin{array}{l}
(s+1)/8 \smallskip \\
(2s-1)/8 \smallskip \\
1/2 \smallskip \\
(s-1)/4
\end{array} \right.$ &
$\begin{array}{crclc}
1& \leq &s&  \leq &2 \smallskip \\
2& \leq &s&  \leq &{5 \over 2} \smallskip \\
{5 \over 2}& \leq &s&  \leq &3 \smallskip \\
3& \leq &s& \leq &5
\end{array}$
\\ \\
$f_3(s) \geq$ &
$\left\{ \begin{array}{l}
(s+1)/10 \smallskip \\
s/6 \smallskip \\
(s^2 - 4s + 7)/8 \smallskip \\
(1 + 3s - s^2)/(23- 7s) \smallskip \\
1/2 \smallskip \\
(s-2)/4
\end{array} \right.$ &
$\begin{array}{crclc}
1& \leq &s&  \leq &2 \smallskip \\
2& \leq &s&  \leq &{7 \over 3} \smallskip \\
{7 \over 3}& \leq &s&  \leq &{17 \over 7} \smallskip \\
{17 \over 7}& \leq &s&  \leq &3 \smallskip \\
3& \leq &s& \leq &4 \smallskip \\
4& \leq &s& \leq &6
\end{array}$
\end{tabular}\\
\caption{Lower bounds on fractions of optimal profit collected for the sets $W_1$, $W_2$, and $W_3$, ignoring the factor of $\gamma$.}
\label{table:Wratios}
\end{center}
\end{table}

We produced the results in Table~\ref{table:speedupw12rats} by solving the linear programs of Theorem~\ref{thm:LP} for particular values of $s$
within each range, inferring the pattern for each range, and then proving the inferred pattern. 
Note that all but one of the reciprocals of the resulting ratios in terms of $s$ are nonlinear functions!

\begin{theorem}
\label{theorem:1<s<6}
For speedup $s$ in the range $1\leq s \leq 6$
and window lengths between 1 and 2,
algorithm SPEEDUPW12 produces a service run $\hat{R}$ for the repairman problem
with approximation ratio profit$(R^*)/$profit$(\hat{R})$ upper-bounded as in Table~\ref{table:speedupw12rats}.
\end{theorem}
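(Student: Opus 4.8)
The plan is to derive the entire table from the linear program of Theorem~\ref{thm:LP}. Recall that the approximation ratio is $\text{profit}(R^*)/\text{profit}(\hat R)$, the reciprocal of the fraction of optimal profit collected. By Theorem~\ref{thm:LP}, for \emph{any} weights $x,y,z\ge 0$ with $x+y+z\le 1$ we have $\gamma\cdot\text{profit}(\hat R)/\text{profit}(R^*)\ge \min_{\ell\in\{3,\dots,7\}} b_\ell(x,y,z)$, so $\text{profit}(R^*)/\text{profit}(\hat R)\le \gamma/\min_\ell b_\ell(x,y,z)$. Hence it suffices, for each speedup range, to produce one choice of weights $(x,y,z)$ for which $\gamma/\min_\ell b_\ell$ equals the corresponding entry of Table~\ref{table:speedupw12rats}. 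Because only an \emph{upper} bound on the ratio is claimed, only primal feasibility of the program is required.

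First I would observe that the eight speedup ranges of Table~\ref{table:speedupw12rats}, i.e.\ the interior breakpoints $2,\tfrac73,\tfrac{17}{7},\tfrac52,3,4,5$, are exactly the union of the breakpoints of $f_1$, $f_2$, and $f_3$ in Table~\ref{table:Wratios}. Consequently, on the interior of each range all three of $f_1(s),f_2(s),f_3(s)$ are given by a single formula, so each $b_\ell(x,y,z)$ (defined just before Theorem~\ref{thm:LP}) becomes a single linear function of $(x,y,z)$ whose coefficients are explicit rational functions of $s$. For each range I would substitute these fixed formulas into $b_3,\dots,b_7$.

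Next, for each range I would solve the max--min program $\max\{\rho : \rho\le b_\ell,\ x+y+z\le1,\ x,y,z\ge0\}$ at one representative rational $s$, read off the optimal weights $(x^*,y^*,z^*)$ and the set of binding constraints, and promote that to a symbolic candidate $(x^*(s),y^*(s),z^*(s))$. Writing the target as $\rho^*(s)=1/g(s)$, where $\gamma\,g(s)$ is the claimed table entry, the verification has two parts. For the upper bound itself I would check \emph{primal feasibility}: that $b_\ell(x^*,y^*,z^*)\ge\rho^*(s)$ for all $\ell$ throughout the range, with equality on the binding constraints; this yields $\text{profit}(R^*)/\text{profit}(\hat R)\le\gamma/\rho^*(s)=\gamma\,g(s)$. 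To confirm that $\gamma\,g(s)$ is the \emph{best} bound the program provides (so the entry is the true optimum and not a loose estimate), I would also exhibit a \emph{dual certificate}: nonnegative multipliers $\mu_3,\dots,\mu_7$ with $\sum_\ell\mu_\ell=1$ such that each coefficient of $x$, $y$, and $z$ in $\sum_\ell\mu_\ell b_\ell$ is at most $\rho^*(s)$, which forces $\min_\ell b_\ell\le\rho^*(s)$ over the whole feasible region.

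The hard part is not any single computation but the bookkeeping across the eight ranges. The genuine obstacle is the combinatorial step of identifying the correct optimal basis in each range --- which three of the five constraints $\rho\le b_\ell$ bind, together with $x+y+z=1$, and the resulting $(x^*,y^*,z^*)$ --- and then proving that this basis stays optimal across the \emph{entire} range rather than merely near the sampled point, so that no hidden interior breakpoint is introduced and the piecewise bound really has the eight pieces claimed. Carrying the symbolic feasibility and dual checks through rational functions of $s$ is tedious, but continuity of the bound at the seven interior breakpoints (matching the formulas of adjacent rows) supplies a convenient consistency check that the basis changes have been located correctly.
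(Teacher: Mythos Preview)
Your proposal is correct and follows essentially the same route as the paper: for each of the eight speedup ranges, exhibit an explicit feasible triple $(x,y,z)$ with $x+y+z=1$ and verify that $\min_\ell b_\ell(x,y,z)$ equals the reciprocal of the claimed table entry, which by Theorem~\ref{thm:LP} yields the desired upper bound on the approximation ratio. The paper simply writes down the weights $(x,y,z)$ as explicit rational functions of $s$ and checks the $b_\ell$ directly; you describe the same verification, together with a procedure (sample, solve, promote to symbolic) for discovering those weights.

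Two minor remarks. First, your expectation that exactly three of the five constraints $\rho\le b_\ell$ bind is generically correct for a four-variable LP, but in this particular problem the optimum is degenerate on the first six ranges: the paper's weights make \emph{all five} $b_\ell$ equal there, while on the last two ranges ($4\le s\le 5$ and $5\le s\le 6$) one has $z^*=0$ and only $b_3=b_7$ attain the minimum with $b_4,b_5,b_6$ strictly larger. This does not affect your argument, but it means the ``identify the binding set'' step is simpler than you anticipate. Second, the dual certificate you propose is not needed for the theorem as stated (which asserts only an upper bound on the ratio), and the paper omits it; including it would show that Table~\ref{table:speedupw12rats} is the best bound extractable from Theorem~\ref{thm:LP}, which is a stronger statement than what is claimed.
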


\begin{proof}
For each possible speedup range, we show that $\gamma$ times the convex combinations of the functions given in Table \ref{table:Wratios} are never less than the reciprocals of the approximation ratios listed in Table \ref{table:speedupw12rats}.\\

\noindent When $1 \leq s \leq 2$, choose ${\displaystyle x = {50 \over 73}\mbox{, }y = {6 \over 73}\mbox{, and }z =  {17 \over 73}}$. Then, $b_3 = b_4 = \ldots = b_7 = {\displaystyle \frac{26s + 26}{219}}$.\\\bigskip

\noindent When $2 \leq s \leq 7/3$, choose ${\displaystyle x = {6s^2 + 3s \over 7s^2 + 6s + 3}\mbox{, }y = {-3s^2 + 9s \over 7s^2 + 6s + 3}\mbox{, and }z =  {4s^s - 6s + 3 \over 7s^2 + 6s + 3}}$.\medskip

\noindent Then, $b_3 = \ldots = b_7 = {\displaystyle \frac{5s^3 + 6s^2}{28s^2 + 24s + 12} }$.\\\bigskip

\noindent When $7/3 \leq s \leq 17/7$, choose ${\displaystyle x = {4s^2 + 2s \over -s^3 + 10s^2 - 3s + 2}\mbox{, }y = {3s^3 - 18s^2 + 27s \over -s^3 + 10s^2 - 3s + 2}}$, \medskip 

\noindent and ${\displaystyle z =  {4s^3 - 24s^2 + 32s - 2 \over -s^3 + 10s^2 - 3s + 2}}$.  Then, $b_3 = \ldots = b_7 = {\displaystyle \frac{s^4 - 2s^3 + 11s^2}{-4s^3 + 40s^2 - 12s + 8}}$.\\\bigskip

\noindent When $17/7 \leq s \leq 5/2$, choose ${\displaystyle x = {14s^3 - 39s^2 - 23s \over 17s^3 - 43s^2 - 35s - 23}\mbox{, }y = {-9s^3 + 54s^2 - 81s \over 17s^3 - 43s^2 - 35s - 23}}$,\medskip

\noindent and ${\displaystyle z = {12s^3 - 58s^2 + 69s - 23 \over 17s^3 - 43s^2 - 35s - 23}}$. Then, $b_3 = \ldots = b_7 = {\displaystyle \frac{11s^4 - 21s^3 - 50s^2}{68s^3 - 172s^2 - 140s - 92}}$.\\\bigskip

\noindent When $5/2 \leq s \leq 3$, choose ${\displaystyle x = {28s^3 - 120s^2 + 92s \over 73s^3 -409s^2 + 668s - 368}\mbox{, }y = {33s^3 - 189 s^2 + 264s \over 73s^3 -409s^2 + 668s - 368}}$, \medskip

\noindent and ${\displaystyle z = {12s^3 - 100s^2 + 312s - 368 \over 73s^3 -409s^2 + 668s - 368}}$.  Then, $b_3 = \ldots = b_7 = {\displaystyle \frac{39s^4 - 183s^3 + 180s^2}{292s^3 - 1636s^2 + 2672s - 1472}}$.\\\bigskip

\noindent When $3 \leq s \leq 4$, choose ${\displaystyle x = {2s^2 + 2 \over 3s^2 + 2s + 4}\mbox{, }y = {-3s^2 + 12s \over 3s^2 + 2s + 4}\mbox{, and }z =  {4s^2 - 12s + 4 \over 3s^2 + 2s + 4}}$.\medskip

\noindent Then, $b_3 = \ldots = b_7 = {\displaystyle \frac{s^3 + 6s^2}{12s^2 + 8s + 16} }$.\\\bigskip

\noindent When $4 \leq s \leq 5$, choose ${\displaystyle x = {2s - 2 \over -s + 16}\mbox{, }y = {-3s + 18 \over -s + 16}\mbox{, and }z =  0}$.\medskip

\noindent Then, $b_3 = b_7 = {\displaystyle {s + 4 \over -s + 16}}$, and $b_4 = b_6 = {\displaystyle \frac{s^2 - 6s + 45}{-4s + 64}} > {\displaystyle {s + 4 \over -s + 16}}$, whenever $s < 16$.\medskip

\noindent This follows since whenever $s < 16$, ${\displaystyle \frac{s^2 - 6s + 45}{-4s + 64} > {s + 4 \over -s + 16}}$ holds if and only if $(s - 5)^2 + 4 > 0$, which is always true.\medskip

\noindent Finally, bound $b_5 = {\displaystyle {s^2 - 8s + 37 \over -2s + 32} > {s + 4 \over -s + 16}}$ whenever $s < 16$. \medskip

\noindent This follows since whenever $s < 16$, ${\displaystyle {s^2 - 8s + 37 \over -2s + 32} > {s + 4 \over -s + 16}}$ holds if and only if $(s - 5)^2 + 20 > 0$, which is always true.\\\bigskip

\noindent When $5 \leq s \leq 6$, choose ${\displaystyle x = {8 \over -3s + 26}\mbox{, }y = {-3s + 18 \over -3s + 26}\mbox{, and }z =  0}$.\medskip

\noindent Then, $b_3 = b_7 = {\displaystyle {s - 14 \over 3s - 26}}$, and $b_4 = b_6 = {\displaystyle {2s - 20 \over 3s - 26}} \geq {\displaystyle {s - 14 \over 3s - 26}}$ whenever $s \leq 6$.\medskip

\noindent Finally, $b_5 = 1$ which is at least ${\displaystyle {s - 14 \over 3s - 26}}$  whenever $s \leq 6$.
\end{proof}

\section{Conclusion}

This paper has demonstrated the surprising versatility of the technique of trimming. Even with time windows whose lengths are not all the same, it is possible to simplify the structure of many time-constrained route-planning problems and apply an ordering that allows dynamic programming to work well.  For unrooted problems, the cost of this additional order is at most a constant reduction in the profit a run can earn.  We have extended results from our previous paper \cite{Frederickson5} so that we can characterize the way in which this reduction in profit can be offset, in part or in whole, by speedup over a hypothetical optimal benchmark when the lengths of time windows are not all uniform.  The key idea needed for this extension is to consider a diverse set of trials with a number of different period lengths for trimming and then choose the best result among all those found.  This approach makes trimming adapt to various distributions of window lengths.

We have given techniques that achieve an approximation ratio parameterized only by speedup when the ratio between the longest time window and the shortest time window is no greater than 2, but these techniques can be extended to other ranges of time window lengths.  For the general case, in which the ratio between the longest and the shortest time windows is $D$, the approximation ratio will worsen by a factor of $\log_2 D$, using an approach similar to the one we used in \cite{Frederickson6} for general length time windows without speedup.

It is worth mentioning that we have achieved approximation bounds for a few specific ranges of $s$ which are slightly better than the ones listed in Table \ref{table:speedupw12rats}.  While trying to accommodate these ranges into a coherent scheme, our analysis became so much more complex that we chose to give a more complete and readable presentation of results which are nearly as strong as the best we found.  The fact that better values are possible shows that there is potential in these techniques.

\bibliographystyle{amsplain}
\bibliography{bibliography}

\appendix

\section{Coverage of Windows in Set $W_2$ when $2 \leq s \leq 3$}
\label{appendix:2<s<3 W_2}

Analysis of set $W_2$ when $2 \leq s \leq 3$ is done by considering service runs $A$, $A^R$, $A_{r - 2k}$, and $A_{r - 2k}^R$, noting that $\lambda = 2$.  The combined coverages of runs $A$, $A^R$, $A_{r - 2k}$, $A_{r - 2k}^R$, and all of their respective shifted versions are given in Table \ref{table:lowcontributions 2<s<5/2 W_2} when $k \leq r - 2k$ and in Table \ref{table:highcontributions 2<s<5/2 W_2} when $k \geq r - 2k$.  
The combined coverage of the pair $A$ and $A^R$ is exactly the same for all values of $k$ and are only listed in Table \ref{table:lowcontributions 2<s<5/2 W_2}.  These and all other tables in the Appendices are generated by using CREATE-TABLE-$\lambda$ for each different run type, using the appropriate value of $\lambda$ (2 for $W_2$ or 3 for $W_3$), and specific values for $\Delta$ determined by the number of hops each run has been moved.

\begin{table}[!hbt]
\begin{center}
\begin{tabular}{rcll}
\begin{tabular}{r}
Combined contributions\\
for $A$ and $A^R$ 
\end{tabular}
& = & 
$\left\{ \begin{array}{l}
r \smallskip \\ 
r + {1 \over 2}k - {1 \over 2}i \smallskip \\ 
{1 \over 2}r + k
\end{array} \right.$ &
$\begin{array}{rcl}
0 \leq &i& \leq k \smallskip \\ 
k \leq &i& \leq r - k \smallskip \\
r - k \leq &i& \leq \left\lfloor {3r \over 2} \right\rfloor 
\end{array}$
\\ \\
\begin{tabular}{r}
Combined contributions\\
for $A_{r - 2k}$ and $A_{r - 2k}^R$
\end{tabular}
& = &
$\left\{ \begin{array}{l}
2k + i \smallskip \\
{3 \over 2}k + {3\over 2}i \smallskip \\
r + {1 \over 2}i - {1 \over 2}k \smallskip \\
{3 \over 2}r - k
\end{array} \right.$ &
$\begin{array}{rcl}
0 \leq &i&  \leq k \smallskip \\
k \leq &i&  \leq r - 2k \smallskip \\
r - 2k \leq &i&  \leq r - k \smallskip \\
r - k \leq &i& \leq \left\lfloor {3r \over 2} \right\rfloor
\end{array}$
\end{tabular}\\
\caption{Contributions of runs for windows in $W_2$ when $2 \leq s \leq 5/2$ and $k \leq r - 2k$.}
\label{table:lowcontributions 2<s<5/2 W_2}
\end{center}
\end{table}

\begin{table}[!hbt]
\begin{center}
\begin{tabular}{rcll}
\begin{tabular}{r}
Combined contributions\\
for $A_{r - 2k}$ and $A_{r - 2k}^R$ 
\end{tabular}
& = & 
$\left\{ \begin{array}{l}
2k + i \smallskip \\
r \smallskip \\
r + {1 \over 2}i - {1 \over 2}k \smallskip \\
{3 \over 2}r - k
\end{array} \right.$ &
$\begin{array}{rcl}
0 \leq &i&  \leq r - 2k \smallskip \\
r - 2k \leq &i&  \leq k \smallskip \\
k \leq &i&  \leq r - k \smallskip \\
r - k \leq &i& \leq \left\lfloor {3r \over 2} \right\rfloor
\end{array}$
\end{tabular}\\
\caption{Contributions of runs for windows in $W_2$ when $2 \leq s \leq 5/2$ and $k \geq r - 2k$.}
\label{table:highcontributions 2<s<5/2 W_2}
\end{center}
\end{table}

\begin{lemma}\label{lemma:lowestinterval 2<s<5/2 W_2}
If the contributions from $A$ and $A^R$ are weighted by a factor of 3 and the contributions from $A_{r - 2k}$ and $A_{r - 2k}^R$ are weighted by a factor of 1, the yield for all intervals is at least $3r + 2k$.
\end{lemma}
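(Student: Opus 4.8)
The plan is to mirror the case analysis used for Lemma~\ref{lemma:lowestinterval W_2}, verifying the weighted yield interval by interval against the target $3r + 2k$. Because the range $2 \leq s \leq 5/2$ forces $0 \leq k \leq r - k$ (so $2k \leq r$ throughout), the only structural branching comes from whether $k \leq r - 2k$ or $k \geq r - 2k$, i.e. whether $3k \leq r$ or $3k \geq r$. These two subcases correspond exactly to Tables~\ref{table:lowcontributions 2<s<5/2 W_2} and~\ref{table:highcontributions 2<s<5/2 W_2}, with the $A,A^R$ contributions common to both. First I would record the weighted yield as $3\cdot(\text{contribution of }A\text{ and }A^R) + 1\cdot(\text{contribution of }A_{r-2k}\text{ and }A_{r-2k}^R)$, and then evaluate this sum on the refined partition of $[0,\lfloor 3r/2\rfloor]$ obtained by merging the breakpoints of the two piecewise functions.

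For the case $k \leq r - 2k$, the merged breakpoints are $0, k, r-2k, r-k, \lfloor 3r/2\rfloor$. I expect the computation to give $3r + 2k + i$ on $[0,k]$, the constant $3r + 3k$ on $[k, r-2k]$, the decreasing linear form $4r + k - i$ on $[r-2k, r-k]$, and exactly $3r + 2k$ on $[r-k, \lfloor 3r/2\rfloor]$. Each of these is at least $3r+2k$: the first two because $i \geq 0$ and $k \geq 0$, the third because its minimum over the interval occurs at $i = r-k$ where it equals $3r+2k$, and the last because it is identically the target. For the case $k \geq r - 2k$, the ordering $r-2k \leq k \leq r-k$ gives merged breakpoints $0, r-2k, k, r-k, \lfloor 3r/2\rfloor$; the $A_{r-2k}$ contribution is now $r$ on the middle band, yielding $3r+2k+i$, then $4r$, then again $4r + k - i$, and finally exactly $3r + 2k$. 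Here the new subtlety is the constant $4r$ on $[r-2k, k]$, which meets the bound because $4r \geq 3r + 2k$ reduces to $r \geq 2k$, precisely the hypothesis $k \leq r - k$ of this speedup range.

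The main obstacle I anticipate is bookkeeping rather than any genuine difficulty: I must take care that the $A,A^R$ piece and the $A_{r-2k},A_{r-2k}^R$ piece are each evaluated on the correct subinterval after merging, since their native breakpoints differ and the ordering of $r-2k$, $k$, and $r-k$ flips between the two cases. Because the bound is attained with equality on the final interval $[r-k, \lfloor 3r/2\rfloor]$ in both cases, there is no slack to absorb an arithmetic slip, so I would double-check the endpoint evaluations of each linear segment (in particular that $4r + k - i$ attains its minimum exactly at $i = r-k$) to be sure the inequality is never violated. Once each of these at most five intervals is checked in each of the two cases, the lemma follows.
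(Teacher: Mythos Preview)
Your proposal is correct and matches the paper's proof essentially step for step: the same split into the subcases $k \leq r-2k$ and $k \geq r-2k$, the same merged breakpoints, and the same four piecewise yields ($3r+2k+i$, then $3r+3k$ or $4r$, then $4r+k-i$, then the constant $3r+2k$), with the key inequality $4r \geq 3r+2k$ reduced to $r \geq 2k$ exactly as the paper does.
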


\begin{proof}
We first consider the case when $k \leq r - 2k$, consulting Table \ref{table:lowcontributions 2<s<5/2 W_2}.

If $0 \leq i \leq k$, then the yield for $w_i$ is $3r + 2k + i$, which is at least $3r + 2k$, since $i \geq 0$.

If $k \leq i \leq r - 2k$, then the yield for $w_i$ is $3r + 3k$, which is greater than $3r + 2k$.

If $r - 2k \leq i \leq r - k$, then the yield for $w_i$ is $4r + k - i$, which is at least $3r + 2k$, since $i \leq r - k$.

If $r - k \leq i \leq \lfloor 3r/2 \rfloor$, then the yield for $w_i$ is $3r + 2k$.

We now consider the case when $k \geq r - 2k$, consulting Tables \ref{table:lowcontributions 2<s<5/2 W_2} and \ref{table:highcontributions 2<s<5/2 W_2}.  The algebra for the cases when $1 \leq i \leq r - 2k$, $k \leq i \leq r - k$, and $r - k \leq i \leq \lfloor 3r/2 \rfloor$ gives exactly the same results as the first, third, and fourth ranges from the previous part of the proof.  If $r - 2k \leq i \leq k$,  then the yield for $w_i$ is $4r \geq 3r + 2k$, since $r \geq 2k$ when $2 \leq s \leq 5/2$.
\end{proof}

\section{Coverage of Windows in Set $W_3$ when $1 \leq s \leq 2$}
\label{appendix:1<s<2 W_3}

Analysis of set $W_3$ when $1 \leq s \leq 2$ is done by considering service runs $A$, $A^R$, $A_{r - k}$, $A_{r - k}^R$, $A_{2r - k}$, $A_{2r - k}^R$, $A_{3r - k}$, and $A_{3r - k}^R$, noting that $\lambda = 3$.  

The combined coverages of runs $A$, $A^R$, $A_{r - k}$, $A_{r - k}^R$, $A_{2r - k}$, $A_{2r - k}^R$, $A_{3r - k}$, $A_{3r - k}^R$, and all of their respective shifted versions are given in Table \ref{table:lowcontributions W_3} when $k \leq r - k$ and in Table \ref{table:highcontributions W_3} when $k \geq r - k$.  The combined coverages of the pair $A$ and $A^R$ and the pair $A_{r - k}$ and $A_{r - k}^R$ are exactly the same for all values of $k$ and are only listed in Table \ref{table:lowcontributions W_3}.

\begin{table}[!hbt]
\begin{center}
\begin{tabular}{rcll}
\begin{tabular}{r}
Combined contributions\\
for $A$ and $A^R$ 
\end{tabular}
& = & 
$\left\{ \begin{array}{l}
r - {1 \over 2}i \smallskip \\ 
r + {1 \over 2}k - i \smallskip \\ 
{1 \over 2}r + {1 \over 2}k  - {1 \over 2}i \smallskip \\
0
\end{array} \right.$ &
$\begin{array}{rcl}
0 \leq &i& \leq k \smallskip \\ 
k \leq &i& \leq r \smallskip \\
r \leq &i& \leq r + k \smallskip \\
r + k \leq &i& \leq 2r
\end{array}$
\\ \\
\begin{tabular}{r}
Combined contributions\\
for $A_{r - k}$ and $A_{r - k}^R$ 
\end{tabular}
& = &
$\left\{ \begin{array}{l}
k + i \smallskip \\
{3 \over 2}r - {1\over 2}k - {1 \over 2}i \smallskip \\
2r - {1\over 2}k - i\smallskip \\
r - {1\over 2}i
\end{array} \right.$ &
$\begin{array}{rcl}
0 \leq &i& \leq r - k \smallskip \\
r - k \leq &i&  \leq r \smallskip \\
r  \leq &i& \leq 2r - k \smallskip \\
2r - k \leq &i& \leq 2r
\end{array}$
\\ \\
\begin{tabular}{r}
Combined contributions\\
for $A_{2r - k}$ and $A_{2r - k}^R$
\end{tabular}
& = & 
$\left\{ \begin{array}{l}
0 \smallskip \\ 
i - r + k \smallskip \\ 
{3 \over 2}i - {3 \over 2}r + k \smallskip \\
2i - 2r + {1 \over 2}k \smallskip \\
{1\over 2}i + r - k
\end{array} \right.$ &
$\begin{array}{rcl}
0 \leq &i& \leq r - k \smallskip \\ 
r - k \leq &i& \leq r \smallskip \\
r \leq &i& \leq r + k\smallskip \\
r + k \leq &i& \leq 2r - k \smallskip \\
2r - k \leq &i& \leq 2r
\end{array}$
\\ \\
\begin{tabular}{r}
Combined contributions\\
for $A_{3r - k}$ and $A_{3r - k}^R$
\end{tabular}
& = & 
$\left\{ \begin{array}{l}
{1 \over 2}i \smallskip \\ 
i - {1 \over 2}k \smallskip \\ 
{1 \over 2}i + {1 \over 2}r - {1 \over 2}k \smallskip \\
2r + k - i \smallskip \\
2k
\end{array} \right.$ &
$\begin{array}{rcl}
0 \leq &i& \leq k \smallskip \\ 
k \leq &i& \leq r \smallskip \\
r \leq &i& \leq r + k\smallskip \\
r + k \leq &i& \leq 2r - k \smallskip \\
2r - k \leq &i& \leq 2r
\end{array}$
\end{tabular}\\
\caption{Contributions of runs for windows in $W_3$ when $1 \leq s \leq 2$ and $k \leq r - k$.}
\label{table:lowcontributions W_3}
\end{center}
\end{table}

\begin{table}[!hbt]
\begin{center}
\begin{tabular}{rcll}
\begin{tabular}{r}
Combined contributions\\
for $A_{2r - k}$ and $A_{2r - k}^R$
\end{tabular}
& = & 
$\left\{ \begin{array}{l}
0 \smallskip \\ 
i - r + k \smallskip \\ 
{3 \over 2}i - {3 \over 2}r + k \smallskip \\
{3 \over 2}r - {1 \over 2}k \smallskip \\
{1\over 2}i + r - k
\end{array} \right.$ &
$\begin{array}{rcl}
0 \leq &i& \leq r - k \smallskip \\ 
r - k \leq &i& \leq r \smallskip \\
r \leq &i& \leq 2r - k\smallskip \\
2r - k \leq &i& \leq r + k \smallskip \\
r + k \leq &i& \leq 2r
\end{array}$
\\ \\
\begin{tabular}{r}
Combined contributions\\
for $A_{3r - k}$ and $A_{3r - k}^R$
\end{tabular}
& = & 
$\left\{ \begin{array}{l}
{1 \over 2}i \smallskip \\ 
i - {1 \over 2}k \smallskip \\ 
{1 \over 2}i + {1 \over 2}r - {1 \over 2}k \smallskip \\
{3 \over 2}i - {3\over 2}r + {1\over 2}k \smallskip \\
2k
\end{array} \right.$ &
$\begin{array}{rcl}
0 \leq &i& \leq k \smallskip \\ 
k \leq &i& \leq r \smallskip \\
r \leq &i& \leq 2r - k \smallskip \\
2r - k \leq &i& \leq r + k \smallskip \\
r + k \leq &i& \leq 2r
\end{array}$
\end{tabular}\\
\caption{Contributions of runs for windows in $W_3$ when $1 \leq s \leq 2$ and $k \geq r - k$.}
\label{table:highcontributions W_3}
\end{center}
\end{table}

\begin{lemma}\label{lemma:lowestinterval W_3}
If the contributions from $A$ and $A^R$ are weighted by a factor of 2 and the contributions from $A_{r - k}$, $A_{r - k}^R$, $A_{2r - k}$, $A_{2r - k}^R$, $A_{3r - k}$, and $A_{3r - k}^R$ are weighted by a factor of 1, the yield for all intervals is at least $2r + k$.
\end{lemma}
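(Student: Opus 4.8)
The plan is to mimic exactly the structure of the proof of Lemma~\ref{lemma:lowestinterval W_2}, now accounting for three shifted pairs of runs rather than two. For each subset $w_i$ the yield is the weighted sum $2\,c_A(i) + c_{r-k}(i) + c_{2r-k}(i) + c_{3r-k}(i)$, where $c_A$, $c_{r-k}$, $c_{2r-k}$, and $c_{3r-k}$ denote the combined contributions of the pairs $(A,A^R)$, $(A_{r-k},A_{r-k}^R)$, $(A_{2r-k},A_{2r-k}^R)$, and $(A_{3r-k},A_{3r-k}^R)$, respectively, as tabulated in Tables~\ref{table:lowcontributions W_3} and~\ref{table:highcontributions W_3}. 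Each $c(\cdot)$ is piecewise linear in $i$ and the whole expression is symmetric about $i = 2r$, so I would restrict attention to $0 \le i \le 2r$ and split on whether $k \le r-k$ or $k \ge r-k$, exactly as the breakpoints of the two tables demand.

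First I would handle $k \le r-k$, reading all four pairs from Table~\ref{table:lowcontributions W_3}. Here $2k \le r$ forces the ordering $0 \le k \le r-k \le r \le r+k \le 2r-k \le 2r$, so the common refinement of the four breakpoint sets yields the six sub-ranges $[0,k]$, $[k,r-k]$, $[r-k,r]$, $[r,r+k]$, $[r+k,2r-k]$, and $[2r-k,2r]$. In each sub-range every pair is evaluated on a single one of its linear pieces, so the total yield is affine in $i$; I would sum the four pieces (with the factor of $2$ on the $(A,A^R)$ term) and check the bound $2r+k$. The expected pattern is that the sum is either the constant $2r+k$ (as in the last two sub-ranges, where the $(A,A^R)$ contribution has already fallen to $0$) or an affine function whose minimum over the sub-range is attained at an endpoint and equals $2r+k$ --- for instance $2r+k+\tfrac{1}{2} i$ on $[0,k]$, minimized at $i=0$, and $\tfrac{5}{2} r + k - \tfrac{1}{2} i$ on $[r-k,r]$, minimized at $i=r$.

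Next I would treat $k \ge r-k$, in which case the pairs $(A,A^R)$ and $(A_{r-k},A_{r-k}^R)$ are still read from Table~\ref{table:lowcontributions W_3} (their contributions are independent of the case), while $(A_{2r-k},A_{2r-k}^R)$ and $(A_{3r-k},A_{3r-k}^R)$ are read from Table~\ref{table:highcontributions W_3}. Now $2k \ge r$ reverses the order of the ambiguous breakpoints, giving $0 \le r-k \le k \le r \le 2r-k \le r+k \le 2r$ and the correspondingly reindexed six sub-ranges. I would repeat the same summation and endpoint check; several sub-ranges should again collapse to the constant $2r+k$, and the only genuinely new verification is the middle range $[r-k,k]$, where the sum reduces to the constant $\tfrac{5}{2}r + \tfrac{1}{2}k$, which is at least $2r+k$ precisely because $r \ge k$.

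The conceptual content is light; the entire difficulty is organizational. The one subtlety I would be careful about is that the relative order of $r+k$ and $2r-k$ flips exactly at $k = r-k$, which is what motivates the two-case split and the two different tables --- getting these breakpoint orderings right is what guarantees that within each sub-range every contribution function is read off its correct linear piece before the four are summed. The main risk is therefore arithmetic bookkeeping across the twelve sub-ranges, not any missing idea; once the affine form in each sub-range is computed, the bound $2r+k$ follows immediately from the sign of the coefficient of $i$ together with the value at the governing endpoint.
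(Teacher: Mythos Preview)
Your proposal is correct and follows essentially the same approach as the paper: split on $k \le r-k$ versus $k \ge r-k$, read the appropriate pieces from Tables~\ref{table:lowcontributions W_3} and~\ref{table:highcontributions W_3}, and check the affine yield on each sub-range. The paper's write-up is simply terser: it observes that on $0 \le i \le r$ the weighted sums reproduce exactly the three cases of Lemma~\ref{lemma:lowestinterval W_2} (the extra pair $(A_{3r-k},A_{3r-k}^R)$ contributes precisely what is needed to make this happen), and that on the whole of $r \le i \le 2r$ the yield is the constant $2r+k$, so it does not subdivide that region further.
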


\begin{proof}
We first consider the case when $k \leq r - k$, consulting Table \ref{table:lowcontributions W_3}.  The algebra for the cases when $0 \leq i \leq r$ gives the same results the first, second, and third cases in Lemma \ref{lemma:lowestinterval W_2}, at least $2r + k$  in each case.  If $r \leq i \leq 2r$, then the yield for $w_i$ is $2r + k$.

We now consider the case when $k \geq r - k$, consulting Tables  \ref{table:lowcontributions W_3} and \ref{table:highcontributions W_3}.  The algebra for the cases when $0 \leq i \leq r$ gives the same results as the proof of Lemma \ref{lemma:lowestinterval W_2} for $k \geq r - k$, at least $2r + k$ in each case.  If $r \leq i \leq 2r$, then the yield for $w_i$ is again $2r + k$.
\end{proof}

\section{Coverage of Windows in Set $W_3$ when $2 \leq s \leq 7/3$}
\label{appendix:2<s<7/3 W_3}

Analysis of set $W_3$ when $2 \leq s \leq 7/3$ is done by considering service runs $A$, $A^R$, $C_{(3r - k)/2}$, and $C_{(3r - k)/2}^R$, noting that $\lambda = 3$. The combined coverages for these runs are listed in Table \ref{table:2<s<7/3 W_3} assuming that $r + k$ is even.  When $r + k$ is not even, we can achieve an identical speed by multiplying both by 2.\\

\begin{table}[!hbt]
\begin{center}
\begin{tabular}{rcll}
\begin{tabular}{r}
Combined contributions\\
for $A$ and $A^R$
\end{tabular}
& = & 
$\left\{ \begin{array}{l}
r \smallskip \\ 
r + {1 \over 2}k - {1 \over 2}i \smallskip \\ 
k
\end{array} \right.$ &
$\begin{array}{rcl}
0 \leq &i& \leq k \smallskip \\ 
k \leq &i& \leq 2r - k \smallskip \\
2r - k \leq &i& \leq 2r
\end{array}$
\\ \\
\begin{tabular}{r}
Combined contributions\\
for $C_{(3r - k)/2}$ and $C_{(3r - k)/2}^R$
\end{tabular}
& = &
$\left\{ \begin{array}{l}
{1 \over 2}r + {1 \over 2}k + {1 \over 2}i \smallskip \\
{3 \over 4}r - {1 \over 4}k \smallskip \\
{5 \over 8}r - {1 \over 8}k + {1 \over 4}i \smallskip \\
{1 \over 4}r + {1 \over 4}k + {1 \over 2}i \smallskip \\
r 
\end{array} \right.$ &
$\begin{array}{rcl}
0 \leq &i&  \leq {1 \over 2}(r - 3k) \smallskip \\
{1 \over 2}(r - 3k) \leq &i&  \leq {1 \over 2}(r - k) \smallskip \\
{1 \over 2}(r - k) \leq &i&  \leq {1 \over 2}(3r - 3k) \smallskip \\
{1 \over 2}(3r - 3k) \leq &i&  \leq {1 \over 2}(3r - k) \smallskip \\
{1 \over 2}(3r - k) \leq &i& \leq 2r
\end{array}$
\end{tabular}\\
\caption{Contributions of runs for windows in $W_3$ when $2 \leq s \leq 7/3$.}
\label{table:2<s<7/3 W_3}
\end{center}
\end{table}

\begin{lemma}\label{lemma:lowestinterval 2<s<7/3 W_3}
If the contributions from $A$ and $A^R$ are weighted by a factor of 1 and the contributions from $C_{(3r - k)/2}$ and $C_{(3r - k)/2}^R$ are weighted by a factor of 2, the yield for all intervals is at least $2r + k$.
\end{lemma}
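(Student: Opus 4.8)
The plan is to treat the claim as a statement about a single continuous piecewise-linear function and reduce it to checking finitely many breakpoints, exactly in the interval-by-interval style of Lemmas~\ref{lemma:lowestinterval W_2} and~\ref{lemma:lowestinterval 2<s<5/2 W_2}. First I would form the weighted yield $Y(i) = f(i) + 2g(i)$, where $f(i)$ is the ``combined contributions for $A$ and $A^R$'' and $g(i)$ is the ``combined contributions for $C_{(3r-k)/2}$ and $C_{(3r-k)/2}^R$'' from Table~\ref{table:2<s<7/3 W_3}. Since the contributions are symmetrical around $w_{2r}$, it suffices to verify $Y(i) \geq 2r+k$ on $0 \leq i \leq 2r$. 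Because $Y$ is continuous and piecewise linear, its minimum over any linear piece is attained at an endpoint of that piece; hence it is enough to evaluate $Y$ at the breakpoints of $f$ and $g$ and show each value is at least $2r+k$.

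The second step is to sort the breakpoints using the hypothesis $0 \leq k \leq r/3$ valid in this range. The $A$-contribution has breakpoints $k$ and $2r-k$, while the $C$-contribution has breakpoints $\tfrac{r-3k}{2}$, $\tfrac{r-k}{2}$, $\tfrac{3r-3k}{2}$, $\tfrac{3r-k}{2}$. I would first record that the $C$-breakpoints are increasing with gaps $k$, $r-k$, $k$, that $\tfrac{r-3k}{2}\geq 0$ (from $k\leq r/3$), and that $\tfrac{3r-k}{2} < 2r-k$ (equivalent to $k<r$). The only ordering that is not forced is between $k$ and $\tfrac{r-3k}{2}$: comparing $2k$ with $r-3k$ shows $k \lessgtr \tfrac{r-3k}{2}$ according to whether $k \lessgtr r/5$. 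I would therefore split into the cases $k \leq r/5$ and $k \geq r/5$, which fixes a complete ordering of all eight points in each case (note $k \leq \tfrac{r-k}{2}$ always holds since $3k\leq r$).

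With the ordering fixed, the third step is the routine interval-by-interval verification in each case, substituting the correct linear pieces of $f$ and $g$ on each subinterval and checking the endpoint values, mirroring the bookkeeping of the earlier lemmas. The key anchor points are $Y(0) = r + 2\cdot\tfrac{r+k}{2} = 2r+k$ and, on the final stretch $\tfrac{3r-k}{2}\leq i\leq 2r$ where $g=r$, the value $Y(2r-k)=Y(2r)=2r+k$; I would show the interior endpoint values only exceed $2r+k$ (using $i\geq 0$, $i\leq 2r-k$, and $r\geq 3k$ as the relevant inequalities, as in the sign arguments of Lemma~\ref{lemma:lowestinterval W_2}). Concluding $Y(i)\geq 2r+k$ on every piece finishes the proof.

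I expect the main obstacle to be purely organizational rather than conceptual: correctly interleaving the two independent breakpoint families and managing the $k$ versus $r/5$ case split, so that at each $C$-breakpoint I invoke the right piece of $f$ (all four $C$-breakpoints lie strictly below $2r-k$, so $f$ is never in its final flat piece there, and only $\tfrac{r-3k}{2}$ can fall into $f$'s first piece). Once the ordering is pinned down, each inequality is a one-line linear check, so the difficulty is confined to the casework and the continuity checks of $f$ within the $C$-pieces.
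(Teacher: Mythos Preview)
Your proposal is correct and follows essentially the same approach as the paper's own proof: the paper also splits into the two cases $5k \leq r$ and $5k \geq r$ (i.e., your $k \lessgtr r/5$ split), then verifies the weighted yield on each resulting subinterval of $[0,2r]$ using Table~\ref{table:2<s<7/3 W_3}, with the same anchoring values $Y(0)=2r+k$ and $Y(i)=2r+k$ on $[2r-k,2r]$, and the inequality $r\geq 3k$ invoked on the interior pieces. The only cosmetic difference is that you phrase the reduction via ``piecewise linear, so check breakpoints,'' whereas the paper writes out the linear expression on each subinterval directly.
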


\begin{proof}
Consulting Table \ref{table:2<s<7/3 W_3}, we first consider the case when $5k \leq r$, which implies $k \leq (r - 3k)/2$.

If $0 \leq i \leq k$, then the yield for $w_i$ is $2r + 4k + i$, which is at least $2r + k$, since $i \geq 0$.

If $k \leq i \leq (r - 3k)/2$, then the yield for $w_i$ is $2r + 3k/2 + i/2$, which is greater than $2r + k$.

If $(r - 3k)/2 \leq i \leq (r - k)/2$, then the yield for $w_i$ is $5r/2 - i/2$, which is at least $2r + k$, since $i \leq (r - k)/2$ and $r \geq 3k$.

If $(r - k)/2 \leq i \leq (3r - 3k)/2$, then the yield for $w_i$ is $9r/4 + k/4$, which is at least $2r + k$, since $r \geq 3k$.

If $(3r - 3k)/2 \leq i \leq (3r - k)/2$, then the yield for $w_i$ is $3r/2 + k + i/2$, which is at least $2r + k$, since  $i \geq (3r - 3k)/2$ and $r \geq 3k$.

If $(3r - k)/2 \leq i \leq 2r - k$, then the yield for $w_i$ is $3r + k/2 - i/2$, which is at least $2r + k$, since $i \leq 2r - k$.

If $2r - k \leq i \leq 2r$, then the yield for $w_i$ is $2r + k$.

Next, we consider the case when $5k \geq r$, again consulting Table \ref{table:2<s<7/3 W_3}.

The first case gives the same result as above but for the range $0 \leq i \leq (r - 3k)/2$.  

If $(r - 3k)/2 \leq i \leq k$, then the yield for $w_i$ is $5r/2 - k/2$, which is at least $2r + k$, since $r \geq 3k$.

The third case gives the same result as above but for the range $k \leq i \leq (r - k)/2$.  The fourth, fifth, and sixth cases above give identical results when $5k \geq r$.
\end{proof}

\section{Coverage of Windows in Set $W_3$ when $7/3 \leq s \leq 17/7$}
\label{appendix:7/3<s<17/7 W_3}

Analysis of set $W_3$ when $7/3 \leq s \leq 17/7$ is done by considering service runs $A$, $A^R$, $C_{2r - 2k}$, and $C_{2r - 2k}^R$, noting that $\lambda = 3$. The combined coverages for runs $A$ and $A^R$ are listed in Table \ref{table:2<s<7/3 W_3}, and the combined coverages for runs $C_{2r - 2k}$ and $C_{2r - 2k}^R$ are listed in Table \ref{table:7/3<s<17/7 W_3}, assuming that $r + k$ is even.\\

\begin{table}[!hbt]
\begin{center}
\begin{tabular}{rcll}
\begin{tabular}{r}
Combined contributions\\
for $C_{2r - 2k}$ and $C_{2r - 2k}^R$
\end{tabular}
& = &
$\left\{ \begin{array}{l}
{3 \over 4}r - {1 \over 4}k \smallskip \\
{1 \over 2}r + {1 \over 4}k + {1 \over 4}i \smallskip \\
{1 \over 4}r + {1 \over 4}k + {1 \over 2}i \smallskip \\
{3 \over 4}r + {3 \over 4}k \smallskip \\
{1 \over 4}r + k + {1 \over 4}i \smallskip \\
{1 \over 2}r + {3 \over 2}k 
\end{array} \right.$ &
$\begin{array}{rcl}
0 \leq &i&  \leq r - 2k \smallskip \\
r - 2k \leq &i&  \leq r \smallskip \\
r \leq &i&  \leq r + k \smallskip \\
r + k \leq &i&  \leq 2r - k \smallskip \\
2r - k \leq &i&  \leq r + 2k \smallskip \\
r + 2k \leq &i& \leq 2r
\end{array}$
\end{tabular}\\
\caption{Contributions of runs for windows in $W_3$ when $7/3 \leq s \leq 17/7$.}
\label{table:7/3<s<17/7 W_3}
\end{center}
\end{table}

\begin{lemma}\label{lemma:lowestinterval 7/3<s<17/7 W_3}
If the contributions from $A$ and $A^R$ are weighted by a factor of $k$ and the contributions from $C_{2r - 2k}$ and $C_{2r - 2k}^R$ are weighted by a factor of $r - k$, the yield for all intervals is at least $3r^2/4 + k^2/4$.
\end{lemma}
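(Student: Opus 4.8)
The plan is to form the weighted yield $Y(i) = k\,a(i) + (r-k)\,c(i)$, where $a(i)$ denotes the combined contribution of $A$ and $A^R$ from Table~\ref{table:2<s<7/3 W_3} and $c(i)$ the combined contribution of $C_{2r-2k}$ and $C_{2r-2k}^R$ from Table~\ref{table:7/3<s<17/7 W_3}, and to show $Y(i) \geq 3r^2/4 + k^2/4$ for every $w_i$ with $0 \leq i \leq 2r$, the range $i > 2r$ being handled by symmetry about $w_{2r}$. As in Lemma~\ref{lemma:lowestinterval W_2}, the key observation is that $Y$ is piecewise linear, so it suffices to verify the bound at the breakpoints of the two piecewise descriptions: on each resulting subinterval $Y$ is linear and hence attains its minimum at an endpoint.

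First I would pin down the common ordering of the breakpoints. Using $s = (2r+k)/r$ with $r/3 \leq k \leq 3r/7$ throughout the range $7/3 \leq s \leq 17/7$, the constraints $k \geq r/3$ and $k \leq 3r/7 < r/2$ force
\[ 0 \leq r-2k \leq k \leq r \leq r+k \leq 2r-k \leq r+2k \leq 2r, \]
so the union of the two descriptions refines into a single partition into seven subintervals. In contrast to the $\lambda = 2$ analyses, no secondary split (of the kind $k$ versus $r-k$ in Lemma~\ref{lemma:lowestinterval W_2}, or $5k$ versus $r$ in Lemma~\ref{lemma:lowestinterval 2<s<7/3 W_3}) is needed, since $k \geq r/3$ already forces $5k > r$ and $k < r-k$ uniformly on the range.

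Then I would evaluate $Y$ on each of the seven subintervals, computing the coefficient of $i$ in each linear piece and reading off the minimizing endpoint. The checks collapse to a few points. On $[0,r-2k]$ the yield is the constant $kr + (r-k)(3r-k)/4 = (3r^2+k^2)/4$, meeting the target exactly. At $i=r$ one has $a(r) = (r+k)/2$, $c(r) = (3r+k)/4$, giving $Y(r) = (3r^2+k^2)/4$ again; at $i = 2r-k$ one has $a = k$, $c = 3(r+k)/4$, giving $Y = k^2 + 3(r^2-k^2)/4 = (3r^2+k^2)/4$. Thus $Y$ touches the bound at several interior points, which confirms that the lemma is tight and matches the $(s^2-4s+7)/8$ fraction in the theorem. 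The only check that is not an equality is the right endpoint $i = 2r$, where $Y(2r) = (r^2 + 2rk - k^2)/2$; here $Y(2r) \geq (3r^2+k^2)/4$ is equivalent to $(r-k)(r-3k) \leq 0$, which holds because $r-k > 0$ and $r-3k \leq 0$ in this range.

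The main obstacle is bookkeeping rather than insight: one must confirm that within each subinterval both $a$ and $c$ are each given by a single linear piece (guaranteed by partitioning along the union of all breakpoints) and that the slopes carry the sign claimed, so that verifying the bound at the two endpoints genuinely certifies it across the whole subinterval. Because the minimum is attained with equality at multiple points, there is no slack to absorb an arithmetic slip, so I would carefully cross-check each slope, namely $(r-k)/4$, $(r-3k)/4$, $(r-2k)/2$, and $-k/2$ on the successive pieces, against the signs of $r-3k$ and $r-2k$ implied by $r/3 \leq k \leq 3r/7$, to be sure the claimed endpoint is indeed the minimizer on each piece.
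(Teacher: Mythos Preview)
Your proposal is correct and follows essentially the same approach as the paper: both refine $[0,2r]$ into the seven subintervals $0 \leq r-2k \leq k \leq r \leq r+k \leq 2r-k \leq r+2k \leq 2r$, compute the weighted yield $Y(i)$ on each piece, and verify the bound using the sign conditions on $r-3k$ and $r-2k$ implied by $r/3 \leq k \leq 3r/7$. The paper writes out each piece explicitly (e.g., $Y(i) = r^2/2 + 3rk/4 + k^2/4 + (r-3k)i/4$ on $[k,r]$), whereas you emphasize piecewise linearity and check only the breakpoints, but the computations and the tight cases---including the final factorization $Y(2r) - (3r^2+k^2)/4 = -(r-k)(r-3k)/4$---coincide.
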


\begin{proof}
Consult Tables \ref{table:2<s<7/3 W_3} and \ref{table:7/3<s<17/7 W_3}.

If $0 \leq i \leq r - 2k$, then the yield for $w_i$ is $3r^2/4 + k^2/4$.

If $r - 2k \leq i \leq k$, then the yield for $w_i$ is $r^2/2 + 3rk/4 + (r - k)i/4$, which is greater than $3r^2/4 + k^2/4$, since $i \geq r - 2k$.

If $k \leq i \leq r$, then the yield for $w_i$ is $r^2/2 + 3rk/4 + k^2/4 + (r - 3k)i/4$, which is at least $3r^2/4 + k^2/4$, since $i \leq r$ and $r \leq 3k$.

If $r \leq i \leq r + k$, then the yield for $w_i$ is $r^2/4 + rk + k^2/4 + (r - 2k)i/2$, which is at least $3r^2/4 + k^2/4$, since $i \geq r$.

If $r + k \leq i \leq 2r - k$, then the yield for $w_i$ is $3r^2/4 + rk - k^2/4 - ki/2$, which is at least $3r^2/4 + k^2/4$, since  $i \leq 2r - k$.

If $2r - k \leq i \leq r + 2k$, then the yield for $w_i$ is $r^2/4 + 3rk/4 + (r - k)i/4$, which is at least $3r^2/4 + k^2/4$, since $i \geq 2r - k$.

If $r + 2k \leq i \leq 2r$, then the yield for $w_i$ is $r^2/2 + rk - k^2/2$, which is at least $3r^2/4 + k^2/4$, since $(r^2/2 + rk - k^2/2) - (3r^2/4 + k^2/4) = ((r - k)/2)((3k - r)/2) \geq 0$ when $k \leq r \leq 3k$.
\end{proof}

\section{Coverage of Windows in Set $W_3$ when $17/7 \leq s \leq 3$}
\label{appendix:17/7<s<3 W_3}

Analysis of set $W_3$ when $17/7 \leq s \leq 3$ is done by considering service runs $A$, $A^R$, $B_{r - k + 1}$, and $B_{r - k + 1}^R$, noting that $\lambda = 3$.  The combined coverages for runs $A$ and $A^R$ are listed in Table \ref{table:2<s<7/3 W_3}, and the combined coverages for runs $B_{r - k + 1}$ and $B_{r - k + 1}^R$ are listed in Table \ref{table:17/7<s<5/2 W_3} when $17/7 \leq s \leq 5/2$ and in Table \ref{table:5/2<s<3 W_3} when $5/2 \leq s \leq 3$, assuming in both cases that $r + k$ is even.\\

\begin{table}[!hbt]
\begin{center}
\begin{tabular}{rcll}
\begin{tabular}{r}
Combined contributions\\
for $B_{r - k + 1}$ and $B_{r - k + 1}^R$
\end{tabular} 
& = &
$\left\{ \begin{array}{l}
{2 \over 3}k + {2 \over 3}i \smallskip \\
-{1 \over 3}r + k + i\smallskip \\
-{1 \over 2}r + {5 \over 6}k + {4 \over 3}i \smallskip \\
{1 \over 3}k + i \smallskip \\
{2 \over 3}r + {2 \over 3}i \smallskip \\
{5 \over 3}r + {1 \over 3}k 
\end{array} \right.$ &
$\begin{array}{rcl}
0 \leq &i&  \leq r - k \smallskip \\
r - k \leq &i&  \leq {1 \over 2}(r + k) \smallskip \\
{1 \over 2}(r + k) \leq &i&  \leq {1 \over 2}(3r - 3k) \smallskip \\
{1 \over 2}(3r - 3k) \leq &i&  \leq 2r - k \smallskip \\
2r - k \leq &i&  \leq {1 \over 2}(3r + k) \smallskip \\
{1 \over 2}(3r + k) \leq &i& \leq 2r
\end{array}$
\end{tabular}\\
\caption{Contributions of runs for windows in $W_3$ when $17/7 \leq s \leq 5/2$.}
\label{table:17/7<s<5/2 W_3}
\end{center}
\end{table}

\begin{table}[!hbt]
\begin{center}
\begin{tabular}{rcll}
\begin{tabular}{r}
Combined contributions\\
for $B_{r - k + 1}$ and $B_{r - k + 1}^R$
\end{tabular}
& = &
$\left\{ \begin{array}{l}
{2 \over 3}k + {2 \over 3}i \smallskip \\
-{1 \over 3}r + k + i\smallskip \\
{1 \over 6}r + {1 \over 2}k + {2 \over 3}i \smallskip \\
{1 \over 3}k + i \smallskip \\
{2 \over 3}r + {2 \over 3}i \smallskip \\
{5 \over 3}r + {1 \over 3}k 
\end{array} \right.$ &
$\begin{array}{rcl}
0 \leq &i&  \leq r - k \smallskip \\
r - k \leq &i&  \leq {1 \over 2}(3r - 3k) \smallskip \\
{1 \over 2}(3r - 3k) \leq &i&  \leq {1 \over 2}(r + k) \smallskip \\
{1 \over 2}(r + k) \leq &i&  \leq 2r - k \smallskip \\
2r - k \leq &i&  \leq {1 \over 2}(3r + k) \smallskip \\
{1 \over 2}(3r + k) \leq &i& \leq 2r
\end{array}$
\end{tabular}\\
\caption{Contributions of runs for windows in $W_3$ when $5/2 \leq s \leq 3$.}
\label{table:5/2<s<3 W_3}
\end{center}
\end{table}

\begin{lemma}\label{lemma:lowestinterval 17/7<s<3 W_3}
If the contributions from $A$ and $A^R$ are weighted by a factor of $6r - 4k$ and the contributions from $B_{r - k + 1}$ and $B_{r - k + 1}^R$ are weighted by a factor of $3r - 3k$, the yield for all intervals is at least $6r^2 - 2rk - 2k^2$.
\end{lemma}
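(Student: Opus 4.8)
The plan is to follow the template established by Lemmas \ref{lemma:lowestinterval 2<s<5/2 W_2} and \ref{lemma:lowestinterval 2<s<7/3 W_3}: read off the combined-contribution tables, form the weighted sum on each subinterval of $i$, and verify the claimed bound piece by piece. Because the $B_{r-k+1}$ contribution table has two forms — Table \ref{table:17/7<s<5/2 W_3} for $17/7 \leq s \leq 5/2$ and Table \ref{table:5/2<s<3 W_3} for $5/2 \leq s \leq 3$ — I would split the proof into these two cases at the outset. The split is forced by the point $s = 5/2$ (i.e.\ $k = r/2$), at which the interior breakpoints $\tfrac12(r+k)$ and $\tfrac12(3r-3k)$ of the $B$-table exchange order: for $s \leq 5/2$ we have $2k \leq r \leq 7k/3$ (so $r+k \leq 3r-3k$), while for $s \geq 5/2$ we have $k \leq r \leq 2k$ (so the reverse ordering holds).

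Within each case I would first merge the breakpoints of the two tables into a single partition of $0 \leq i \leq 2r$. The $A$/$A^R$ contributions of Table \ref{table:2<s<7/3 W_3} break at $i = k$ and $i = 2r-k$, while the $B$ contributions break at $r-k$, $\tfrac12(r+k)$, $\tfrac12(3r-3k)$, $2r-k$, and $\tfrac12(3r+k)$; the constraint on $k$ relative to $r$ in each case fixes their relative order unambiguously. On each resulting subinterval both contributions are linear in $i$, so the weighted yield $(6r-4k)\cdot(\text{$A$ contribution}) + (3r-3k)\cdot(\text{$B$ contribution})$ is itself linear in $i$ and attains its minimum at an endpoint; it therefore suffices to certify the bound $6r^2 - 2rk - 2k^2$ at the endpoints of each piece.

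Each piece then reduces to routine algebra of the same flavor as the earlier lemmas. After expanding, the target value $6r^2 - 2rk - 2k^2$ emerges as the constant term, and the residual is a multiple of $i$ (or of a boundary offset such as $i-(r-k)$ or $2r-k-i$) whose sign is constant on the interval. For instance, on the first merged subinterval $0 \leq i \leq k$ of the case $17/7 \leq s \leq 5/2$, where the $A$ contribution is $r$ and the $B$ contribution is $\tfrac23 k + \tfrac23 i$, the yield simplifies to $6r^2 - 2rk - 2k^2 + 2(r-k)i$, which is at least $6r^2 - 2rk - 2k^2$ because $r-k \geq 0$ and $i \geq 0$. Every remaining piece is dispatched the same way, invoking the relevant interval endpoint together with $2k \leq r \leq 7k/3$ (first case) or $k \leq r \leq 2k$ (second case).

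I expect the main obstacle to be organizational rather than conceptual: correctly interleaving the five $B$-breakpoints with the two $A$-breakpoints under each constraint regime, so that the correct linear piece of each table is selected on every merged subinterval, and then keeping the two cases — which coincide on the outer intervals but diverge on the middle ones — cleanly separated. Once the partition is pinned down, each individual inequality is a one-line sign check on a linear function of $i$, so the bulk of the work is verifying the partition rather than the arithmetic on each piece.
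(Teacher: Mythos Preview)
Your proposal is correct and follows essentially the same approach as the paper's proof: split at $s=5/2$, merge the breakpoints of the $A$ and $B$ contribution tables, and verify the weighted yield is at least $6r^2-2rk-2k^2$ on each linear piece using the constraints $2k\leq r\leq 7k/3$ or $k\leq r\leq 2k$. Your sample computation on $0\leq i\leq k$ matches the paper exactly; the only caveat is that a few of the interior pieces require invoking two of the range constraints simultaneously (e.g.\ both $r\geq 2k$ and $k\geq 3r/7$), so the ``one-line sign check'' is occasionally a two-term check, but nothing more.
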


\begin{proof}
In the case that $17/7 \leq s \leq 5/2$, consult Tables \ref{table:2<s<7/3 W_3} and \ref{table:17/7<s<5/2 W_3}.

If $0 \leq i \leq k$, then the yield for $w_i$ is $6r^2 - 2rk - 2k^2 + 2(r - k)i$, which is at least $6r^2 - 2rk - 2k^2$, since $r \geq k$.

If $k \leq i \leq r - k$, then the yield for $w_i$ is $6r^2 + rk - 4k^2 - ir = 6r^2 -2rk -2k^2 +(r-k-i)r +(r-2k)k +(3k-r)r$, which is greater than $6r^2 - 2rk - 2k^2$, since $i \leq r - k$, $r \geq 2k$, and $k > r/3$.

If $r - k \leq i \leq (r + k)/2$, then the yield for $w_i$ is $5r^2 + 3rk - 5k^2 - ik = 6r^2 -2rk -2k^2 +((r+k)/2 -i)k +(r-2k)7k/4 +(7k-3r)r/3 +5rk/12$, which is greater than $6r^2 - 2rk - 2k^2$, since $i \leq (r + k)/2$, $r \geq 2k$, and $k \geq 3r/7$.

If $(r + k)/2 \leq i \leq (3r - 3k)/2$, then the yield for $w_i$ is $9r^2/2 + 3rk - 9k^2/2 + (r - 2k)i = 6r^2 -2rk -2k^2 +(i-(r+k)/2)(r-2k) +(r-2k)7k/4 +(7k-3r)r/3 +5rk/12$, which is greater than $6r^2 - 2rk - 2k^2$, since $i \geq (r + k)/2$, $r \geq 2k$, and $k\geq 3r/7$.

If $(3r - 3k)/2 \leq i \leq 2r - k$, then the yield for $w_i$ is $6r^2 - 3k^2 - ik$, which is at least $6r^2 - 2rk - 2k^2$, since  $i \leq 2r - k$.

If $2r - k \leq i \leq (3r + k)/2$, then the yield for $w_i$ is $2r^2 + 4rk - 4k^2 + 2(r - k)i$, which is at least $6r^2 - 2rk - 2k^2$, since $i \geq 2r - k$.

If $(3r + k)/2 \leq i \leq 2r$, then the yield for $w_i$ is $5r^2 + 2rk - 5k^2 = 6r^2 -2rk -2k^2 +(r-2k)3k/2 +(7k-3r)r/3 + rk/6$, which is at least $6r^2 - 2rk - 2k^2$, since $r \geq 2k$ and $k \geq 3r/7$.

In the case that $5/2 \leq s \leq 3$, consult Tables \ref{table:2<s<7/3 W_3} and \ref{table:5/2<s<3 W_3}.  For this range $(3r - 3k)/2 \leq (r + k)/2$.

If $r - k \leq i \leq (3r - 3k)/2$, then the yield for $w_i$ is $5r^2 + 3rk - 5k^2 - ik = 6r^2 -2rk -2k^2 +((3r-3k)/2 -i)k +(r-k)3k/2 +(2k-r)r$, which is at least $6r^2 - 2rk - 2k^2$, since $i \leq (3r - 3k)/2$ and $k \leq r \leq 2k$.

If $(3r - 3k)/2 \leq i \leq (r + k)/2$, then the yield for $w_i$ is $13r^2/2 - 7k^2/2 - ir$, which is at least $6r^2 - 2rk - 2k^2$, since $i \leq (r + k)/2$ and $r \geq k$.

If $(r + k)/2 \leq i \leq 2r - k$, then the yield for $w_i$ is $6r^2 - 3k^2 - ik$, which is at least $6r^2 - 2rk - 2k^2$, since $i \leq 2r - k$.

All other ranges are identical to some yield when $17/7 \leq s \leq 5/2$.
\end{proof}

\end{document}